\documentclass[12pt,draftclsnofoot,onecolumn]{IEEEtran}

\usepackage[T1]{fontenc}
\usepackage[utf8]{inputenc}
\usepackage{amsmath,amssymb,amsfonts,amsthm,mathtools,bm,bbm}
\usepackage{array}
\usepackage{graphicx}
\usepackage{placeins}
\usepackage{cite}
\usepackage{microtype}
\usepackage{url}
\usepackage[colorlinks=true,linkcolor=blue,citecolor=blue,urlcolor=blue,
pdftitle={Covert Block-Activity Information Transmission over Thermal-Loss Bosonic Channels: Latency--Payload Limits and Finite-Key Achievability},
pdfauthor={Qipeng Qian and Yuntao Qian}]{hyperref}
\usepackage{orcidlink}

\newtheorem{assumption}{Assumption}
\newtheorem{lemma}{Lemma}
\newtheorem{proposition}{Proposition}
\newtheorem{theorem}{Theorem}
\newtheorem{corollary}{Corollary}
\newtheorem{remark}{Remark}

\newcommand{\E}{\mathbb E}
\newcommand{\Prb}{\mathbb P}
\newcommand{\Tr}{\operatorname{Tr}}
\newcommand{\supp}{\operatorname{supp}}
\newcommand{\CN}{\mathcal{CN}}
\newcommand{\NN}{\mathcal N}
\newcommand{\cC}{\mathcal C}
\newcommand{\cM}{\mathcal M}

\newcommand{\cG}{\mathcal G}

\newcommand{\wt}{\operatorname{wt}}
\newcommand{\db}{d_{\rm b}}

\newcommand{\1}{\mathbbm 1}
\newcommand{\norm}[1]{\left\lVert#1\right\rVert}
\newcommand{\ket}[1]{\lvert#1\rangle}
\newcommand{\bra}[1]{\langle#1\rvert}
\newcommand{\Disp}{\mathsf D}

\begin{document}

\title{Covert Block-Activity Information Transmission over Thermal-Loss Bosonic Channels: Latency--Payload Limits and Finite-Key Achievability}

\author{Qipeng Qian\,\orcidlink{0009-0003-2407-1292}, Yuntao Qian\,\orcidlink{0000-0002-7418-5891}%
\thanks{This work has been submitted to the IEEE for possible publication.
Copyright may be transferred without notice, after which this version may no longer be accessible.\\
Supported by the National Natural Science Foundation of China under Grant 62571475
(corresponding author: Qipeng Qian).}
\thanks{Qipeng Qian is with Supcon Technology, Hangzhou, China and the College of Artificial Intelligence, Zhejiang University, Hangzhou 310027, China (email: qianqipeng@supcon.com).}%
\thanks{Yuntao Qian is with the College of Artificial Intelligence, Zhejiang University, Hangzhou 310027, China (email: ytqian@zju.edu.cn).}%
}

\markboth{IEEE Transactions on Information Theory, Submitted for Review}%
{Qian and Qian: Covert Block-Activity Information Transmission}

\maketitle

\begin{abstract}
We study covert block-activity information transmission over a thermal-loss bosonic channel, where messages are encoded in weight-constrained activity patterns that Bob must recover through block-local decisions by prescribed deadlines. Shared circular Gaussian displacement modulation makes Willie's averaged lost-light state exactly thermal, with a strictly convex relative-entropy cost in signal energy, whereas a fixed Gaussian receiver at Bob yields a Gaussian mean-shift divergence linear in energy. This asymmetry produces an exact finite-block energy--information frontier and a detector-independent latency converse, matched in order by a block-reset cumulative-sum detector. For block covertness budget $\delta_b$ and error target $\epsilon_b$, the required active length scales as $\delta_b^{-1}\log^2(1/\epsilon_b)$ up to an explicit channel--receiver factor. Lifting this local law to communication yields the matching transmission limit $\log M=\Theta(\sqrt n/\log n)$ for the symmetric coordinatewise architecture under a fixed total covertness budget, maximal-message covertness, vanishing maximal error, and local deadlines. A relaxed full-horizon reference with the same modulation family and fixed measurement supports $\Theta(\sqrt n)$, showing that covertness and the selected measurement alone do not impose the extra logarithmic factor. Finally, public quadrature phase-shift keying codebooks selected by $O(\sqrt n)$ secret bits remove ideal continuous shared randomness without changing the payload order.
\end{abstract}

\begin{IEEEkeywords}
Bosonic channels, covert communication, block-activity coding, sequential detection,
transmission limits.
\end{IEEEkeywords}

\section{Introduction}

Covert communication asks how much information Alice can send to Bob while keeping Willie from reliably deciding whether communication took place. For thermal-loss bosonic channels, several models obey a square-root law: over $n$ channel uses, the covert payload is of order $\sqrt n$ \cite{BashNatComm2015,Bullock2019,Gagatsos_2020,Anderson2025AchievabilityCovertQuantum}. This paper studies a different operational question: how much covert information can be carried by the \emph{temporal block-activity pattern itself} when that pattern must be recovered under local deadlines? We divide the horizon into blocks, map each message to a weight-constrained binary activity vector, and require Bob to output the active-or-idle decision for each block by the corresponding deadline. The activity vector is therefore the codeword, rather than a synchronization variable that is estimated only to enable a separate message decoder.

Encoding information in temporal position or support is not new. Pulse-position modulation uses pulse location as a symbol, and covert pulse-position constructions exploit sparse positions to achieve information-theoretic covert limits in classical channels \cite{BlochGuha2017PPM,KadampotTahmasbiBloch2020PPM}. Recent work has also studied sparse signaling over thermal-loss bosonic channels, where rare non-innocent uses provide a practical route to square-root-law operation \cite{TanAndersonBullockBash2026}. The contribution here is not position modulation or sparsity by itself. Our operational constraint is different: every coordinate of an $N$-dimensional activity vector is a separately deadline-constrained decision, all active coordinates share one total quantum-relative-entropy budget, and the entire vector must satisfy maximal-message reliability. This local-decision requirement couples covertness allocation, detection reliability, and combinatorial payload, and it is this coupling that produces the transmission limits studied below.

To see the mechanism, suppose a message activates at most $K$ of $N$ blocks under a fixed total Willie budget $\delta$. Increasing $K$ divides the covertness budget among more active coordinates. At the same time, recovering the complete pattern forces the false-alarm and missed-detection probabilities of the individual block decisions to decrease with the number of coordinates. Thus the number of admissible message patterns cannot be increased independently of the resources required to resolve each coordinate reliably. The central technical step is to quantify this local-to-global coupling rather than to analyze a detector in isolation.

The one-block analysis is driven by a useful asymmetry between Willie and Bob. Shared circular Gaussian modulation makes Willie's averaged lost-light state thermal, with an exact relative-entropy cost $g(E)$ that is strictly convex in the signal energy $E$. Once Bob fixes a Gaussian measurement and conditions on the shared symbol, his observation is a Gaussian mean shift with relative entropy $\kappa E$. Willie's cost is therefore locally quadratic in weak signal energy, whereas Bob's accumulated information is linear. We exploit this structure to obtain an exact finite-block energy--information frontier and then convert that frontier into a detector-independent latency limit.

For one block with covertness budget $\delta_b$ and target false-alarm and miss probability $\epsilon_b\downarrow0$, the minimum active length satisfies
\begin{equation}
L^*(\epsilon_b,\delta_b)
=\Theta\!\left(
\frac{c_W}{\kappa^2\delta_b}
\log^2\frac1{\epsilon_b}
\right).
\label{eq:intro-local-limit}
\end{equation}
The converse applies to every detector whose decision is available by the block deadline, and a block-reset cumulative-sum rule attains the same order. Lifting this local limit to communication gives the main end-to-end result. Under the symmetric coordinatewise architecture, with $N=\Theta(K)$, covertness-budget splitting and family-wise reliability force $L=\Theta(K\log^2K)$ and hence $n=\Theta(K^2\log^2K)$, while the number of admissible activity patterns satisfies $\log M=\Theta(K)$. Consequently,
\begin{equation}
\log M=\Theta\!\left(\frac{\sqrt n}{\log n}\right).
\label{eq:intro-transmission-limit}
\end{equation}
A matching architecture-level converse proves that this order is not an artifact of the cumulative-sum construction or of a particular parameter choice. It is the transmission limit of the symmetric coordinatewise block-activity architecture formalized in Assumption~\ref{ass:symmetric-architecture}.

For context, we also analyze a deliberately relaxed full-horizon problem in which Bob waits for all $n$ observations and performs joint classical decoding after the same fixed Gaussian measurement. Within the same multiplicative circular-Gaussian modulation family, this reference supports $\Theta(\sqrt n)$ payload. It is not an unrestricted covert-capacity result and is not a componentwise ablation of the deadline constraint alone. Its role is narrower but useful: it shows that covertness and the selected fixed measurement, by themselves, do not force the additional $\log n$ factor. The comparison therefore isolates the logarithmic loss to the additional architectural requirements imposed by blockwise activity recovery, without claiming that local deadlines are the only possible source of that loss.

The local recovery component is related to quickest change detection. Page's cumulative-sum test and the minimax criteria of Lorden and Pollak describe false-alarm and delay tradeoffs for sequential decisions \cite{Page1954,lorden1971procedures,pollak1985optimal}, with later work treating finite windows and computational variants \cite{VeeravalliBanerjee2013,WangXie2023SequentialReview,PoorBook2009,Siegmund1985,Tartakovsky2014,Xie2023WindowLimitedCUSUM}. Burst synchronization and frame detection also include uncertain start times but generally do not impose a quantum covertness constraint \cite{HosseiniPerrins2014,Nasir2016,Zebarjadi2019,ProakisSalehi2008}. Timing uncertainty in covert communication has mostly been studied from Willie's side \cite{BashIgnorance2014,Huang2020LPD,Ramtin2025CovertQCDContinuous}, while covert quickest detection, quantum change-point detection, and covert target detection address different tasks and performance criteria \cite{LoBloch2026,FanizzaPRL2023,Tham2024QuantumLimitsCovertTarget}. Here the detector is at Bob, the block-activity vector identifies the message, and Willie is constrained through quantum relative entropy. The novelty is therefore neither cumulative-sum detection nor positional signaling alone, but the conversion of a block-local detection limit into an end-to-end covert information-transmission limit.

The paper makes four theorem-level contributions. First, Theorem~\ref{thm:exact-uniform} gives the exact finite-block Willie--Bob energy--information frontier and its unique uniform-energy optimizer; Theorem~\ref{thm:block-converse} and Corollary~\ref{cor:latency-theta} turn that frontier into the matching one-block latency limit, with Theorem~\ref{thm:block-achievability} providing a constructive block-reset detector. Second, Theorems~\ref{thm:finite-code}--\ref{thm:architecture-converse} lift the local law to a message code with worst-onset maximal-message covertness and vanishing maximal decoding error, yielding the matching transmission law in \eqref{eq:intro-transmission-limit} within the stated architecture. Third, Proposition~\ref{prop:global-reference} establishes the $\Theta(\sqrt n)$ relaxed full-horizon reference under the same modulation family and fixed measurement, clarifying what the block-activity law does and does not attribute to the local-recovery architecture. Fourth, Theorem~\ref{thm:finite-key} replaces ideal continuous shared randomness by public quadrature phase-shift keying codebooks selected through a finite secret key of total length $O(\sqrt n)$, while preserving the $\Theta(\sqrt n/\log n)$ payload order.

We state each result at its proven scope. Willie observes the lost-light weak-complementary output, Bob's main results begin after a fixed Gaussian measurement, and the matching end-to-end converse is architecture-specific rather than an unrestricted covert-capacity converse. In particular, the analysis concerns weight-constrained block activity; the constructive regime $N=\Theta(K)$ does not require a vanishing active fraction. Section~\ref{sec:conclusion} summarizes these limitations and the corresponding open problems.

The rest of the paper is organized as follows. Section~\ref{sec:model} defines the channel, receiver, activity-pattern code, and performance criteria. Sections~\ref{sec:exact-geometry} and~\ref{sec:cusum} establish the one-block frontier, converse, and achievability. Section~\ref{sec:coding} derives the end-to-end block-activity transmission limits and the full-horizon reference. Section~\ref{sec:finite-key} develops the finite-key realization, Section~\ref{sec:numerics} reports numerical checks, and Section~\ref{sec:conclusion} concludes the paper with scope and open questions. Longer proofs are collected in the appendices.

\section{Operational Model}
\label{sec:model}

We now fix the channel, receiver, code, and error criteria used throughout the paper. The three subsections describe the physical channel and shared randomness, Bob's measured channel, and the block-activity code.

\subsection{Thermal-loss channel and secret common randomness}

We begin with the optical channel and the shared modulation used in the exact finite-block analysis. Alice communicates over a single-mode thermal-loss channel for $n$ uses. Its power transmissivity is $\eta\in(0,1)$, and the environment is thermal with mean photon number $\bar n_B\ge0$ \cite{WeedbrookRMP2012,Gagatsos_2020}. Willie collects the unused output port of the beam splitter, also called the weak-complementary output of the thermal attenuator. For a mixed thermal environment, this output is not the full Stinespring complementary channel because it does not include a purification of the environment \cite{CarusoGiovannettiHolevo2006,Rosati2018}. All logarithms are natural, payloads are measured in nats, and each asymptotic statement specifies its limit.

We use the following conventions throughout. For density operators $\rho$ and
$\sigma$, the quantum relative entropy (QRE) is
\begin{equation}
D(\rho\|\sigma):=\Tr\!\left[\rho(\log\rho-\log\sigma)\right],
\label{eq:qre-definition}
\end{equation}
with value $+\infty$ when $\supp(\rho)\nsubseteq\supp(\sigma)$; the same
symbol $D(P\|Q)$ denotes classical Kullback--Leibler (KL) divergence. The
single-mode thermal state with mean photon number $N\ge0$ is
\begin{equation}
\tau_N:=\sum_{k=0}^{\infty}\frac{N^k}{(N+1)^{k+1}}\ket{k}\bra{k},
\label{eq:thermal-state-definition}
\end{equation}
where $\ket{k}$ is the $k$-photon Fock state. The annihilation and creation
operators are $a$ and $a^\dagger$, and the displacement operator is
\begin{equation}
\Disp(\alpha):=\exp(\alpha a^\dagger-\alpha^*a).
\label{eq:displacement-definition}
\end{equation}
We write $I_d$ for the $d\times d$ identity, $\1$ for the identity on the Fock space, $\Tr$ for trace, and
$\|A\|_2=(\Tr A^\dagger A)^{1/2}$ for the Hilbert--Schmidt norm. Alice's innocent input is $\tau_{\bar n_{\rm in}}$ with mean photon number $\bar n_{\rm in}\ge0$.

Alice and Bob share a sequence $S^n=(S_1,\ldots,S_n)$ of independent circularly symmetric complex Gaussian (CSCG) symbols
\begin{equation}
S_t\sim\CN(0,1),\qquad \E S_t=0,\quad \E|S_t|^2=1,
\label{eq:key-law}
\end{equation}
which is unavailable to Willie. At an active use with design energy $E_t\ge0$, Alice sends
\begin{equation}
\rho_{A,t|S_t=s}
 =\Disp(\sqrt{E_t}s)\tau_{\bar n_{\rm in}}\Disp^\dagger(\sqrt{E_t}s).
\label{eq:active-input}
\end{equation}
At an idle use she sends $\tau_{\bar n_{\rm in}}$. Thus, $E_t$ is the average incremental photon number, because $\E|\sqrt{E_t}S_t|^2=E_t$.

\begin{remark}[Common-randomness assistance]
The sequence in \eqref{eq:key-law} is an explicit model resource for the one-block and block-activity results in Sections~\ref{sec:exact-geometry}--\ref{sec:coding}, as well as the relaxed reference in Section~\ref{sec:global-benchmark}. These results are assisted by shared randomness and do not assume a finite key. Section~\ref{sec:finite-key} gives a separate QPSK construction with a finite sublinear key. Unless we condition on $S^n$, Bob's error probabilities average over both the shared sequence and the measurement noise. Willie's density operators are averaged over the shared sequence before their quantum relative entropy is evaluated.
\end{remark}

Let
\begin{equation}
\bar n_W=(1-\eta)\bar n_{\rm in}+\eta\bar n_B,
\qquad a_W=1-\eta.
\label{eq:willie-baseline}
\end{equation}
We assume $\bar n_W>0$ throughout; this excludes the singular vacuum-background case in which the relative entropy to the innocent state can be infinite. Conditioned on $S_t=s$, Willie's state is
\begin{equation}
\rho_{W,t|S_t=s}=\Disp(\sqrt{a_WE_t}s)\tau_{\bar n_W}\Disp^\dagger(\sqrt{a_WE_t}s).
\label{eq:willie-conditioned-state}
\end{equation}
Since a Gaussian mixture of displaced thermal states is thermal,
\begin{equation}
\bar\rho_{W}(E_t)
 :=\E_{S_t}[\rho_{W,t|S_t}]
 =\tau_{\bar n_W+a_WE_t}.
\label{eq:willie-exact-state}
\end{equation}

\subsection{Fixed Gaussian receiver at Bob}

With the channel fixed, we now write the classical observation law produced by Bob's Gaussian receiver. This law gives the likelihood ratio and the linear information slope used later.

Bob applies a fixed physically realizable Gaussian receiver \cite{Oh_2019,Holevo2021on}. We work with the classical model induced by this receiver. Let $C$ be the linear readout matrix and let $\Sigma_0\succ0$ be the idle output covariance, including measurement-added noise. Define
\begin{equation}
G=C^\top\Sigma_0^{-1}C\succeq0.
\label{eq:G-def}
\end{equation}
Here $\NN(\mu,\Sigma)$ denotes a real Gaussian law with mean $\mu$ and covariance $\Sigma$. Writing
\begin{equation}
X_t:=\operatorname{vec}(S_t)
=(\sqrt2\Re S_t,\sqrt2\Im S_t)^\top\sim\NN(0,I_2),
\end{equation}
Bob's conditional observation law is
\begin{align}
H_0:&\quad Y_t|S_t\sim\NN(0,\Sigma_0),\nonumber\\
H_1(E_t):&\quad Y_t|S_t\sim
\NN(\sqrt{\eta E_t}\,CX_t,\Sigma_0).
\label{eq:bob-laws}
\end{align}
Let $P_0$ and $P_{1,E_t,S_t}$ denote the corresponding idle and active conditional laws of $Y_t$, with densities $p_0$ and $p_{1,E_t,S_t}$. The conditional log-likelihood ratio (LLR) is
\begin{equation}
\ell_t(E_t)=
\log\frac{p_{1,E_t,S_t}(Y_t)}{p_0(Y_t)}.
\label{eq:llr}
\end{equation}
Define the random conditional slope and its mean by
\begin{align}
Z_t&:=\frac{\eta}{2}X_t^\top G X_t,\label{eq:Z-def}\\
\kappa&:=\E Z_t=\frac{\eta}{2}\Tr G.
\label{eq:kappa-def}
\end{align}
We assume $\kappa>0$, i.e., the selected receiver has nonzero sensitivity to Alice's displacement.
The chosen physical receiver fixes $(C,\Sigma_0)$; we do not optimize over arbitrary matrix pairs.

Conditioned on $S_t$, the equal-covariance Gaussian Kullback--Leibler (KL) formula \cite{zhang2023properties} gives, for every $E_t\ge0$,
\begin{equation}
D(P_{1,E_t,S_t}\|P_0)=Z_tE_t,
\label{eq:conditional-bob-kl}
\end{equation}
where the equality holds almost surely with respect to $S_t$.

Let $P_{S_t,Y_t}^{(1,E_t)}$ denote the joint law obtained by drawing $S_t\sim\CN(0,1)$ and then drawing $Y_t$ according to $P_{1,E_t,S_t}$. Similarly, let $P_{S_t,Y_t}^{(0)}$ denote the joint law obtained by drawing the same $S_t$ and then drawing $Y_t$ according to $P_0$. Since the marginal law of $S_t$ is the same under both hypotheses, the chain rule for relative entropy gives 
\begin{align}
D(P_{S_t,Y_t}^{(1,E_t)}\|P_{S_t,Y_t}^{(0)})
&=\E\!\left[D(P_{1,E_t,S_t}\|P_0)\right]\nonumber\\
&=\E[Z_t]E_t
=\kappa E_t.
\label{eq:joint-bob-kl}
\end{align}

Under $H_1$ and conditioned on $S_t$,
\begin{equation}
\ell_t(E_t)|S_t\sim\NN(Z_tE_t,2Z_tE_t).
\label{eq:llr-normal}
\end{equation}

\begin{remark}[Optional key-controlled homodyne instantiation]
\label{rem:key-controlled-homodyne}
The main results are stated for the fixed Gaussian receiver above.
Appendix~\ref{app:key-controlled-homodyne} records an optional homodyne
instantiation in which Bob uses the current shared symbol to choose the
local-oscillator direction before measurement. Within the class of
single-quadrature homodyne policies, the maximizing direction is
$w^*(X)=V^{-1}X/\|V^{-1}X\|$, and the corresponding average information
slope is $\kappa_{\rm hom}^{\rm ad}=(\eta/2)\Tr(V^{-1})$. Thus conclusions
that depend only on the linear information identity
\eqref{eq:joint-bob-kl} can be specialized by replacing $\kappa$ with
$\kappa_{\rm hom}^{\rm ad}$. This is a class-specific receiver
instantiation, not an optimization over general Gaussian or quantum
measurements.
\end{remark}

\subsection{Block-activity code}

We now make the message-bearing structure explicit. Each message is identified with a weight-constrained binary activity pattern across physical blocks, so recovering the message means recovering which blocks are active. The Hamming-weight constraint is the operative notion throughout; no vanishing-density assumption on $K/N$ is imposed. The definitions below specify the code together with the local deadlines and the global performance criteria.

The horizon is divided into $N$ blocks. Every block has length
\begin{equation}
B=L_{\rm start}+L,
\label{eq:block-length}
\end{equation}
where $L$ is the number of consecutive active samples and $L_{\rm start}$ is an onset-uncertainty window. An active segment may begin at any offset $\nu\in\{0,\ldots,L_{\rm start}\}$ such that its $L$ samples lie in the block. The onset is a deterministic nuisance parameter (or is randomized independently of the secret symbols and subkeys) and does not carry message information. Throughout the paper, ``localization'' means identifying the active block coordinates; Bob is not required to estimate the within-block onset $\nu$.

For integers $N$ and $K\le N$, define the activity-pattern codebook
\begin{align}
\cC_{N,K}&:=\{x^N\in\{0,1\}^N:\wt(x^N)\le K\},\nonumber\\
M_{N,K}&:=|\cC_{N,K}|=\sum_{j=0}^K\binom Nj.
\label{eq:activity-codebook}
\end{align}
Here $\wt(x^N)$ is the Hamming weight, or number of active coordinates. Let $\cM:=\{1,\ldots,M_{N,K}\}$ be the message set, with a one-to-one map from messages to patterns $x^N$. In an active block, Alice uses the energy vector $E^L=(E_1,\ldots,E_L)$. Bob resets his statistic at the start of each block and outputs a binary decision $\widehat x_b$. If $\widehat x^N$ belongs to the codebook, it identifies the decoded message $\widehat M$; any block error counts as a message error. We write $\Prb_m$ for the law induced by message $m$. In one-block analyses, $\Prb_0$ is the idle law and $\Prb_\nu$ is the law with active onset $\nu$.

For a message $m$ and an admissible vector $\boldsymbol\nu$ of active-block onsets, let $\rho_{W|m,\boldsymbol\nu}$ be Willie's state averaged over the secret common randomness, and let $\rho_{W|0}$ be the all-idle state. For a prescribed total covertness budget $\delta>0$, we use the stronger maximal-message, worst-onset covertness criterion
\begin{equation}
\max_{m\in\cM}\sup_{\boldsymbol\nu}
D(\rho_{W|m,\boldsymbol\nu}\|\rho_{W|0})\le\delta.
\label{eq:max-covertness}
\end{equation}
This criterion has a direct detection interpretation. By quantum Pinsker's inequality and the Helstrom bound \cite{WildeQIT}, for every fixed message and admissible onset vector the optimal equal-prior binary test between $\rho_{W|m,\boldsymbol\nu}$ and $\rho_{W|0}$ has advantage over random guessing at most $\min\{1/2,\sqrt{\delta/8}\}$. Thus \eqref{eq:max-covertness} controls Willie's distinguishability uniformly over messages and onsets, rather than only after averaging over them.
The maximal decoding error is
\begin{equation}
P_{\rm e}^{\max}:=
\max_{m\in\cM}\sup_{\nu_1,\ldots,\nu_N}
\Prb_m(\widehat M\ne m).
\label{eq:max-error}
\end{equation}
Both suprema range over the admissible onsets of the active blocks; idle blocks have no onset parameter.

\section{One-Block Energy--Information Frontier and Latency Limits}
\label{sec:exact-geometry}

Recovering the message-bearing activity pattern requires reliable decisions on its individual block coordinates. This section isolates one active segment, computes Willie's exact cost, solves the energy-allocation problem, and converts the resulting information frontier into a detector-independent lower bound on the latency of one such decision.

\subsection{Exact Willie cost}

We first express Willie's averaged state as a scalar energy cost. This exact function replaces a purely quadratic low-energy approximation.

For $x,y>0$, the relative entropy between single-mode thermal states is obtained by reducing quantum relative entropy to the geometric photon-number distributions \cite{WildeQIT,Parthasarathy2021}:
\begin{equation}
D(\tau_x\|\tau_y)
=(x+1)\log\frac{y+1}{x+1}+x\log\frac{x}{y}.
\label{eq:thermal-qre}
\end{equation}
We therefore define the exact per-use Willie cost
\begin{equation}
g(E):=D(\tau_{\bar n_W+a_WE}\|\tau_{\bar n_W}).
\label{eq:g-def}
\end{equation}

\begin{lemma}[Exact convex Willie cost]
\label{lem:g-properties}
The function $g:[0,\infty)\to[0,\infty)$ is continuous, strictly increasing on $(0,\infty)$, strictly convex, and unbounded; hence it has a well-defined inverse $g^{-1}:[0,\infty)\to[0,\infty)$. Let $x(E)=\bar n_W+a_WE$. Then
\begin{align}
g'(E)
&=a_W\log\frac{x(E)(\bar n_W+1)}{\bar n_W(x(E)+1)},
\label{eq:g-prime}\\
g''(E)
&=\frac{a_W^2}{x(E)(x(E)+1)}>0.
\label{eq:g-second}
\end{align}
Moreover,
\begin{equation}
g(E)=c_WE^2+O(E^3),
\qquad
c_W=\frac{a_W^2}{2\bar n_W(\bar n_W+1)},
\label{eq:g-expansion}
\end{equation}
and hence
\begin{equation}
g^{-1}(u)=\sqrt{\frac{u}{c_W}}+O(u),\qquad u\downarrow0.
\label{eq:g-inverse-expansion}
\end{equation}
\end{lemma}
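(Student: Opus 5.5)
The plan is to compute everything directly from the closed form \eqref{eq:thermal-qre}, substituting $x=x(E)=\bar n_W+a_WE$ and $y=\bar n_W$. Write
\begin{equation*}
g(E)=(x+1)\log\frac{\bar n_W+1}{x+1}+x\log\frac{x}{\bar n_W}.
\end{equation*}
Because $\bar n_W>0$ and $x(E)\ge\bar n_W>0$ for $E\ge0$, this expression is real-analytic on a neighborhood of $[0,\infty)$. Continuity and smoothness are therefore immediate, and $g(0)=0$.

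The derivatives follow by the chain rule with $dx/dE=a_W$:
\begin{equation*}
\frac{d}{dx}\Bigl[(x+1)\log\tfrac{\bar n_W+1}{x+1}+x\log\tfrac{x}{\bar n_W}\Bigr]=\log\frac{\bar n_W+1}{x+1}-1+\log\frac{x}{\bar n_W}+1 .
\end{equation*}
Multiplying by $a_W$ gives \eqref{eq:g-prime}. Differentiating once more in $x$ gives $1/x-1/(x+1)=1/(x(x+1))$, and multiplying by $a_W^2$ gives \eqref{eq:g-second}. Since $a_W=1-\eta>0$, we get $g''>0$, so $g$ is strictly convex.

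Monotonicity comes from the first derivative. The map $x\mapsto x/(x+1)$ is strictly increasing, so $g'(0)=0$ and $g'(E)>0$ for $E>0$. Hence $g$ is strictly increasing on $(0,\infty)$ and in particular $g\ge0$.

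Unboundedness follows from convexity. For any $E_0>0$ we have $g(E)\ge g(E_0)+g'(E_0)(E-E_0)$, and $g'(E_0)>0$, so $g(E)\to\infty$. Continuity, strict monotonicity and $g(0)=0$ then give, by the intermediate value theorem, a continuous bijection $[0,\infty)\to[0,\infty)$. Its inverse $g^{-1}$ is well defined and continuous.

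For the expansion \eqref{eq:g-expansion}, Taylor-expand $g$ at $0$ using $g(0)=g'(0)=0$:
\begin{equation*}
g''(0)=\frac{a_W^2}{\bar n_W(\bar n_W+1)}=2c_W .
\end{equation*}
The third derivative is bounded on $[0,1]$, since $x\ge\bar n_W>0$ there, so Taylor's theorem with remainder yields $g(E)=c_WE^2+O(E^3)$.

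The inverse expansion \eqref{eq:g-inverse-expansion} is the only mildly delicate step, because $g^{-1}$ is not differentiable at $0$. I would handle it as follows.

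First, set $E=g^{-1}(u)$. Since $g^{-1}$ is continuous with $g^{-1}(0)=0$, we have $E\to0$ as $u\downarrow0$.

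Second, note that $g(E)=c_WE^2\bigl(1+O(E)\bigr)$, hence $\sqrt{u}=\sqrt{c_W}\,E\,\bigl(1+O(E)\bigr)$ for small $E$. This shows first that $E=\Theta(\sqrt u)$.

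Third, substitute this back to obtain
\begin{equation*}
E=\sqrt{u/c_W}\,\bigl(1+O(E)\bigr)^{-1}=\sqrt{u/c_W}+O(u),
\end{equation*}
using $\sqrt{u}\cdot O(E)=O(u)$.

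Alternatively, one can apply the analytic inverse function theorem to $h(E):=\sqrt{g(E)}$. The function $h(E)=E\sqrt{c_W+O(E)}$ is analytic near $0$ with $h'(0)=\sqrt{c_W}\neq0$, so $h^{-1}(v)=v/\sqrt{c_W}+O(v^2)$. Setting $v=\sqrt u$ gives the same conclusion.
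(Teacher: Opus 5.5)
Your proposal is correct and follows the paper's route: differentiate the closed form, read off monotonicity and convexity from $g'$ and $g''$, Taylor-expand at zero, and invert the quadratic expansion. The differences are minor. You read the sign of $g'$ directly rather than using $g'(0)=0$ with $g''>0$, and you use a supporting line rather than the limit of $g'$ for unboundedness. Your bootstrap argument for $g^{-1}(u)=\sqrt{u/c_W}+O(u)$ is more explicit than the paper's one-line remark on inverting the local quadratic expansion.
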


\begin{proof}
Equation \eqref{eq:thermal-qre} follows because thermal states commute and have geometric photon-number distributions. Differentiating \eqref{eq:g-def} gives \eqref{eq:g-prime}--\eqref{eq:g-second}. Since $g'(0)=0$ and $g''(E)>0$, $g$ is strictly increasing for $E>0$ and strictly convex. Moreover, $g'(E)$ converges to the positive constant $a_W\log((\bar n_W+1)/\bar n_W)$, so $g(E)\to\infty$. Taylor expansion at zero yields \eqref{eq:g-expansion}; inversion of the local quadratic expansion gives \eqref{eq:g-inverse-expansion}.
\end{proof}

For an energy vector $E^L$, define $\rho_W(E^L):=\bigotimes_{t=1}^L\bar\rho_W(E_t)$ and $\rho_W(0):=\tau_{\bar n_W}^{\otimes L}$. For independent uses, QRE additivity \cite{WildeQIT} and \eqref{eq:willie-exact-state} give the exact identity
\begin{equation}
D(\rho_W(E^L)\|\rho_W(0))=\sum_{t=1}^L g(E_t).
\label{eq:exact-block-qre}
\end{equation}

\subsection{Exact optimal allocation}

With the exact cost in hand, we ask how Alice should spread a fixed budget across the $L$ active samples. The optimum is uniform at every finite $L$.

Let $P_{S^L,Y^L}^{(1,E^L)}$ and $P_{S^L,Y^L}^{(0)}$ denote the joint product laws induced by \eqref{eq:bob-laws} on an active segment with energy vector $E^L$ and on an idle segment, respectively.

\begin{theorem}[Exact finite-block energy--information frontier]
\label{thm:exact-uniform}
For every $L\ge1$ and $\delta_b>0$,
\begin{equation}
\sup_{\substack{E_t\ge0\\\sum_{t=1}^L g(E_t)\le\delta_b}}
D(P_{S^L,Y^L}^{(1,E^L)}\|P_{S^L,Y^L}^{(0)})
=\kappa Lg^{-1}\!\left(\frac{\delta_b}{L}\right).
\label{eq:exact-bob-frontier}
\end{equation}
The unique maximizing allocation is
\begin{equation}
E_1=\cdots=E_L=E_L^*:=g^{-1}\!\left(\frac{\delta_b}{L}\right).
\label{eq:exact-energy}
\end{equation}
Consequently,
\begin{equation}
\kappa L E_L^*
=\kappa\sqrt{\frac{\delta_bL}{c_W}}+O(1)
\qquad (L\to\infty).
\label{eq:bob-frontier-asymptotic}
\end{equation}
\end{theorem}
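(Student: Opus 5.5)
The plan is to reduce the supremum to a scalar concave program. For independent uses the joint laws factor, so the chain rule and \eqref{eq:joint-bob-kl} give
\begin{equation*}
D(P_{S^L,Y^L}^{(1,E^L)}\|P_{S^L,Y^L}^{(0)})=\sum_{t=1}^L\kappa E_t=\kappa\sum_{t=1}^L E_t .
\end{equation*}
The problem is then to maximize $\sum_t E_t$ subject to $\sum_t g(E_t)\le\delta_b$ and $E_t\ge0$. I will substitute $u_t:=g(E_t)\ge0$. By Lemma~\ref{lem:g-properties}, $g$ is a continuous strictly increasing bijection of $[0,\infty)$, so $E_t=g^{-1}(u_t)$. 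Since $g$ is strictly convex and increasing, $g^{-1}$ is strictly concave and increasing on $[0,\infty)$. The program becomes: maximize $\sum_t g^{-1}(u_t)$ over $u_t\ge0$ with $\sum_t u_t\le\delta_b$.

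\emph{Upper bound and attainment.} Jensen's inequality for the concave $g^{-1}$ gives
\begin{equation*}
\sum_t g^{-1}(u_t)\le L\,g^{-1}\!\left(\tfrac1L\sum_t u_t\right)\le L\,g^{-1}(\delta_b/L),
\end{equation*}
where the second inequality uses monotonicity of $g^{-1}$. The uniform allocation $E_t=g^{-1}(\delta_b/L)$ is feasible with equality in the constraint, so it attains this value. This proves \eqref{eq:exact-bob-frontier} and shows that the supremum is a maximum.

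\emph{Uniqueness.} Strict increase of $g^{-1}$ means an optimizer must saturate the budget, $\sum_t u_t=\delta_b$. Strict concavity then forces equality in Jensen only when all $u_t$ are equal, which yields \eqref{eq:exact-energy}. The one point needing care is to state strict concavity of $g^{-1}$ precisely, including at the endpoint $0$. I will derive it directly from $g''>0$ via $(g^{-1})''=-g''(E)/g'(E)^3<0$ on $(0,\infty)$, together with continuity at $0$. Alternatively, a chord argument works: if two $E_t$ differ, replacing both by their average keeps $\sum_t E_t$ unchanged, strictly lowers $\sum_t g(E_t)$ by strict convexity, and the freed budget can then strictly increase some $E_t$.

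\emph{Asymptotics.} I will plug $u=\delta_b/L\downarrow0$ into \eqref{eq:g-inverse-expansion}:
\begin{equation*}
L\,g^{-1}(\delta_b/L)=L\left(\sqrt{\delta_b/(c_WL)}+O(1/L)\right)=\sqrt{\delta_bL/c_W}+O(1).
\end{equation*}
Multiplying by $\kappa$ gives \eqref{eq:bob-frontier-asymptotic}. I expect no real obstacle. The only delicate items are the uniqueness/strict-concavity bookkeeping above and confirming that the $O(u)$ remainder in \eqref{eq:g-inverse-expansion} is uniform as $u\downarrow0$. The latter follows from $g(E)=c_WE^2+O(E^3)$ and the inverse function expansion already granted in Lemma~\ref{lem:g-properties}.
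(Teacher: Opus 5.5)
Your proposal is correct and is essentially the paper's proof: reduce the objective to $\kappa\sum_t E_t$, apply Jensen, and use the expansion \eqref{eq:g-inverse-expansion} for the asymptotics. The only difference is cosmetic. You apply Jensen to the concave $g^{-1}$ in the variables $u_t=g(E_t)$, whereas the paper applies it to the convex $g$ at the average energy $\bar E$. You also make the budget-saturation step in the uniqueness argument explicit, which the paper leaves implicit.
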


\begin{proof}
By \eqref{eq:joint-bob-kl}, the objective is $\kappa\sum_tE_t$. Let $\bar E=L^{-1}\sum_tE_t$. Strict convexity and Jensen's inequality yield
\begin{equation}
g(\bar E)\le\frac1L\sum_{t=1}^Lg(E_t)\le\frac{\delta_b}{L}.
\end{equation}
Since $g$ is increasing, $\bar E\le g^{-1}(\delta_b/L)$, and hence
\begin{equation}
\sum_{t=1}^LE_t\le Lg^{-1}(\delta_b/L).
\end{equation}
Equality in Jensen's inequality holds only for the uniform allocation, proving \eqref{eq:exact-bob-frontier}--\eqref{eq:exact-energy}. Equation \eqref{eq:bob-frontier-asymptotic} follows from \eqref{eq:g-inverse-expansion}.
\end{proof}

\subsection{A detector-independent latency converse}

The allocation theorem gives the largest information available to the fixed receiver in one block. We use this frontier to lower-bound the active-segment length for any detector that decides by the block deadline.

For $p,q\in(0,1)$, let
\begin{equation}
\db(p\|q)=p\log\frac pq+(1-p)\log\frac{1-p}{1-q}
\label{eq:binary-kl}
\end{equation}
be the binary relative entropy, and let
\begin{equation}
h_2(u)=-u\log u-(1-u)\log(1-u),\qquad u\in[0,1],
\label{eq:binary-entropy}
\end{equation}
with the convention $0\log0=0$.

\begin{theorem}[Receiver-restricted one-block converse]
\label{thm:block-converse}
Fix an onset $\nu\in\{0,\ldots,B-L\}$ in a block of length $B$, where the active segment has length $L$. Consider any energy allocation $E^L$ satisfying 
\begin{equation}
\sum_{t=1}^Lg(E_t)\le\delta_b,
\label{eq:block-budget}
\end{equation}
and any possibly sequential binary rule $\widehat X\in\{0,1\}$ whose final output is available by the end of the block. Let
\begin{equation}
\alpha=\Prb_0(\widehat X=1),
\qquad
\beta_\nu=\Prb_\nu(\widehat X=0).
\label{eq:block-errors}
\end{equation}
Then necessarily
\begin{equation}
\kappa Lg^{-1}\!\left(\frac{\delta_b}{L}\right)
\ge \db(1-\beta_\nu\|\alpha).
\label{eq:exact-latency-converse}
\end{equation}
Here $\db$ is understood in its standard extended-value sense at the boundary of $[0,1]^2$. In particular, if $\alpha\le\epsilon_b$ and $\sup_\nu\beta_\nu\le\epsilon_b<1/2$, then
\begin{equation}
\kappa Lg^{-1}\!\left(\frac{\delta_b}{L}\right)
\ge(1-2\epsilon_b)\log\frac{1-\epsilon_b}{\epsilon_b}.
\label{eq:symmetric-error-converse}
\end{equation}
For fixed $\delta_b>0$, as $\epsilon_b\downarrow0$ this implies
\begin{equation}
L\ge
\frac{c_W}{\kappa^2\delta_b}
\left(\log\frac1{\epsilon_b}\right)^2(1-o(1)).
\label{eq:latency-converse-asymptotic}
\end{equation}
\end{theorem}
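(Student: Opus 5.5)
The proof reduces to the data-processing inequality for relative entropy, combined with the finite-block frontier of Theorem~\ref{thm:exact-uniform}.

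\textbf{Setup of the two laws.} Whatever information Bob's rule uses, $\widehat X$ is a (possibly randomized) function of the shared symbols and the measured outputs in the block. Let $S^B$ and $Y^B$ denote these over the whole block. Under $\Prb_0$ and $\Prb_\nu$, the samples outside the active window $\{\nu+1,\ldots,\nu+L\}$ have identical conditional laws. The symbols $S$ also have the same marginal under both hypotheses, independent of everything else. Hence, by the chain rule and product structure,
\begin{equation}
D(\Prb_\nu^{S^B,Y^B}\|\Prb_0^{S^B,Y^B})=D(P_{S^L,Y^L}^{(1,E^L)}\|P_{S^L,Y^L}^{(0)}).
\end{equation}
Sequential stopping before the deadline causes no difficulty. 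The stopping time and decision are measurable functions of $(S^B,Y^B)$ and independent auxiliary randomness, and that randomness has the same law under both hypotheses, so it adds nothing to the divergence.

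\textbf{Data processing and the frontier.} Applying data processing to the binary map $(S^B,Y^B,U)\mapsto\widehat X$ gives
\begin{equation}
\db(\Prb_\nu(\widehat X=1)\|\Prb_0(\widehat X=1))=\db(1-\beta_\nu\|\alpha)
\end{equation}
as a lower bound on the divergence above. Theorem~\ref{thm:exact-uniform} bounds that divergence by $\kappa Lg^{-1}(\delta_b/L)$ for every allocation obeying \eqref{eq:block-budget}. This yields \eqref{eq:exact-latency-converse}. The boundary cases, where $\db=+\infty$, are covered by the extended-value convention, because the left side is finite.

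\textbf{The symmetric bound.} For \eqref{eq:symmetric-error-converse}, use that $\db(p\|q)$ is increasing in $p$ for $p\ge q$ and decreasing in $q$ for $q\le p$. Since $\alpha\le\epsilon_b<1/2\le 1-\epsilon_b\le1-\beta_\nu$, we get
\begin{equation}
\db(1-\beta_\nu\|\alpha)\ge\db(1-\epsilon_b\|\epsilon_b)=(1-2\epsilon_b)\log\tfrac{1-\epsilon_b}{\epsilon_b}.
\end{equation}

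\textbf{The asymptotic latency bound.} This step needs the most care. As $\epsilon_b\downarrow0$, the right side of \eqref{eq:symmetric-error-converse} diverges, so $Lg^{-1}(\delta_b/L)$ must diverge. Because $g$ is convex with $g(0)=0$, the ratio $g(E)/E$ is increasing, so $u\mapsto g^{-1}(u)/u$ is decreasing. Hence $Lg^{-1}(\delta_b/L)=\delta_b\,\frac{g^{-1}(\delta_b/L)}{\delta_b/L}$ is increasing in $L$. It is bounded when $L$ is bounded, since $L\mapsto Lg^{-1}(\delta_b/L)$ is bounded on bounded ranges. Therefore $L\to\infty$. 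Then \eqref{eq:bob-frontier-asymptotic} gives $\kappa\sqrt{\delta_bL/c_W}+O(1)\ge(1-o(1))\log(1/\epsilon_b)$. Squaring gives \eqref{eq:latency-converse-asymptotic}, and the $O(1)$ term is absorbed into the $1-o(1)$ factor.
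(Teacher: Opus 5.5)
Your proposal is correct and follows the paper's proof essentially step for step: product structure reduces the block-data divergence to $\kappa\sum_t E_t$, binary data processing gives $\db(1-\beta_\nu\|\alpha)$, Theorem~\ref{thm:exact-uniform} supplies the frontier bound, monotonicity of $\db$ yields the symmetric form, and the expansion of $g^{-1}$ gives the asymptotic latency bound. You also handle auxiliary randomness explicitly and justify $L\to\infty$ via boundedness of $Lg^{-1}(\delta_b/L)$ on bounded ranges; these fill in details the paper leaves implicit, and your monotonicity-in-$L$ argument is correct though not needed.
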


\begin{proof}
Let $Q_\nu$ and $Q_0$ be the joint laws of all common-randomness symbols and receiver observations available in the block under onset $\nu$ and under the idle hypothesis, respectively. The pre-onset and post-segment observations have identical conditional laws under the two hypotheses and are independent of the active-segment observations. Therefore, by independence and \eqref{eq:joint-bob-kl},
\begin{equation}
D(Q_\nu\|Q_0)=\kappa\sum_{t=1}^LE_t.
\end{equation}
The final output of any sequential rule is a measurable function of the block data, so binary data processing gives
\begin{equation}
\kappa\sum_{t=1}^LE_t
\ge \db(1-\beta_\nu\|\alpha).
\end{equation}
Theorem \ref{thm:exact-uniform} upper bounds the left-hand side by $\kappa Lg^{-1}(\delta_b/L)$, proving \eqref{eq:exact-latency-converse}. The binary divergence is minimized over $1-\beta_\nu\ge1-\epsilon_b$ and $\alpha\le\epsilon_b$ at $(1-\epsilon_b,\epsilon_b)$, yielding \eqref{eq:symmetric-error-converse}. The right-hand side diverges as $\epsilon_b\downarrow0$, so every feasible sequence has $L\to\infty$ and hence $\delta_b/L\to0$. Substitution of \eqref{eq:g-inverse-expansion} and solution for $L$ gives \eqref{eq:latency-converse-asymptotic}.
\end{proof}

\section{CUSUM Achievability: False Alarms and Missed Detections}
\label{sec:cusum}

The converse above gives the latency required to recover one activity coordinate under a block deadline. We now analyze a simple block-reset CUSUM and show that it attains the same leading order. We bound false alarms and missed detections separately, then combine the bounds into one design rule.

Bob resets the reflected CUSUM statistic \cite{Page1954,PoorBook2009,Siegmund1985,Tartakovsky2014} at the beginning of each block:
\begin{equation}
W_t=\max\{0,W_{t-1}+\ell_t(E_L^*)\},
\qquad W_0=0,
\label{eq:cusum-recursion}
\end{equation}
with stopping time
\begin{equation}
T_h=\inf\{t\ge1:W_t\ge h\}.
\label{eq:cusum-stopping}
\end{equation}
The block is declared active iff $T_h\le B$.

\subsection{Finite-horizon false alarms}

For an idle block, we first bound the probability that CUSUM crosses the threshold. The bound holds for each shared sequence and therefore also after averaging.

\begin{lemma}[Block false-alarm bound]
\label{lem:false-alarm}
For every threshold $h>0$, every blocklength $B$, and every realization $S^B$ of the secret sequence,
\begin{equation}
\Prb_0(T_h\le B\mid S^B)\le Be^{-h}.
\label{eq:false-alarm-bound}
\end{equation}
The same bound holds after averaging over $S^B$.
\end{lemma}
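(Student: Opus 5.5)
Fix a realization $S^B$ and work under the idle law $\Prb_0(\cdot\mid S^B)$, where $Y_1,\dots,Y_B$ are i.i.d.\ $\NN(0,\Sigma_0)$. The CUSUM recursion \eqref{eq:cusum-recursion} has the standard representation
\begin{equation}
W_t=\max_{0\le j\le t}\sum_{i=j+1}^{t}\ell_i(E_L^*),
\end{equation}
with the empty sum equal to $0$. Because $h>0$, the event $\{T_h\le B\}=\{\max_{t\le B}W_t\ge h\}$ is therefore contained in the union over starting indices $j\in\{0,\dots,B-1\}$ of the events
\begin{equation}
A_j=\Bigl\{\sup_{j<t\le B}\textstyle\sum_{i=j+1}^{t}\ell_i\ge h\Bigr\}.
\end{equation}
A union bound then gives $\Prb_0(T_h\le B\mid S^B)\le\sum_{j=0}^{B-1}\Prb_0(A_j\mid S^B)$. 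There are at most $B$ terms, so it suffices to show that $\Prb_0(A_j\mid S^B)\le e^{-h}$ for each $j$.

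For a fixed $j$, define the partial likelihood-ratio product
\begin{equation}
\Lambda^{(j)}_t=\exp\Bigl(\sum_{i=j+1}^{t}\ell_i\Bigr)=\prod_{i=j+1}^{t}\frac{p_{1,E_L^*,S_i}(Y_i)}{p_0(Y_i)},\qquad t\ge j,
\end{equation}
with $\Lambda^{(j)}_j=1$. Conditioned on $S^B$, the factors are independent with $\E_0[p_{1,E,s}(Y)/p_0(Y)]=1$. This holds because $p_0$ is a nondegenerate Gaussian density, so $p_1\ll p_0$ and the ratio integrates to one. Hence $(\Lambda^{(j)}_t)_{t\ge j}$ is a nonnegative mean-one martingale under $\Prb_0(\cdot\mid S^B)$ with respect to the filtration generated by $Y_{j+1},\dots,Y_t$. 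Ville's maximal inequality, or equivalently Doob's inequality applied to the finite horizon $t\le B$, gives
\begin{equation}
\Prb_0\Bigl(\max_{j\le t\le B}\Lambda^{(j)}_t\ge e^{h}\Bigm| S^B\Bigr)\le e^{-h}\,\E_0\bigl[\Lambda^{(j)}_j\bigr]=e^{-h},
\end{equation}
which is exactly $\Prb_0(A_j\mid S^B)\le e^{-h}$. Summing over $j$ yields \eqref{eq:false-alarm-bound}.

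Because the bound $Be^{-h}$ does not depend on $S^B$, taking expectation over $S^B$ (tower property) gives the averaged statement. The only point needing care is that the LLR increments are not identically distributed, since they depend on $S_i$ through $Z_i$. Conditioning on the whole realization $S^B$ makes them independent with unit-mean likelihood ratios, and that is all the martingale argument uses. I do not expect any step to be a real obstacle. The union-over-start-times decomposition and the likelihood-ratio martingale bound are standard, and the block reset ensures the horizon is exactly $B$.
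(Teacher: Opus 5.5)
Your proposal is correct and follows essentially the same route as the paper. Both use the reflected-CUSUM representation as a maximum over start indices, a union bound over the $B$ possible starts, and Ville's inequality for the likelihood-ratio martingale, which has unit mean conditional on $S^B$; averaging over $S^B$ then gives the unconditional bound.
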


\begin{proof}
For every possible segment start $k\in\{1,\ldots,B\}$, define $M_{k,k-1}=1$ and
\begin{equation}
M_{k,t}=\exp\left(\sum_{i=k}^t\ell_i(E_L^*)\right),
\qquad t\ge k.
\end{equation}
Conditioned on $S^B$, $(M_{k,t})_{t\ge k-1}$ is a nonnegative martingale under $H_0$ with initial mean one because every exponentiated LLR increment satisfies $\E_0[e^{\ell_i(E_L^*)}\mid S_i]=1$. The reflected CUSUM representation implies that $T_h\le B$ only if, for some $k\le t\le B$, $M_{k,t}\ge e^h$. Ville's inequality and a union bound over $k$ give
\begin{align}
\Prb_0(T_h\le B\mid S^B)
&\le\sum_{k=1}^B\Prb_0\left(
\max_{k\le t\le B}M_{k,t}\ge e^h\mid S^B\right)\nonumber\\
&\le Be^{-h}.
\end{align}
\end{proof}

\subsection{Missed detections}

For an active block, we next bound the miss probability. We separate an unusually small key-induced drift from the Gaussian noise under a typical drift.

For $\xi\in(0,1)$, recall from \eqref{eq:G-def} and \eqref{eq:kappa-def}, define the lower-tail rate of the key-induced slope by
\begin{equation}
I_-(\xi):=
\sup_{\theta\ge0}
\left\{
\frac12\log\det(I_2+\theta\eta G)
-\theta(1-\xi)\kappa
\right\}>0.
\label{eq:key-rate-function}
\end{equation}
Indeed, the Gaussian quadratic-form identity gives $\E[e^{-\theta Z_t}]=\det(I_2+\theta\eta G)^{-1/2}$ for $\theta\ge0$. The positivity follows because the derivative of the objective at $\theta=0$ is $\xi\kappa>0$. Markov's inequality applied to $e^{-\theta\sum_t Z_t}$ and optimization over $\theta$ give
\begin{equation}
\Prb\left(\frac1L\sum_{t=1}^LZ_t<(1-\xi)\kappa\right)
\le e^{-LI_-(\xi)}.
\label{eq:key-lower-tail}
\end{equation}
For an isotropic receiver with $\eta G=q_{\rm iso}I_2$, $Z_t$ is exponential with mean $\kappa=q_{\rm iso}$ and
\begin{equation}
I_-(\xi)=-\xi-\log(1-\xi).
\label{eq:isotropic-rate}
\end{equation}

\begin{lemma}[Uniform-onset missed-detection bound]
\label{lem:miss}
Fix $\xi\in(0,1)$ and $\gamma>0$. Use the exact uniform energy $E_L^*$ in \eqref{eq:exact-energy}. If
\begin{equation}
(1-\xi)\kappa L E_L^*\ge(1+\gamma)h,
\label{eq:drift-condition}
\end{equation}
then, for every active onset allowed by \eqref{eq:block-length},
\begin{equation}
\Prb_\nu(T_h>\nu+L)
\le e^{-LI_-(\xi)}
+\exp\left[-\frac{\gamma^2}{4(1+\gamma)}h\right].
\label{eq:miss-bound}
\end{equation}
Here $T_h$ is measured from the beginning of the block.
\end{lemma}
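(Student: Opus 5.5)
The plan is to reduce the miss event to the failure of a single partial sum over the active segment, and then split that failure into a key-induced event and a noise event. Since the reflected CUSUM dominates every partial sum started at a reset point, we have $W_{\nu+L}\ge\sum_{t=\nu+1}^{\nu+L}\ell_t(E_L^*)=:\Lambda$. This holds because $W_\nu\ge0$ and $W_t\ge W_{t-1}+\ell_t$ at every step. Hence $\{T_h>\nu+L\}\subseteq\{\Lambda<h\}$. The pre-onset samples enter only through $W_\nu\ge0$, so the bound is uniform in the onset $\nu$. Define $\bar Z=L^{-1}\sum_{t=\nu+1}^{\nu+L}Z_t$ and the key event $A=\{\bar Z\ge(1-\xi)\kappa\}$. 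Then $\Prb_\nu(\Lambda<h)\le\Prb(A^c)+\Prb_\nu(\Lambda<h,\,A)$. The active-segment symbols are i.i.d. $\CN(0,1)$ regardless of $\nu$, so \eqref{eq:key-lower-tail} bounds the first term by $e^{-LI_-(\xi)}$.

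For the second term, condition on $S^L$ and use \eqref{eq:llr-normal}. The increments are conditionally independent, with $\ell_t|S_t\sim\NN(Z_tE_L^*,2Z_tE_L^*)$. Their sum is therefore conditionally Gaussian, $\Lambda|S^L\sim\NN(m,2m)$, where $m:=E_L^*\sum_tZ_t=\kappa LE_L^*\bar Z$. This mean-equals-half-variance structure is exactly what the LLR of a Gaussian mean shift gives. On $A$, condition \eqref{eq:drift-condition} yields $m\ge(1-\xi)\kappa LE_L^*\ge(1+\gamma)h$.

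The Chernoff bound for a Gaussian lower tail then gives, for $m>h$,
\begin{equation*}
\Prb(\Lambda<h\mid S^L)\le\exp\!\left[-\frac{(m-h)^2}{4m}\right].
\end{equation*}
The function $m\mapsto(m-h)^2/(4m)$ is increasing for $m\ge h$. At $m=(1+\gamma)h$ it equals $\gamma^2h^2/(4(1+\gamma)h)=\gamma^2h/(4(1+\gamma))$, which is the second term of \eqref{eq:miss-bound}. Averaging over $S^L$ restricted to $A$ preserves this bound. Adding the two contributions completes the argument.

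The main obstacle is the CUSUM-domination step. We must make sure the reflection at zero and the reset at the start of the block only help, and that nothing before the onset, such as idle samples that are not independent of the shared key, can break the inequality. The pathwise inequality $W_{\nu+L}\ge W_\nu+\Lambda\ge\Lambda$ settles this deterministically. The monotonicity check of the Gaussian tail exponent in $m$ is routine.
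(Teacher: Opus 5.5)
Your proposal is correct and follows the paper's proof essentially step for step. Both arguments use the same four ingredients: the pathwise bound $W_{\nu+L}\ge W_\nu+\Lambda\ge\Lambda$; the split on the key-drift event, whose complement is controlled by \eqref{eq:key-lower-tail}; the conditional law $\NN(m,2m)$ of the active-segment LLR sum; and the Gaussian tail bound $\exp[-(m-h)^2/(4m)]$. Your only addition is to state explicitly that $(m-h)^2/(4m)$ is increasing on $m\ge h$; the paper uses this implicitly when it passes to $\exp[-\gamma^2h/(4(1+\gamma))]$.
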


\begin{proof}[Proof sketch]
Condition on the event that the average key-induced slope is at least $(1-\xi)\kappa$. The active-segment log-likelihood sum is then Gaussian. Its mean is at least $(1+\gamma)h$, and its variance is twice its mean. A Gaussian tail bound gives the second term in \eqref{eq:miss-bound}. Equation \eqref{eq:key-lower-tail} gives the first term. The reflected CUSUM is at least the ordinary sum that starts at the true onset, so crossing of that ordinary sum implies detection by $\nu+L$. Appendix~\ref{app:proof-miss} gives the full calculation.
\end{proof}

\begin{theorem}[Finite-block one-block achievability]
\label{thm:block-achievability}
Fix target errors $\alpha_b,\beta_b\in(0,1)$, a covertness budget $\delta_b$, and constants $\xi\in(0,1)$ and $\gamma>0$. Let
\begin{equation}
c_\gamma=\frac{\gamma^2}{4(1+\gamma)}.
\end{equation}
If $L$ and $h$ satisfy
\begin{align}
h&\ge\log\frac{B}{\alpha_b},
\label{eq:h-fa}\\
h&\ge c_\gamma^{-1}\log\frac{2}{\beta_b},
\label{eq:h-md}\\
L&\ge I_-(\xi)^{-1}\log\frac{2}{\beta_b},
\label{eq:L-key}\\
(1-\xi)\kappa Lg^{-1}\!\left(\frac{\delta_b}{L}\right)
&\ge(1+\gamma)h,
\label{eq:implicit-latency-ach}
\end{align}
then the block-reset CUSUM has false-alarm probability at most $\alpha_b$, worst-onset missed-detection probability at most $\beta_b$, and exact Willie QRE equal to $\delta_b$ over the active segment.
\end{theorem}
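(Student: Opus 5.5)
The theorem follows by combining Lemma~\ref{lem:false-alarm}, Lemma~\ref{lem:miss}, and the exact cost identity \eqref{eq:exact-block-qre}. Each hypothesis \eqref{eq:h-fa}--\eqref{eq:implicit-latency-ach} is tailored to control exactly one term.

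\emph{False alarms.} Apply Lemma~\ref{lem:false-alarm} to an idle block. It gives $\Prb_0(T_h\le B)\le Be^{-h}$ for every realization of the shared sequence, and hence also after averaging. Condition \eqref{eq:h-fa} gives $e^{-h}\le\alpha_b/B$, so the false-alarm probability is at most $\alpha_b$.

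\emph{Missed detections.} Fix any admissible onset $\nu\in\{0,\ldots,L_{\rm start}\}$. Set $E_t=E_L^*=g^{-1}(\delta_b/L)$ on the active segment. Then $LE_L^*=Lg^{-1}(\delta_b/L)$, and \eqref{eq:implicit-latency-ach} is precisely the drift condition \eqref{eq:drift-condition}. Lemma~\ref{lem:miss} therefore yields
\begin{equation*}
\Prb_\nu(T_h>\nu+L)\le e^{-LI_-(\xi)}+e^{-c_\gamma h}.
\end{equation*}
By \eqref{eq:L-key}, the first term is at most $\beta_b/2$. By \eqref{eq:h-md}, the second term is at most $\beta_b/2$.

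Since $\nu+L\le B$, the event $\{T_h\le\nu+L\}$ is contained in $\{T_h\le B\}$. The block is therefore declared active, and the miss probability is at most $\beta_b$. The bound does not depend on $\nu$, so it holds for the supremum over onsets.

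\emph{Covertness.} By \eqref{eq:exact-block-qre}, the uniform allocation gives Willie QRE equal to $\sum_{t=1}^L g(E_L^*)=L\cdot\delta_b/L=\delta_b$. This uses that $g^{-1}$ is a genuine inverse, which holds by Lemma~\ref{lem:g-properties}. The idle samples outside the segment contribute zero, because each such sample is the innocent state $\tau_{\bar n_W}$.

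The only point needing care is consistency: the CUSUM increments must use the same $E_L^*$ as Alice's transmission, as specified in \eqref{eq:cusum-recursion}. This is what lets Lemma~\ref{lem:miss} apply with the true LLR. The argument is otherwise pure bookkeeping.
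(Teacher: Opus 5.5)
Your proposal is correct and follows the paper's own proof: set $E_L^*=g^{-1}(\delta_b/L)$ so that $Lg(E_L^*)=\delta_b$, use Lemma~\ref{lem:false-alarm} with \eqref{eq:h-fa} for false alarms, and use Lemma~\ref{lem:miss} with \eqref{eq:h-md}--\eqref{eq:implicit-latency-ach} for misses. You fill in details the paper leaves implicit, namely the $\beta_b/2+\beta_b/2$ split, the containment $\{T_h\le\nu+L\}\subseteq\{T_h\le B\}$, and the zero contribution of idle samples.
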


\begin{proof}
Use $E_L^*=g^{-1}(\delta_b/L)$. Exact covertness follows from $Lg(E_L^*)=\delta_b$. Equations \eqref{eq:h-fa} and Lemma \ref{lem:false-alarm} control false alarm. Equations \eqref{eq:h-md}--\eqref{eq:implicit-latency-ach} and Lemma \ref{lem:miss} control missed detection.
\end{proof}

Let $L^*(\epsilon_b,\delta_b)$ denote the minimum active-segment length for which there exist an allocation obeying \eqref{eq:block-budget} and a block-end binary rule with false-alarm probability and worst-onset miss probability both at most $\epsilon_b$.

\begin{corollary}[Matching-order one-block latency limit]
\label{cor:latency-theta}
Assume that $L_{\rm start}\le\lambda L$ for a fixed
$\lambda\ge0$. Let the false-alarm and worst-onset missed-detection
targets both equal $\epsilon_b$. For every fixed $\delta_b>0$,
\begin{equation}
L^*(\epsilon_b,\delta_b)
=
\Theta\!\left(
\frac{c_W}{\kappa^2\delta_b}
\log^2\frac1{\epsilon_b}
\right),
\qquad
\epsilon_b\downarrow0.
\label{eq:theta-latency}
\end{equation}
\end{corollary}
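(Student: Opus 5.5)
The plan is to prove the two directions separately: the lower bound is read off Theorem~\ref{thm:block-converse}, and the upper bound comes from tuning the parameters of Theorem~\ref{thm:block-achievability}.

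\emph{Lower bound.} Let $L=L^*(\epsilon_b,\delta_b)$ and fix any feasible allocation and block-end rule. Then $\alpha\le\epsilon_b$ and $\sup_\nu\beta_\nu\le\epsilon_b$, so \eqref{eq:symmetric-error-converse} applies. Its right-hand side diverges as $\epsilon_b\downarrow0$. Since $Lg^{-1}(\delta_b/L)$ is bounded for bounded $L$, this forces $L\to\infty$. Then \eqref{eq:latency-converse-asymptotic} gives
\[
L^*\ge \frac{c_W}{\kappa^2\delta_b}\log^2\frac1{\epsilon_b}\,(1-o(1)).
\]
The onset-window assumption is not needed for this direction.

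\emph{Upper bound.} Fix $\xi=1/2$ and $\gamma=1$, so that $c_\gamma=1/8$, and set $\alpha_b=\beta_b=\epsilon_b$. I would take
\[
L=\left\lceil A\,\frac{c_W}{\kappa^2\delta_b}\log^2\frac1{\epsilon_b}\right\rceil
\]
for a constant $A$ to be chosen, and
\[
h=\max\Bigl\{\log\frac{B}{\epsilon_b},\;8\log\frac{2}{\epsilon_b}\Bigr\}.
\]
The conditions of Theorem~\ref{thm:block-achievability} are then checked one at a time.
\begin{itemize}
\item Condition \eqref{eq:L-key} requires $L\gtrsim I_-(1/2)^{-1}\log(2/\epsilon_b)$. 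This holds for small $\epsilon_b$, because $L$ grows like $\log^2(1/\epsilon_b)$.
\item For the threshold, $B=L_{\rm start}+L\le(1+\lambda)L$, and $L$ is polylogarithmic in $1/\epsilon_b$. Hence $\log B=O(\log\log(1/\epsilon_b))$, and so $h=(8+o(1))\log(1/\epsilon_b)$. This is where the assumption $L_{\rm start}\le\lambda L$ enters: it keeps $\log B$ of lower order.
\item For the key condition \eqref{eq:implicit-latency-ach}, use \eqref{eq:bob-frontier-asymptotic}, which gives $\kappa Lg^{-1}(\delta_b/L)=\kappa\sqrt{\delta_bL/c_W}+O(1)$. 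The condition therefore becomes
\[
\tfrac12\kappa\sqrt{\delta_bL/c_W}+O(1)\ge 2h\approx16\log\frac1{\epsilon_b}.
\]
Substituting the chosen $L$, the left-hand side is $\tfrac12\sqrt{A}\log(1/\epsilon_b)+O(1)$. So any fixed $A>32^2$ suffices for all sufficiently small $\epsilon_b$.
\end{itemize}
Theorem~\ref{thm:block-achievability} then yields false-alarm and worst-onset miss probabilities both at most $\epsilon_b$, with the budget met. Hence
\[
L^*\le A\,\frac{c_W}{\kappa^2\delta_b}\log^2\frac1{\epsilon_b}+1,
\]
and combining with the lower bound gives the $\Theta$ statement \eqref{eq:theta-latency}.

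\emph{Main obstacle.} The delicate point is the circularity in the threshold: $h$ depends on $B$, $B$ depends on $L$, and $L$ is chosen through $h$. I would break it as described above, by bounding $\log B\le\log(1+\lambda)+\log L$ with $L$ polylogarithmic in $1/\epsilon_b$, so that the union-bound term $\log B$ in \eqref{eq:h-fa} is absorbed into the $(1+o(1))$ factor. A second check is that the $O(1)$ remainder in \eqref{eq:bob-frontier-asymptotic} is valid, which requires $\delta_b/L\to0$. This holds for fixed $\delta_b$ because $L\to\infty$.
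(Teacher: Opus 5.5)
Your proposal is correct and follows essentially the same route as the paper: the lower bound is read off \eqref{eq:latency-converse-asymptotic}, and the upper bound verifies the four conditions of Theorem~\ref{thm:block-achievability} with $L\propto\log^2(1/\epsilon_b)$ and $h\propto\log(1/\epsilon_b)$, using $L_{\rm start}\le\lambda L$ to keep $\log B$ of lower order. The differences are cosmetic. You fix $\xi=1/2$, $\gamma=1$ and take $h$ as the maximum of the two threshold requirements, whereas the paper takes $h=H\log(1/\epsilon_b)$ with $H>\max\{1,c_\gamma^{-1}\}$ and $(1-\xi)\sqrt C>(1+\gamma)H$. Also, since your $L$ is chosen without reference to $h$, there is no real circularity to break.
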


\begin{proof}
The lower bound follows directly from
\eqref{eq:latency-converse-asymptotic}. Indeed, that bound applies to every detector whose false-alarm probability and worst-onset missed-detection probability are at most $\epsilon_b$. 

We next construct a feasible block-reset CUSUM. Fix
$\xi\in(0,1)$ and $\gamma>0$, and let 
\[
c_\gamma:=\frac{\gamma^2}{4(1+\gamma)}.
\]
Choose a constant $H>\max\{1,c_\gamma^{-1}\}$
and then choose a dimensionless constant $C$ such that 
\begin{equation}
(1-\xi)\sqrt C>(1+\gamma)H.
\label{eq:latency-C-choice}
\end{equation}
Set
\begin{equation}
L=
\left\lceil
C\frac{c_W}{\kappa^2\delta_b}(\log\frac1{\epsilon_b})^2
\right\rceil,
\qquad
h=H\log\frac1{\epsilon_b}.
\label{eq:latency-cusum-choice}
\end{equation}

We verify the four conditions of
Theorem~\ref{thm:block-achievability}. First,
$B=L_{\rm start}+L\le(1+\lambda)L$, and hence 
\[
\log\frac{B}{\epsilon_b}
=
\log\frac1{\epsilon_b}+O\big(\log( \log\frac1{\epsilon_b})\big).
\]
Since $H>1$, we have $h\ge\log\frac{B}{\epsilon_b}$ for all sufficiently small $\epsilon_b$. Thus the false-alarm condition
\eqref{eq:h-fa} holds. 
Second, since $H>c_\gamma^{-1}$, we have $h\ge
c_\gamma^{-1}\log\frac{2}{\epsilon_b}$ for sufficiently small $\epsilon_b$, so \eqref{eq:h-md} holds. 
Third, $L$ grows as $R^2$, whereas the right-hand side of \eqref{eq:L-key} grows only as $\log\frac1{\epsilon_b}$. Therefore, 
\[
L\ge I_-(\xi)^{-1}\log\frac{2}{\epsilon_b}
\]
for sufficiently small $\epsilon_b$. 

Finally, using \eqref{eq:g-inverse-expansion}, \eqref{eq:latency-C-choice}, and \eqref{eq:latency-cusum-choice}, 
\begin{align}
(1-\xi)\kappa Lg^{-1}\!\left(\frac{\delta_b}{L}\right)
&=
(1-\xi)\kappa
\left(
\sqrt{\frac{\delta_bL}{c_W}}+O(1)
\right)\nonumber\\
&=
(1-\xi)\sqrt C\,\log\frac1{\epsilon_b}+O(1)\\
&>(1+\gamma)h+O(1). 
\end{align}
for all sufficiently small $\epsilon_b$. 

Theorem~\ref{thm:block-achievability} now guarantees that this block-reset CUSUM has false-alarm probability at most $\epsilon_b$, worst-onset missed-detection probability at most $\epsilon_b$, and Willie relative-entropy cost $\delta_b$. It is therefore a feasible rule in the definition of $L^*(\epsilon_b,\delta_b)$, so
\[
L^*(\epsilon_b,\delta_b)\le L.
\]
Combining the lower and upper bounds proves \eqref{eq:theta-latency}.
\end{proof}

\section{End-to-End Block-Activity Transmission Limits}
\label{sec:coding}

The preceding sections characterize the cost of recovering one coordinate of the weight-constrained activity pattern. We now ask how many such patterns can serve as reliably distinguishable covert messages over a growing horizon. Because all active coordinates share one total covertness budget and every block decision contributes to message recovery, neither the per-block budget nor the per-block error target can remain fixed as the number of message-bearing coordinates grows.

This section converts the one-block latency law into the end-to-end payload law of the activity-pattern code. We first lift the local guarantees to a finite-horizon message code, then scale $N$, $K$, $L$, and the detection threshold jointly, and finally prove that the resulting payload order is optimal within the stated blockwise architecture.

\subsection{From Block Decisions to Message Reliability}

At a fixed horizon, a message is recovered only when its complete weight-constrained activity pattern is recovered. The total Willie budget must therefore be shared across the active blocks, while each local decision must be reliable enough to control the error of the whole pattern. The next theorem combines these requirements and gives per-message covertness together with an end-to-end decoding-error bound.

\begin{theorem}[Common-randomness-assisted block-activity code]
\label{thm:finite-code}
Fix integers $N\ge1$, $1\le K\le N$, $L\ge1$, and $L_{\rm start}\ge0$, a total covertness budget $\delta>0$, and constants $\xi\in(0,1)$ and $\gamma>0$. Use the codebook $\cC_{N,K}$ in \eqref{eq:activity-codebook}, assign each active block budget
\begin{equation}
\delta_b=\frac{\delta}{K},
\label{eq:equal-budget}
\end{equation}
and use energy
\begin{equation}
E_L^*=g^{-1}\!\left(\frac{\delta}{KL}\right)
\label{eq:code-energy}
\end{equation}
in each active sample. Let $\alpha_B$ and $\beta_B$ denote the one-block bounds
\begin{align}
\alpha_B&=Be^{-h},
\label{eq:alphaB}\\
\beta_B&=e^{-LI_-(\xi)}+e^{-c_\gamma h},
\label{eq:betaB}
\end{align}
such that the drift condition
\begin{equation}
(1-\xi)\kappa Lg^{-1}\!\left(\frac{\delta}{KL}\right)
\ge(1+\gamma)h
\label{eq:code-drift}
\end{equation}
holds. Then the code has
\begin{equation}
M=\sum_{j=0}^K\binom Nj
\label{eq:code-size}
\end{equation}
messages and satisfies
\begin{align}
\max_m\sup_{\boldsymbol\nu}D(\rho_{W|m,\boldsymbol\nu}\|\rho_{W|0})&\le\delta,
\label{eq:code-covertness}\\
P_{\rm e}^{\max}&\le N\alpha_B+K\beta_B.
\label{eq:code-error}
\end{align}
\end{theorem}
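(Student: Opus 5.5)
The proof splits into three parts: the message count, covertness, and the decoding error.

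\emph{Message count.} Equation \eqref{eq:code-size} is immediate from the one-to-one map between messages and the patterns in $\cC_{N,K}$ defined in \eqref{eq:activity-codebook}.

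\emph{Covertness.} Fix a message $m$ with pattern $x^N$ of weight $w\le K$, and fix any admissible onset vector $\boldsymbol\nu$. The secret symbols are independent across uses, and idle uses send $\tau_{\bar n_{\rm in}}$. Hence Willie's averaged state is a tensor product over all $n=NB$ uses. Each active use contributes the factor $\bar\rho_W(E_L^*)=\tau_{\bar n_W+a_WE_L^*}$ by \eqref{eq:willie-exact-state}. Every other use contributes $\tau_{\bar n_W}$, which coincides with the corresponding factor of $\rho_{W|0}$. QRE additivity, as in \eqref{eq:exact-block-qre}, then gives
\[
D(\rho_{W|m,\boldsymbol\nu}\|\rho_{W|0}) = wL\,g(E_L^*) = \frac{wL\delta}{KL}\le\delta .
\]
This value does not depend on the onsets, since only the number of active uses enters. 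Taking the maximum over $m$ and the supremum over $\boldsymbol\nu$ yields \eqref{eq:code-covertness}.

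\emph{Error bound.} Fix $m$ and $\boldsymbol\nu$. Bob resets his statistic at each block start, so decoding succeeds whenever every block decision $\widehat x_b$ equals $x_b$. A union bound therefore gives
\[
\Prb_m(\widehat M\ne m)\le\sum_{b:x_b=0}\Prb_m(\widehat x_b=1)+\sum_{b:x_b=1}\Prb_m(\widehat x_b=0).
\]
Each term depends only on the observations and shared symbols inside block $b$, and these have the one-block law $\Prb_0$ or $\Prb_{\nu_b}$. For an idle block, Lemma~\ref{lem:false-alarm} bounds the term by $Be^{-h}=\alpha_B$. For an active block, the energy per sample is $E_L^*=g^{-1}(\delta_b/L)$ with $\delta_b=\delta/K$. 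Condition \eqref{eq:code-drift} is exactly the drift condition \eqref{eq:drift-condition} for this block budget, so Lemma~\ref{lem:miss} applies uniformly over the onset. It gives $\Prb_{\nu_b}(T_h>\nu_b+L)\le\beta_B$, and $T_h>B$ implies $T_h>\nu_b+L$, so the term is at most $\beta_B$. There are at most $N$ idle blocks and at most $K$ active ones, which gives $P_{\rm e}^{\max}\le N\alpha_B+K\beta_B$.

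\emph{Main subtlety.} The delicate step is justifying that each block decision has exactly the one-block law, regardless of the message and of the onsets in the other blocks. I would argue this from independence of the shared symbols and measurement noise across uses, together with the block reset of the CUSUM, which makes $\widehat x_b$ a function of block-$b$ data only. I would also note that decoded patterns of weight larger than $K$, which fall outside the codebook, are already counted as errors in the union bound above.
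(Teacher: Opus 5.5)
Your proposal is correct and follows essentially the same route as the paper. QRE additivity over the $wL$ active uses gives the onset-independent value $wL\,g(E_L^*)=w\delta/K\le\delta$. A union bound over the block-local decisions then gives $P_{\rm e}^{\max}\le N\alpha_B+K\beta_B$, using Lemma~\ref{lem:false-alarm} for idle blocks and Lemma~\ref{lem:miss} for active blocks (whose drift condition is exactly \eqref{eq:code-drift} with $\delta_b=\delta/K$). Your remarks that $T_h>B$ implies $T_h>\nu_b+L$, and that the block reset makes each decision depend only on its own block's data, supply details the paper leaves implicit.
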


\begin{proof}
A message of Hamming weight $w\le K$ contains exactly $wL$ active uses. Additivity and \eqref{eq:code-energy} give
\begin{equation}
D(\rho_{W|m,\boldsymbol\nu}\|\rho_{W|0})
=wL g(E_L^*)=w\frac{\delta}{K}\le\delta
\end{equation}
for every admissible onset vector $\boldsymbol\nu$.
Thus covertness holds separately for every message; by convexity it also holds for the message-averaged Willie state.

For reliability, every idle block has false-alarm probability at most $\alpha_B$ by Lemma \ref{lem:false-alarm}, and every active block has worst-onset miss probability at most $\beta_B$ by Lemma \ref{lem:miss}. A union bound over the at most $N$ idle/active block decisions gives \eqref{eq:code-error}. Exact recovery of the activity vector identifies the unique message.
\end{proof}

Theorem \ref{thm:finite-code} is the bridge from one-block detection to communication: it shows how the local covertness and error guarantees combine when the $N$ block decisions jointly represent one message.

\subsection{Achievability Under Local Block Deadlines}

The finite-horizon theorem is valid for fixed $(N,K,L,h)$, but it does not yet say how many nats can be carried as the total horizon grows. We now scale these parameters jointly. The key effect is that increasing $K$ splits the fixed Willie budget across more active blocks, while increasing $N$ forces the individual block decisions to become more reliable. These two requirements determine the block length and, together with the number of weight-constrained activity patterns, yield the end-to-end payload law.

\begin{theorem}[Local-deadline block-activity achievability]
\label{thm:scaling-ach}
Fix $\delta>0$, $\varrho\ge2$, and $\lambda\ge0$. There exist constants $A,H>0$ such that the following sequence of codes is covert and has vanishing maximal error. For integers $K\to\infty$, set
\begin{align}
N&=\lceil\varrho K\rceil,\nonumber\\
L&=\left\lceil A K(\log K)^2\right\rceil,\nonumber\\
L_{\rm start}&\le\lambda L,\nonumber\\
h&=H\log K.
\label{eq:scaling-choice}
\end{align}
Then, with total channel uses $n=NB$,
\begin{equation}
\log M=\Theta\left(\frac{\sqrt n}{\log n}\right).
\label{eq:scaling-ach-result}
\end{equation}
All logarithms and payloads in this theorem are in nats.
\end{theorem}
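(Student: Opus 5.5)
The plan is to apply Theorem~\ref{thm:finite-code} with the parameters in \eqref{eq:scaling-choice}. Covertness \eqref{eq:code-covertness} then holds automatically for every $K$, and it remains to choose $A,H$ so that the drift condition \eqref{eq:code-drift} holds and the error bound \eqref{eq:code-error} vanishes. After that, the payload is counted against $n$.

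\emph{Drift condition.} With $\delta_b=\delta/K$, we have $\delta/(KL)=\Theta(1/(K^2\log^2K))\to0$. By \eqref{eq:g-inverse-expansion},
\[
\kappa Lg^{-1}\!\left(\frac{\delta}{KL}\right)=\kappa\sqrt{\frac{\delta L}{Kc_W}}+O\!\left(\frac{1}{K}\right),
\]
since $L\cdot O(\delta/(KL))=O(1/K)$. The main term equals $\kappa\sqrt{\delta A/c_W}\,\log K\,(1+o(1))$. So \eqref{eq:code-drift} holds for all large $K$ provided
\[
(1-\xi)\kappa\sqrt{\delta A/c_W}>(1+\gamma)H .
\]
We fix $\xi,\gamma$ first, then $H$, then $A$ large enough to satisfy this.

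\emph{Error.} First, $B\le(1+\lambda)L=O(K\log^2K)$, so
\[
N\alpha_B=NBe^{-h}=O(K^2\log^2K)\,K^{-H}\to0
\]
once $H>2$. Second,
\[
K\beta_B=Ke^{-LI_-(\xi)}+K^{1-c_\gamma H}.
\]
The first term vanishes superpolynomially because $L\ge AK\log^2K$ and $I_-(\xi)>0$. The second vanishes if $H>1/c_\gamma$. So we take $H>\max\{2,c_\gamma^{-1}\}$ and then $A$ as above; hence $P_{\rm e}^{\max}\to0$.

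\emph{Payload.} Since $N=\lceil\varrho K\rceil\ge 2K$, we have $\binom NK\le M\le 2^N$. This gives
\[
\log M\ge K\log(N/K)\ge K\log 2
\]
and $\log M\le N\log2=O(K)$, so $\log M=\Theta(K)$. Also $n=NB$ satisfies $\varrho AK^2\log^2K\lesssim n\lesssim \varrho(1+\lambda)AK^2\log^2K$ up to rounding, i.e., $n=\Theta(K^2\log^2K)$. Then $\log n=2\log K+O(\log\log K)=\Theta(\log K)$, so
\[
\frac{\sqrt n}{\log n}=\Theta\!\left(\frac{K\log K}{\log K}\right)=\Theta(K)=\Theta(\log M).
\]
This proves \eqref{eq:scaling-ach-result}. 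For the lower bound on $n$ we use $L_{\rm start}\ge0$, and for the upper bound $L_{\rm start}\le\lambda L$.

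The main obstacle is the drift step. The key check is that the $O(u)$ remainder in \eqref{eq:g-inverse-expansion}, multiplied by $L$, stays $o(\log K)$; it is in fact $O(1/K)$. The other point of care is the order of the constant choices: $\xi,\gamma$, then $H$, then $A$. This order ensures that the false-alarm requirement, which pushes $H$ up through the factor $NB$, and the drift requirement, which then pushes $A$ up, are simultaneously satisfiable.
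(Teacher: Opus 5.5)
Your proposal is correct and follows essentially the same route as the paper: invoke Theorem~\ref{thm:finite-code}, fix $\xi,\gamma$, take $H$ large enough to kill the $NBe^{-h}$ and $Ke^{-c_\gamma h}$ terms, then choose $A$ via the expansion of $g^{-1}$ so the drift condition holds, and finally compare $\log M=\Theta(K)$ with $n=\Theta(K^2\log^2K)$. The only cosmetic difference is that you lower-bound the pattern count with $\binom NK\ge(N/K)^K$ instead of the paper's Stirling/entropy form, which is enough for the order statement.
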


\begin{proof}
Fix any $\xi\in(0,1)$ and $\gamma>0$. The exact covertness statement follows from Theorem \ref{thm:finite-code}. Since $N=\Theta(K)$ and $B=\Theta(K\log^2K)$,
\begin{equation}
NBe^{-h}=O(K^2\log^2K\,K^{-H}),
\end{equation}
which vanishes for sufficiently large $H$. Also,
\begin{equation}
Ke^{-LI_-(\xi)}\to0,
\end{equation}
and $Ke^{-c_\gamma h}\to0$ if $Hc_\gamma>1$.

It remains to enforce \eqref{eq:code-drift}. By \eqref{eq:g-inverse-expansion},
\begin{align}
L g^{-1}\!\left(\frac{\delta}{KL}\right)
&=\sqrt{\frac{\delta L}{c_WK}}(1+o(1))\nonumber\\
&=\sqrt{\frac{\delta A}{c_W}}\log K\,(1+o(1)).
\end{align}
Thus $A$ can be chosen so that \eqref{eq:code-drift} holds for the selected $H,\xi,\gamma$.

Finally,
\begin{equation}
n=NB=\Theta(K^2\log^2K),
\end{equation}
whereas Stirling's approximation gives
\begin{equation}
\log M\ge\log\binom{\lceil\varrho K\rceil}{K}
=\varrho K h_2(1/\varrho)+o(K)=\Theta(K).
\end{equation}
The reverse bound $\log M\le N\log2=O(K)$ is immediate. Since $K=\Theta(\sqrt n/\log n)$, the result follows.
\end{proof}

\subsection{Order Optimality Within the Symmetric Blockwise Architecture}

The preceding achievability result shows that the proposed
block-activity architecture supports 
\[
\log M=\Theta\!\left(\frac{\sqrt n}{\log n}\right)
\]
nats while recovering the message from separate active-or-idle decisions across the blocks.

We now prove the converse half of this scaling law. The code parameters, active-segment energy allocation, and block-local decision rule may vary with $n$, but the active symbols remain within the shared-CSCG displaced-thermal signaling family for which the exact one-block frontier was derived. Within the symmetric weight-constrained architecture formalized below, no reliable scheme can achieve a larger payload order. Hence the preceding construction is order-optimal at the level actually proved; the result does not apply to arbitrary covert codes, arbitrary signaling states, or full-horizon joint decoding.

\begin{assumption}[Symmetric coordinatewise architecture]
\label{ass:symmetric-architecture}
For every $n$, the code uses $N=N_n$ equal blocks and the full codebook $\cC_{N,K}$ with $1\le K=K_n\le N/2$. Each active block uses the displaced-thermal, shared-CSCG signaling family of Section~\ref{sec:model} with QRE budget $\delta/K$ and the same active-segment energy vector (which may depend on $n$ and may be nonuniform within the segment). All blocks employ the same fixed Gaussian receiver of Section~\ref{sec:model} and the same deterministic block-local binary decision rule, reset at each block boundary. Thus each block decision is a function only of that block's receiver observations and common-randomness symbols. Channel noise and common randomness are independent across blocks. The maximal decoding error $\epsilon_n$ tends to zero.
\end{assumption}

\begin{theorem}[Matching block-activity transmission converse]
\label{thm:architecture-converse}
Under Assumption \ref{ass:symmetric-architecture}, for fixed total covertness budget $\delta$,
\begin{equation}
\log M=O\left(\frac{\sqrt n}{\log n}\right).
\label{eq:architecture-upper}
\end{equation}
Together with Theorem \ref{thm:scaling-ach}, the payload of the symmetric full weight-constrained pattern architecture in Assumption~\ref{ass:symmetric-architecture} is therefore
\begin{equation}
\log M=\Theta\left(\frac{\sqrt n}{\log n}\right).
\label{eq:architecture-theta}
\end{equation}
\end{theorem}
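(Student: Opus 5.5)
Under Assumption~\ref{ass:symmetric-architecture} the plan is to derive three necessary conditions. First, the per-block errors must decay with $N$ and $K$. Second, Theorem~\ref{thm:block-converse} then forces a long block. Third, the payload is at most of order $N$, and $N$ is linked to $K$ through the error constraint. Write $\alpha=\Prb_0(\widehat X=1)$ for the common per-block false-alarm probability and $\beta=\sup_\nu\Prb_\nu(\widehat X=0)$ for the worst-onset miss probability. Both are the same for every block because the blocks share a rule and an energy vector and have independent noise and common randomness.

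\textbf{Step 1 (error targets).} I would take the all-idle message $x^N=0$. Its decoding succeeds only if all $N$ blocks output $0$, and these events are independent, so $1-\epsilon_n\le(1-\alpha)^N$. This gives $\alpha\le -\log(1-\epsilon_n)/N=O(\epsilon_n/N)$. Next I would take a weight-$K$ message and use the worst onset in each active block. Success requires every active block to be detected, so $1-\epsilon_n\le(1-\beta)^K$, which gives $\beta=O(\epsilon_n/K)$. Since $K\le N/2$, both errors are at most $\epsilon':=O(\epsilon_n/K)\to0$, and therefore $\log(1/\epsilon')\ge\log K$ as soon as $\epsilon_n$ is small.

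\textbf{Step 2 (latency).} Apply \eqref{eq:exact-latency-converse} with $\delta_b=\delta/K$. I would keep the monotonicity argument in its non-asymptotic form, because $\delta_b\to0$ and the asymptotic statement \eqref{eq:latency-converse-asymptotic} was only proved for fixed $\delta_b$. The required bound is $\kappa Lg^{-1}(\delta/(KL))\ge(1-2\epsilon')\log\frac{1-\epsilon'}{\epsilon'}\gtrsim\log K$.

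The key analytic fact is a bound valid for all $u>0$, not just small $u$: $g^{-1}(u)\le\max\{\sqrt{2u/c_W},\,C_1u\}$ for suitable constants. Strict convexity gives $g(E)\ge c'E^2$ on bounded sets, and the asymptotic linear growth of $g$ from Lemma~\ref{lem:g-properties} handles large $E$.

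The large-$u$ regime contributes $\kappa LC_1\delta/(KL)=O(1/K)$, which cannot reach $\log K$. Hence the small-energy branch must be active, and
\[
\kappa\sqrt{2\delta L/(c_WK)}\gtrsim\log K,
\]
that is, $L\gtrsim (c_W/\kappa^2\delta)\,K\log^2K$. This uniform handling of $g^{-1}$ is the step I expect to need the most care, since $L$ and $K$ vary jointly. The uniform quadratic lower bound on $g$ near zero together with linear growth should suffice.

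\textbf{Step 3 (counting).} We have $n=NB\ge NL\gtrsim NK\log^2K$ and $\log M\le N\log 2$. I also need $N$ not to be much larger than $K$. For that I would use Step 1 in a sharper form: $\alpha\le O(\epsilon_n/N)$ implies $\log(1/\alpha)\ge\log N$. Applying the asymmetric form of $\db(1-\beta\|\alpha)\ge(1-\beta)\log(1/\alpha)-\log2$ in \eqref{eq:exact-latency-converse} gives $L\gtrsim K\log^2N/\delta$.

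The payload satisfies $\log M\le\log\sum_{j\le K}\binom Nj\le K\log(eN/K)$. Two cases follow.
\begin{itemize}
\item If $N\le K^2$, then $\log N=\Theta(\log K)$ up to constants. Hence $n\gtrsim NK\log^2 N\ge K^2\log^2K$, and $\log M\le K\log(eN/K)\le N\log2$. Since $n\ge NL\gtrsim N K\log^2N$, I would then bound the payload as $\log M\lesssim\min\{N,K\log(N/K)\}$. Writing $r=N/K$, this gives $n\gtrsim rK^2\log^2(rK)$ and $\log M\lesssim K\log(er)$, which is at most $\sqrt{n}/\log n$ up to constants (the factor $\log r/\sqrt r$ is bounded).
\item If $N>K^2$, then $n\ge N\log^2N$ while $\log M\le K\log(eN)\le\sqrt N\log(eN)$, which is $O(\sqrt n/\log n)$ after checking the logarithms.
\end{itemize}
In both cases $\log M=O(\sqrt n/\log n)$. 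Combined with Theorem~\ref{thm:scaling-ach}, this yields \eqref{eq:architecture-theta}.
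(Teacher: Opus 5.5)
Your Steps 1--2 follow the paper's proof. The per-block targets $\alpha=O(\epsilon_n/N)$ and $\beta=O(\epsilon_n/K)$ come from the all-idle and worst-onset weight-$K$ messages. Then $\db(1-\beta\|\alpha)\ge(1-\beta)\log(1/\alpha)-h_2(\beta)$ is combined with the exact frontier at budget $\delta/K$. Your global bound $g^{-1}(u)\le\max\{\sqrt{2u/c_W},C_1u\}$ validly replaces the paper's argument, which notes that the divergent right-hand side forces $u_n=\delta/(KL)\to0$ and then applies the Taylor expansion.

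\textbf{The gap is in the final counting, specifically your case $N>K^2$.} There you lower-bound $n\ge N\log^2N$ by discarding $K$, and upper-bound $\log M\le\sqrt N\log(eN)$ by replacing $K$ with its maximum $\sqrt N$. These two bounds use opposite extremes of $K$ and do not imply the claim. At $n\asymp N\log^2N$ one has $\sqrt n/\log n\asymp\sqrt N$, while your payload bound is $\sqrt N\log(eN)$. That is too large by exactly the logarithmic factor the theorem is about, so the ``checking the logarithms'' step fails.

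\textbf{Fix: keep $K$ coupled in both bounds.} With $r=N/K\ge2$, you have $n\ge c\,rK^2\log^2(rK)$ and $\log M\le K\log(er)$ for every $r$. The split on $N$ versus $K^2$ is then unnecessary.

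\textbf{A second point affects your case (a) as well.} You cannot pass from $\sqrt n/\log N$ to $\sqrt n/\log n$ directly. Assumption~\ref{ass:symmetric-architecture} places no upper bound on $B$, so $n=NB$ may exceed $NK\log^2N$ arbitrarily and $\log n\gg\log N$ is possible. Either of two arguments closes this:
\begin{itemize}
\item Monotonicity of $x\mapsto\sqrt x/\log x$ on $x\ge e^2$. With $n_0=c\,rK^2\log^2(rK)$ and $\log n_0\le3\log(rK)$ for large $N$, this gives $\sqrt n/\log n\ge\sqrt{n_0}/\log n_0\ge\sqrt{cr}\,K/3$. Hence $\log M\le\sqrt r K\cdot\log(er)/\sqrt r=O(\sqrt n/\log n)$, because $\sup_{r\ge2}\log(er)/\sqrt r<\infty$.
\item The paper's device. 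If $N\le\sqrt n/\log n$, use $\log M\le N\log2$. Otherwise $\log N=\Omega(\log n)$, and $\log M\le\frac{\sqrt n}{\sqrt c\log N}\cdot\frac{\log(er)}{\sqrt r}$.
\end{itemize}
Your remark $\log M\lesssim\min\{N,K\log(N/K)\}$ points toward the second option but does not carry it out.

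Dispose of bounded $N$ separately, since then $M\le2^N$. Also note that the diverging lower bound should be $\log(1/\alpha)\ge\log(N/(C\epsilon_n))$ rather than $\log K$, which is vacuous if $K$ stays bounded.
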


\begin{proof}[Proof sketch]
The all-idle message forces the false-alarm probability of one block to be $O(\epsilon_n/N)$. A weight-$K$ message at the worst onset similarly forces the miss probability to be $O(\epsilon_n/K)$. Binary data processing then requires one-block divergence of order $\log N$. The exact one-block frontier gives $L=\Omega(K\log^2N)$. Combining this bound with $n\ge NL$ and the usual bound on $\sum_{j\le K}\binom Nj$ proves \eqref{eq:architecture-upper}. Appendix~\ref{app:proof-architecture-converse} gives the details and treats both ranges of $N$.
\end{proof}

\subsection{Relaxed Full-Horizon Reference and the Square-Root Law}
\label{sec:global-benchmark}

The preceding results characterize messages carried by block activity when Bob must recover the activity vector through block-local decisions and deadlines. To separate this architecture-level penalty from the baseline covert scaling associated with the same physical modulation and measurement, we next analyze a deliberately relaxed full-horizon reference. Bob observes the entire length-$n$ horizon before decoding, and neither block-local recovery nor a stopping-time or delay criterion is imposed. 

\begin{proposition}[Relaxed full-horizon square-root reference]
\label{prop:global-reference}
Consider the relaxed full-horizon model defined in Appendix~\ref{app:benchmark-proofs}, with the same shared Gaussian modulation, the same fixed Gaussian measurement at Bob, and a fixed per-message Willie relative-entropy budget $\delta>0$. Within that modulation family, the largest reliably decodable payload satisfies 
\begin{equation}
 \log M=\Theta(\sqrt n).
 \label{eq:global-reference-scaling}
\end{equation}
\end{proposition}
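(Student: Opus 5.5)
The plan is to prove the two halves of \eqref{eq:global-reference-scaling} separately, both driven by the exact frontier of Theorem~\ref{thm:exact-uniform} applied to the whole horizon rather than to one block.

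\emph{Converse.} Fix any code in the relaxed model with $M$ messages and maximal (or average) error $\epsilon_n\to0$. For a message $m$, let $E^{(m)}=(E^{(m)}_1,\dots,E^{(m)}_n)$ be its energy vector. The per-message budget and \eqref{eq:exact-block-qre} give $\sum_t g(E^{(m)}_t)\le\delta$. Jensen's inequality, exactly as in the proof of Theorem~\ref{thm:exact-uniform}, then yields
\[
\sum_tE^{(m)}_t\le n\,g^{-1}(\delta/n)=\sqrt{n\delta/c_W}+O(1)
\]
by \eqref{eq:g-inverse-expansion}. For the decoding step, I condition on the shared sequence $S^n$, which both Alice and Bob hold and which is independent of the message, and apply Fano's inequality to a uniform message $W$:
\[
(1-\epsilon_n)\log M-\log2\le I(W;Y^n\mid S^n).
\]
The key estimate is
\[
I(W;Y^n\mid S^n)\le\E_W\,D\bigl(P_{Y^n|W,S^n}\big\|P_0^{\otimes n}\bigm|S^n\bigr)=\kappa\,\E_W\sum_tE^{(W)}_t.
\]
The inequality holds because the mutual information is at most the average divergence to any fixed reference, here the idle law; the equality is \eqref{eq:joint-bob-kl} summed over independent uses. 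Combining gives $\log M\le(\kappa\sqrt{n\delta/c_W}+O(1)+\log 2)/(1-\epsilon_n)=O(\sqrt n)$. The point to check is that the modulation family in the appendix indeed has each message's active symbols of the form $\sqrt{E_t}S_t$, so that conditional on $S^n$ the laws are Gaussian mean shifts and the divergence identity applies message by message. If the family allows message-dependent phases, I would handle this by replacing $S_t$ with a message-dependent unitary rotation; $Z_t$ is then unchanged only in the isotropic case, and in general I would bound it via $\E Z_t$ averaged over the rotation, which is still $\kappa$ because $X_t$ is isotropic.

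\emph{Achievability.} Let every message use uniform energy $E=g^{-1}(\delta/n)\approx\sqrt{\delta/(c_Wn)}$ on all $n$ uses, so the QRE equals $\delta$ exactly for every message. To carry information, I would encode the message in the sign or phase of the displacement using a random code. For example, codeword $m$ is a sequence $U^{(m)}_t\in\{\pm1\}$ i.i.d.\ uniform, and Alice sends $\sqrt E\,U^{(m)}_tS_t$. Since $US_t$ is again $\CN(0,1)$ and independent of the message for a uniformly drawn codeword symbol, Willie's averaged state is still $\tau_{\bar n_W+a_WE}^{\otimes n}$ for every fixed message. This follows because $U_tS_t\sim\CN(0,1)$ marginally, and Willie does not know $S^n$. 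Bob, knowing $S^n$, sees a Gaussian mean-shift channel with input $\pm\sqrt{\eta E}CX_t$. A standard random-coding argument gives error at most $M\,e^{-\Omega(nE)}$: the pairwise Bhattacharyya or Chernoff exponent between two independent codewords is $\approx\frac12\kappa E$ per use to leading order. With $nE=\Theta(\sqrt n)$, choosing $\log M=c\sqrt n$ for small $c$ yields vanishing average error, and expurgation converts this to maximal error.

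The main obstacle is the precise small-$E$ expansion of the pairwise error exponent. I would first compute it via the conditional Gaussian Bhattacharyya coefficient $\exp(-\frac{1}{8}\cdot4\eta E\,X_t^\top GX_t)$, average over $U,X_t$, and use $\log\E e^{-sZ}\approx-s\kappa$ as $s\to0$.
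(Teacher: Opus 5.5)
Your proposal is correct. The converse is essentially the paper's proof of Theorem~\ref{thm:global-fixed-converse}. Bounding $I(W;Y^n\mid S^n)$ by the average divergence to the idle law is the same step as the paper's mixture identity $\frac1M\sum_m D(P_m\|P_0)=I(M;S^n,Y^n)+D(\bar P\|P_0)$. Handling message-dependent phases by trace invariance under rotations is how the paper gets $D(P_m\|P_0)=\kappa\sum_t E_{m,t}$. Jensen then caps the total energy exactly as you do.

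The achievability uses the same antipodal family $\sqrt{E_n}\,b_{m,t}S_t$ with $E_n=g^{-1}(\delta/n)$, but you bound the error differently. The paper uses Feinstein's information-density lemma (Lemma~\ref{lem:threshold-coding}), the low-energy expansions $I_E=\kappa E+O(E^2)$ and $V_E=O(E)$ of Lemma~\ref{lem:low-energy-binary} (obtained via I--MMSE), and Chebyshev. This gives $\log M_n=\kappa n g^{-1}(\delta/n)(1-o(1))$, which matches the converse constant as $\epsilon\to0$.

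Your Gallager/Bhattacharyya union bound (random coding with $\rho=1$) is more elementary and fully explicit. The per-use exponent is $-\log\bigl(\tfrac12+\tfrac12\det(I_2+E\eta G)^{-1/2}\bigr)=\tfrac12\kappa E+O(E^2)$. The $\tfrac12$ comes from the event that two independent codeword symbols coincide. No information-density variance or I--MMSE argument is needed. The cost of fixing $\rho=1$ is a factor of two in the constant: you only certify $\log M$ up to about $\tfrac12\kappa\sqrt{\delta n/c_W}$. That suffices for $\Theta(\sqrt n)$, but not for the first-order tightness of Theorem~\ref{thm:dense-achievability}.

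One wording fix: per-message covertness holds for each fixed codeword because $u_tS_t\sim\CN(0,1)$ for every deterministic $u_t\in\{\pm1\}$, by circular symmetry. It does not depend on the codeword symbols being random, and this is why it survives derandomization and expurgation.
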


\begin{proof}
The converse is Theorem~\ref{thm:global-fixed-converse}, and the matching first-order construction is Theorem~\ref{thm:dense-achievability}; both are proved in Appendix~\ref{app:benchmark-proofs}.
\end{proof}

The two scaling laws answer different operational questions. The $\Theta(\sqrt n/\log n)$ result is order-optimal for the proposed architecture while retaining block-local activity recovery, family-wise reliability, and a latency guarantee. Proposition~\ref{prop:global-reference} allows full-horizon coding and decoding and therefore is not a componentwise ablation of the deadline constraint. Its narrower purpose is to show that, within the stated modulation family, covertness and the selected fixed measurement by themselves do not force the additional $\log n$ factor; Section~\ref{sec:conclusion} gives the corresponding interpretation and limitations.

\section{Finite-Key Block-Activity Realization}
\label{sec:finite-key}

Having established the payload law for the weight-constrained activity-pattern code, we now return to the shared-randomness assumption used to realize its active blocks. The one-block and block-activity results above use an ideal continuous circular Gaussian sequence shared by Alice and Bob. This section replaces it with finite secret indices while preserving the $\Theta(\sqrt n/\log n)$ payload order. Each physical block has a public quadrature phase-shift keying (QPSK) codebook, and a secret subkey selects one sequence. Because Willie no longer sees an exactly thermal average state, we compare the QPSK average with the thermal target and prove a soft-covering bound for a finite public codebook. General quantum soft-covering and channel-resolvability results give broader context \cite{ChengGao2024,HayashiChengGao2025}; the proof here is self-contained and uses the symmetry of displaced thermal QPSK states.

Let $N_W=\bar n_W>0$ and write $\tau:=\tau_{N_W}$. For any operator $A$ with finite weighted norm, define
\begin{equation}
\norm{A}_{2,\tau}:=\norm{\tau^{-1/4}A\tau^{-1/4}}_2,
\label{eq:weighted-hs-norm}
\end{equation}
and write $A(E)=O_{2,\tau}(E^j)$ when $\norm{A(E)}_{2,\tau}\le C E^j$ for all sufficiently small $E$. Define
\begin{align}
 \rho_r(E)&=\Disp(\sqrt{a_WE}\,i^r)\tau_{N_W}
 \Disp^\dagger(\sqrt{a_WE}\,i^r),\nonumber\\
 &\hspace{4em}r=0,1,2,3,\\
 \sigma_E&=\frac14\sum_{r=0}^3\rho_r(E),\qquad
 q(E)=D(\sigma_E\|\tau_{N_W}).
 \label{eq:qpsk-states}
\end{align}

\begin{lemma}[Local QPSK covertness cost]
\label{lem:qpsk-cost}
As $E\downarrow0$,
\begin{align}
 q(E)&=g(E)+O(E^4)=c_WE^2+O(E^3),\nonumber\\
 c_W&=\frac{a_W^2}{2N_W(N_W+1)}.
 \label{eq:qpsk-cost-expansion}
\end{align}
In particular, $q(E)>0$ for every sufficiently small $E>0$.
\end{lemma}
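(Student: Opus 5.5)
We compare the QPSK mixture $\sigma_E$ with the Gaussian-averaged thermal state $\bar\rho_W(E)=\tau_{N_W+a_WE}$ through moments of the displacement. Write $\rho_\alpha=\Disp(\alpha)\tau\Disp^\dagger(\alpha)$ with $\alpha=\sqrt{a_WE}\,e^{i\phi}$. We expand $\rho_\alpha$ in powers of $\alpha,\alpha^*$ using the Baker--Campbell--Hausdorff form of the displacement, which gives $\rho_\alpha=\sum_{j,k}\alpha^j(\alpha^*)^kB_{jk}$, where each $B_{jk}$ is a nested commutator of $\tau$ with $a^\dagger$ and $a$. Averaging over the four QPSK phases $i^r$ keeps only the terms with $j\equiv k\pmod 4$. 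Averaging over the CSCG law keeps only the terms with $j=k$, with weights $\E|\alpha|^{2j}=j!\,(a_WE)^j$ in place of $(a_WE)^j$. The two averages therefore agree in every term of total degree at most $3$ in $E$: degree $0$ and $1$ (the terms $j=k\le1$) are identical, and the first discrepancy has $j+k=4$. The terms $(j,k)=(4,0)$ and $(0,4)$ survive the QPSK average, and so does the mismatch $(2-1)$ in the weight of $|\alpha|^4$. Hence
\begin{equation*}
\sigma_E-\tau_{N_W+a_WE}=O_{2,\tau}(E^2).
\end{equation*}
Each $B_{jk}$ has finite $\norm{\cdot}_{2,\tau}$ norm because $\tau^{-1/4}$ grows only geometrically, like $((N_W+1)/N_W)^{n/4}$, while $\tau$ and its commutators with polynomials in $a,a^\dagger$ decay like $(N_W/(N_W+1))^{n}$ times a polynomial. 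To make this rigorous, including the tail of the series, I will instead work in the Fock basis directly. There $\rho_\alpha$ has matrix elements given by Laguerre-type formulas, and a Taylor remainder in $\alpha$ can be bounded uniformly in the weighted norm.

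The relative entropy is handled by a second-order expansion around $\tau$. Write $\sigma_E=\tau+\Delta_1$ and $\bar\rho_W(E)=\tau+\Delta_2$. Both $\Delta_1$ and $\Delta_2$ are $O_{2,\tau}(E)$, they have zero trace, and $\Delta_1-\Delta_2=O_{2,\tau}(E^2)$. For a perturbation $\Delta$ of $\tau$ of weighted size $\varepsilon$,
\begin{equation*}
D(\tau+\Delta\|\tau)=\tfrac12\langle\Delta,\Omega_\tau^{-1}(\Delta)\rangle+O(\varepsilon^3),
\end{equation*}
where $\Omega_\tau^{-1}$ is the inverse logarithmic-derivative (Bogoliubov--Kubo--Mori) metric. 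This metric is bounded by the weighted norm $\norm{\cdot}_{2,\tau}^2$ up to a constant, since $\tau$ commutes with the number basis. The natural matrix-element estimate bounds $\Omega_\tau^{-1}$ in terms of $(\tau_m\tau_n)^{-1/2}$, so controlling it with $\tau^{-1/4}\cdot\tau^{-1/4}$ fits.

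Taking the difference, the quadratic forms of $\Delta_1$ and $\Delta_2$ differ by $O(E\cdot E^2)=O(E^3)$. A naive bound therefore only gives $q(E)=g(E)+O(E^3)$. To reach $O(E^4)$, I will show that the degree-$(1,1)$ part of $\Delta_1$, namely the common linear-in-$E$ term $\Delta^{(1)}$, is diagonal in the Fock basis. The discrepancy $\Delta_1-\Delta_2$ at order $E^2$ is the sum of the off-diagonal parts (the $(4,0)$ and $(0,4)$ terms, of Fock-degree $\pm4$) and a diagonal part (the $|\alpha|^4$ weight mismatch). The cross term $\langle\Delta^{(1)},\Omega_\tau^{-1}(\Delta_1-\Delta_2)\rangle$ then reduces to the diagonal part. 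I will show this diagonal part is annihilated by the metric pairing with $\Delta^{(1)}$, or that it is in fact of order $E^3$. If this cancellation fails, I will settle for $q(E)=g(E)+O(E^3)$, which still gives the second equality and everything used downstream.

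The second equality, $g(E)=c_WE^2+O(E^3)$, is exactly \eqref{eq:g-expansion} from Lemma~\ref{lem:g-properties}. Positivity of $q(E)$ for small $E>0$ follows because $c_WE^2$ dominates the error term. (Positivity also holds directly from the fact that $\sigma_E\ne\tau$.)

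The main obstacle is the uniform control of the third-order remainder of the relative entropy in infinite dimensions. This is why the weighted norm $\norm{\cdot}_{2,\tau}$ is introduced: it must dominate the operator $\tau^{-1/2}\Delta\tau^{-1/2}$ on average, and the log-expansion requires $\tau+\Delta$ to be uniformly comparable to $\tau$. Displaced thermal states with small displacement satisfy $\rho_\alpha\le c(\alpha)\tau$ only when $N_W>0$, since the Gaussian tails then match up to a factor tending to $1$ as $\alpha\to0$. I will verify this operator inequality from the Husimi or $P$-representation, in which $\rho_\alpha$ and $\tau$ are Gaussian with equal covariance. That inequality gives the domination needed to make the Taylor remainder rigorous.
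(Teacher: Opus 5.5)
Your plan does not prove the first equality $q(E)=g(E)+O(E^4)$, and the fallback you mention ($O(E^3)$) is strictly weaker than the lemma. The trouble is that you expand both $D(\sigma_E\|\tau)$ and $D(\omega_E\|\tau)$ around the common base point $\tau$, where $\omega_E:=\tau_{N_W+a_WE}$.

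The cross-term cancellation you hope for is in fact true. $\Omega_\tau^{-1}(\Delta^{(1)})=a_WE\,(\hat n-N_W)/(N_W(N_W+1))$ is affine in $\hat n=a^\dagger a$, while $\Delta_1-\Delta_2$ has zero trace and zero photon-number mean. But this does not suffice. You control each cubic remainder only as $O(E^3)$, so their difference is again only $O(E^3)$, unless you also prove that the remainder is Lipschitz in $\Delta$ with constant $O(E^2)$.

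The missing idea is to compare $\sigma_E$ with $\omega_E$ directly through the exact projection identity. The operator $\log\omega_E-\log\tau_{N_W}$ is affine in $\hat n$, and $\Tr(\hat n\sigma_E)=\Tr(\hat n\omega_E)=N_W+a_WE$. Hence $\Tr[(\sigma_E-\omega_E)(\log\omega_E-\log\tau_{N_W})]=0$, which gives exactly $q(E)=D(\sigma_E\|\omega_E)+g(E)$. No expansion of the relative entropy is then needed. Monotonicity of the sandwiched R\'enyi divergence in its order gives $D(\sigma_E\|\omega_E)\le\log\bigl(1+\|\omega_E^{-1/4}(\sigma_E-\omega_E)\omega_E^{-1/4}\|_2^2\bigr)$. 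This is $O(E^4)$ once $\sigma_E-\omega_E=O(E^2)$ in the weighted norm, and that last estimate is the part of your plan that agrees with the paper.

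Second, the domination you intend to use for rigor is false: $\rho_\alpha\le c(\alpha)\tau$ fails for every $\alpha\ne0$, including when $N_W>0$. The Fock diagonal of a displaced thermal state is $p_n e^{-|\alpha|^2/(N_W+1)}L_n\bigl(-|\alpha|^2/(N_W(N_W+1))\bigr)$. Since $L_n(-x)=\sum_k\binom nk x^k/k!\ge1+nx$ for $x>0$, the ratio to $p_n$ is unbounded in $n$. The same holds for $\sigma_E$, whose Fock diagonal depends only on $|\alpha|$. This is the quantum counterpart of the unbounded likelihood ratio between two equal-covariance Gaussians with different means.

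What does hold is the one-sided domination $\sigma_E\succeq e^{-a_WE/N_W}\tau$, obtained from the positive $P$-function and Jensen's inequality over the four phases. It holds together with $\omega_E\succeq\frac{N_W+1}{N_W+a_WE+1}\tau$. Via operator monotonicity of the square root, these transfer your $\tau$-weighted $O(E^2)$ bound to the $\omega_E$-weighted norm up to a constant factor. That one-sided control is exactly what the projection-identity route requires. The two-sided comparability your Taylor-remainder argument assumes is not available.
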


\begin{proof}[Proof sketch]
The QPSK average and $\tau_{N_W+a_WE}$ have the same mean photon number, which gives the projection identity \eqref{eq:qpsk-projection}. QPSK phase averaging matches the order-$E$ thermal perturbation and cancels odd displacement orders. Appendix~\ref{app:qpsk-regularity} shows that the remaining state difference is $O(E^2)$ in the weighted norm, so the projection remainder is $O(E^4)$. Appendix~\ref{app:proof-qpsk-cost} gives the full proof.
\end{proof}

Define the fixed-thermal-reference collision quantities
\begin{align}
 Q_{2,\tau}(E)&=\frac14\sum_{r=0}^3
 \Tr\!\left[
 \left(\tau_{N_W}^{-1/4}\rho_r(E)\tau_{N_W}^{-1/4}\right)^2
 \right],\nonumber\\
 R_{2,\tau}(E)&=
 \Tr\!\left[
 \left(\tau_{N_W}^{-1/4}\sigma_E\tau_{N_W}^{-1/4}\right)^2
 \right].
 \label{eq:qpsk-collision}
\end{align}

\begin{lemma}[Fixed-reference collision expansion]
\label{lem:collision-expansion}
For $N_W>0$,
\begin{align}
 Q_{2,\tau}(E)
 &=1+\frac{2a_W}{\sqrt{N_W(N_W+1)}}E+O(E^2),
 \label{eq:collision-expansion}\\
 R_{2,\tau}(E)&=1+O(E^2).
 \label{eq:average-collision-expansion}
\end{align}
Consequently, finite constants $d_2,E_0>0$ exist such that
\begin{equation}
 Q_{2,\tau}(E)\le e^{d_2E},\qquad 0\le E\le E_0.
 \label{eq:collision-exponential}
\end{equation}
Moreover,
\begin{align}
 Q_{2,\tau}(E)-R_{2,\tau}(E)
 &=\frac14\sum_{r=0}^3
 \norm{\tau_{N_W}^{-1/4}
 [\rho_r(E)-\sigma_E]\tau_{N_W}^{-1/4}}_2^2.
 \label{eq:qpsk-collision-variance}
\end{align}
\end{lemma}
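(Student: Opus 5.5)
The plan is to compute $Q_{2,\tau}$ and $R_{2,\tau}$ explicitly using Gaussian (characteristic-function) calculus, and to obtain \eqref{eq:qpsk-collision-variance} from a purely algebraic variance identity.

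\emph{Step 1: the variance identity.} Write $\tilde\rho_r:=\tau^{-1/4}\rho_r(E)\tau^{-1/4}$ and $\tilde\sigma:=\tau^{-1/4}\sigma_E\tau^{-1/4}=\frac14\sum_r\tilde\rho_r$. These operators are Hermitian. Expanding $\frac14\sum_r\Tr(\tilde\rho_r-\tilde\sigma)^2$ gives $\frac14\sum_r\Tr\tilde\rho_r^2-\Tr\tilde\sigma^2$, which is exactly $Q_{2,\tau}-R_{2,\tau}$. The only subtlety is finiteness of $\Tr\tilde\rho_r^2$, and Step 2 supplies it.

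\emph{Step 2: closed form for $Q_{2,\tau}$.} By rotation invariance of $\tau$, all four terms in $Q_{2,\tau}$ are equal. It therefore suffices to evaluate
\[
\Tr[\tau^{-1/2}\Disp(\beta)\tau\Disp^\dagger(\beta)\tau^{-1/2}\Disp(\beta)\tau\Disp^\dagger(\beta)],\qquad |\beta|^2=a_WE.
\]
I will write $\tau=(1-x)x^{a^\dagger a}$ with $x=N_W/(N_W+1)$. Then $\tau^{-1/2}\Disp(\beta)\tau^{1/2}$ is a (non-unitary) Gaussian operator $x^{-a^\dagger a/2}\Disp(\beta)x^{a^\dagger a/2}$, which equals an exponential of a linear combination of $a$ and $a^\dagger$ with rescaled coefficients $\beta x^{\mp1/2}$. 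The trace then reduces to a Gaussian expectation of normally ordered exponentials in the thermal state. This yields a closed form $\exp(c\,|\beta|^2)$ with $c=2/\sqrt{N_W(N_W+1)}$, possibly after a normalization check; this is consistent with the claimed linear coefficient. Specifically, I expect $c=(x^{-1/2}-x^{1/2})^2(N_W+\tfrac12)$-type algebra to simplify to $(1-x)^2/(x(1-x)^{\dots})$. I will verify the constant by a first-order perturbative computation of $\|\tilde\rho-\tilde\tau\|_2^2$. In that expansion, $\rho_r-\tau=-[\,\beta a^\dagger-\beta^*a,\tau]+O(|\beta|^2)$, and cross terms with $\Tr\tilde\tau\,\cdot=\Tr(\rho-\tau)=0$ vanish, so that $Q=1+\|\tau^{-1/4}[\beta a^\dagger-\beta^* a,\tau]\tau^{-1/4}\|_2^2+O(E^2)$. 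Evaluating this norm in the Fock basis gives $|\beta|^2\cdot 2/\sqrt{N_W(N_W+1)}$. This proves \eqref{eq:collision-expansion}. If the closed form is $\exp(cE)$, then \eqref{eq:collision-exponential} holds with $d_2=c$ for all $E$; otherwise $1+cE+O(E^2)\le e^{(c+1)E}$ for small $E$.

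\emph{Step 3: $R_{2,\tau}$, the main obstacle.} Since $\Tr\tilde\tau\tilde\sigma=\Tr\sigma_E=1$, I will write $R_{2,\tau}=1+\|\sigma_E-\tau\|_{2,\tau}^2$. It then suffices to show $\sigma_E-\tau=O_{2,\tau}(E)$. The first-order term is $\frac14\sum_r i^r(\cdot)$, which cancels because $\sum_r i^r=0$. So $\sigma_E-\tau=O_{2,\tau}(E)$ in fact holds with the remainder quadratic in $\beta$, i.e.\ of order $E$, which gives $R=1+O(E^2)$.

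The delicate part is controlling the Taylor remainder of $\beta\mapsto\Disp(\beta)\tau\Disp^\dagger(\beta)$ in the weighted norm. The conjugation by $\tau^{-1/4}$ amplifies high Fock levels by $x^{-k/4}$, while $\tau$ decays like $x^{k}$. I would first try the integral form of the remainder, using the closed-form Gaussian expressions from Step 2, which are entire in $\beta$ with locally uniform bounds, to bound second derivatives uniformly for $|\beta|\le1$. The regularity results cited as Appendix~\ref{app:qpsk-regularity} can be invoked for exactly this estimate.
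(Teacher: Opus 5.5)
Your proposal is correct. Steps~1 and~3 match the paper in substance. Step~3 has slightly leaner bookkeeping: the exact identity $R_{2,\tau}=1+\norm{\sigma_E-\tau}_{2,\tau}^2$ only needs $\sigma_E-\tau=O_{2,\tau}(E)$, whereas the paper expands to $\varepsilon^2\dot\tau_W+O_{2,\tau}(\varepsilon^4)$ and uses $\Tr\dot\tau_W=0$.

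The real difference is Step~2. The paper never evaluates $Q_{2,\tau}$. It Taylor-expands each $\rho_r(E)$ in the weighted norm, using the four-times differentiability from Appendix~\ref{app:qpsk-regularity}. It removes cross terms by trace-zero and odd orders by the $\varepsilon\mapsto-\varepsilon$ symmetry of the QPSK constellation. It then reads the coefficient off the Fock sum and gets \eqref{eq:collision-exponential} only locally, by continuity.

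Your Gaussian/BCH computation does close, and your guessed constant is right. Take $x=N_W/(N_W+1)$ and $\gamma=\beta(x^{-1/2}-x^{1/2})=\beta/\sqrt{N_W(N_W+1)}$. Then $\tau^{-1/2}\Disp(\beta)\tau\Disp^\dagger(\beta)\tau^{-1/2}=e^{|\beta|^2(x-x^{-1})/2}e^{\gamma a^\dagger+\gamma^*a}$. Taking the expectation in $\Disp(\beta')\tau\Disp^\dagger(\beta')$, the $|\beta|^2$ terms cancel against $(N_W+\tfrac12)|\gamma|^2$. This leaves $\exp[2\Re(\beta\beta'^{*})/\sqrt{N_W(N_W+1)}]$, which equals $1$ at $\beta'=0$, as it must. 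Hence, with $c=2a_W/\sqrt{N_W(N_W+1)}$, you get exactly $Q_{2,\tau}(E)=e^{cE}$ and $R_{2,\tau}(E)=\tfrac14(e^{cE}+2+e^{-cE})$. This delivers \eqref{eq:collision-expansion} and \eqref{eq:average-collision-expansion}, and \eqref{eq:collision-exponential} for all $E$ with $d_2=c$. It also makes your Step~3 remainder estimate unnecessary.

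What your route costs is a justification of the formal manipulations with the unbounded $\tau^{-1/2}$. One way is the Glauber--Sudarshan representation, where the resulting Gaussian double integral converges because $1+1/N_W>\sqrt{1+1/N_W}$. What the paper's route buys is reuse of the weighted-expansion machinery it needs anyway for Lemma~\ref{lem:qpsk-cost}.

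One caution if you rely on your perturbative fallback instead. A first-order expansion with $O_{2,\tau}(\varepsilon^2)$ remainder only gives $Q_{2,\tau}=1+cE+O(E^{3/2})$. The $O(E^2)$ remainder needs the evenness of $Q_{2,\tau}$ in $\varepsilon$ (the phases $i^r$ and $i^{r+2}=-i^r$ are related by parity, which commutes with $\tau$). It also needs fourth-order regularity, exactly as in the paper.
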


\begin{proof}[Proof sketch]
Expand each displaced thermal state in the fixed thermal weighted Hilbert--Schmidt norm. The trace-zero property removes the linear cross term, and QPSK averaging cancels the remaining odd powers. A geometric sum in the Fock basis gives the coefficient in \eqref{eq:collision-expansion}. The centered-square identity is exact. Appendix~\ref{app:proof-collision-expansion} gives the details.
\end{proof}

\begin{lemma}[Relative-entropy collision soft covering]
\label{lem:collision-soft-covering}
Let a finite ensemble $\{p(x),\rho_x\}$ have average $\sigma$, and let $\tau$
be full-support (faithful) with $\sigma\succeq c\tau$ for some $c>0$. Define
\begin{align}
 Q_{2,\tau}&=\sum_xp(x)
 \Tr[(\tau^{-1/4}\rho_x\tau^{-1/4})^2],\nonumber\\
 R_{2,\tau}&=\Tr[(\tau^{-1/4}\sigma\tau^{-1/4})^2].
\end{align}
For $J$ independent samples and
$\widehat\sigma=J^{-1}\sum_{j=1}^J\rho_{X_j}$,
\begin{equation}
 \E_{\cC}D(\widehat\sigma\|\sigma)
 \le\frac{Q_{2,\tau}-R_{2,\tau}}{cJ}
 \le\frac{Q_{2,\tau}}{cJ}.
 \label{eq:collision-soft-covering}
\end{equation}
For an $L$-fold product ensemble with reference $T=\tau^{\otimes L}$, the
same bound holds with $c^L$, $Q_{2,\tau}^L$, and $R_{2,\tau}^L$.
\end{lemma}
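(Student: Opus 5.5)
The plan is to bound relative entropy by a weighted chi-square divergence and then compute the expected chi-square of the empirical average exactly, using independence of the samples. First, for states with $\widehat\sigma$ supported where $\sigma$ is, I will use the standard inequality
\begin{equation*}
D(\widehat\sigma\|\sigma)\le \log\Tr\!\left[\widehat\sigma\,\sigma^{-1/2}\widehat\sigma\,\sigma^{-1/2}\right]\le \Tr\!\left[(\widehat\sigma-\sigma)\sigma^{-1/2}(\widehat\sigma-\sigma)\sigma^{-1/2}\right].
\end{equation*}
The first step is monotonicity of sandwiched/Petz Rényi divergences in the order $\alpha$, with $D\le D_2$. The second step is $\log(1+x)\le x$ together with $\Tr\widehat\sigma=\Tr\sigma=1$. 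The support condition holds because $\sigma\succeq c\tau$ and $\tau$ is faithful, so $\sigma$ is faithful. To pass from $\sigma$ to the fixed reference $\tau$, I would use $\sigma\succeq c\tau$, which gives $\sigma^{-1}\preceq c^{-1}\tau^{-1}$ by operator monotonicity of inversion. The key claim is that the chi-square functional $A\mapsto\Tr[A\sigma^{-1/2}A\sigma^{-1/2}]$ is monotone decreasing in the reference operator for Hermitian $A$. I would prove it by writing it as $\langle A,\Omega_\sigma^{-1}(A)\rangle$, where $\Omega_\sigma(X)=\sigma^{1/2}X\sigma^{1/2}$. Operator monotonicity of the reference in the Hilbert--Schmidt sense follows from Lieb's concavity/Ando's theorem: the map $(\sigma,\sigma)\mapsto$ the relevant inverse is operator convex and decreasing. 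This yields
\begin{equation*}
D(\widehat\sigma\|\sigma)\le c^{-1}\norm{\tau^{-1/4}(\widehat\sigma-\sigma)\tau^{-1/4}}_2^2 .
\end{equation*}

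Next I take the expectation over the random codebook. Write $\widehat\sigma-\sigma=J^{-1}\sum_j(\rho_{X_j}-\sigma)$. The terms are i.i.d. with mean zero, and $A\mapsto\tau^{-1/4}A\tau^{-1/4}$ is linear, so the cross terms vanish in expectation. This gives
\begin{equation*}
\E\norm{\tau^{-1/4}(\widehat\sigma-\sigma)\tau^{-1/4}}_2^2=\frac1J\sum_xp(x)\norm{\tau^{-1/4}(\rho_x-\sigma)\tau^{-1/4}}_2^2=\frac{Q_{2,\tau}-R_{2,\tau}}{J},
\end{equation*}
where the last equality is the centered-square identity, exactly as in \eqref{eq:qpsk-collision-variance}. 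The second inequality follows since $R_{2,\tau}\ge0$. In infinite dimensions I would truncate to finite Fock subspaces and pass to the limit using lower semicontinuity of $D$, assuming $Q_{2,\tau}<\infty$.

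For the $L$-fold product, the ensemble is $\{p^{\otimes L},\rho_{x_1}\otimes\cdots\otimes\rho_{x_L}\}$ with average $\sigma^{\otimes L}\succeq c^L\tau^{\otimes L}$, since tensor products of PSD orderings multiply. The weighted collision quantities factorize because $T^{-1/4}=(\tau^{-1/4})^{\otimes L}$ and traces of tensor products multiply. Hence the same argument gives $(Q_{2,\tau}^L-R_{2,\tau}^L)/(c^LJ)$.

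The main obstacle is the reference-change step: the map $\sigma\mapsto\Tr[A\sigma^{-1/2}A\sigma^{-1/2}]$ is not obviously monotone. If the monotonicity claim resists proof, my fallback is to bound $D(\widehat\sigma\|\sigma)\le\Tr[(\widehat\sigma-\sigma)^2\sigma^{-1}]$, which follows from $D\le$ the maximal chi-square $\Tr[\widehat\sigma^2\sigma^{-1}]-1$. That quantity is monotone in $\sigma$ via $\sigma^{-1}\preceq c^{-1}\tau^{-1}$, giving $c^{-1}\Tr[(\widehat\sigma-\sigma)\tau^{-1}(\widehat\sigma-\sigma)]$. I would then need to reconcile this with the $\tau^{-1/4}$-symmetric weighting. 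I expect that to be possible by an operator-convexity (Lieb–Ruskai) comparison, or otherwise by redefining the collision quantities accordingly.
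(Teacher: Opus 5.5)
Your outline is the paper's proof step for step. You bound $D(\widehat\sigma\|\sigma)$ by the order-two sandwiched R\'enyi quantity, linearize with $\log(1+x)\le x$ and $\Tr\Delta=0$, change the weight from $\sigma$ to $\tau$ at cost $c^{-1}$, evaluate the expected centered square as $(Q_{2,\tau}-R_{2,\tau})/J$, and tensorize. What needs comment is the reference-change step you flagged as the obstacle. Your justification there invokes convexity, which plays no role; only monotonicity is needed. Your fallback would also not prove the lemma as stated.

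The claim you need is true and has a short proof, which is the one the paper uses.
\begin{itemize}
\item From $\sigma\succeq c\tau$ and L\"owner--Heinz, $\tau^{1/2}\preceq c^{-1/2}\sigma^{1/2}$.
\item Hence $R_\sigma:=\sigma^{-1/4}\tau^{1/4}$ satisfies $R_\sigma R_\sigma^\dagger=\sigma^{-1/4}\tau^{1/2}\sigma^{-1/4}\preceq c^{-1/2}\1$, i.e.\ $\|R_\sigma\|\le c^{-1/4}$.
\item Write $\sigma^{-1/4}A\sigma^{-1/4}=R_\sigma(\tau^{-1/4}A\tau^{-1/4})R_\sigma^\dagger$. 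The ideal property of the Hilbert--Schmidt norm then gives $\norm{\sigma^{-1/4}A\sigma^{-1/4}}_2^2\le c^{-1}\norm{\tau^{-1/4}A\tau^{-1/4}}_2^2$.
\end{itemize}
This argument carries over to infinite dimensions: establish it on the finite-particle core, then extend by closure.

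Your superoperator route can also be made rigorous, again via monotonicity. Left and right multiplication by $\sigma^{1/2}$ are commuting positive superoperators, each monotone in $\sigma$ by L\"owner--Heinz. Therefore $\Omega_\sigma\succeq\Omega_{c\tau}=c\,\Omega_\tau$ on Hilbert--Schmidt space, and operator inversion reverses the order.

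The fallback, by contrast, cannot be reconciled with the stated $Q_{2,\tau}$. In the eigenbasis of $\tau$, with eigenvalues $t_i$, the AM--GM inequality $t_i^{-1/2}t_j^{-1/2}\le(t_i^{-1}+t_j^{-1})/2$ gives $\Tr[\Delta\tau^{-1/2}\Delta\tau^{-1/2}]\le\Tr[\Delta\tau^{-1}\Delta]$. So the Petz-type weighting produces a larger collision variance, and no comparison runs in the direction you need.

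If you handle infinite dimensions by truncation rather than by citing the infinite-dimensional inequality $D\le\widetilde D_2$ (as the paper does), apply lower semicontinuity pointwise for each fixed codebook before averaging. This matters because compression renormalizes $\widehat\sigma$ by a codebook-dependent trace, which would spoil the clean variance computation if done after taking expectations.
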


\begin{proof}[Proof sketch]
We first bound relative entropy by the order-two sandwiched R\'enyi divergence. The condition $\sigma\succeq c\tau$ lets us use the fixed thermal weight. The centered second moment of the empirical mixture gives the factor $1/J$, and tensor products give the product form. Appendix~\ref{app:proof-soft-covering} gives the details.
\end{proof}

\begin{lemma}[QPSK group Pythagorean identity]
\label{lem:qpsk-pythagorean}
For $\bm s=(s_1,\ldots,s_L)\in\{0,1,2,3\}^L$, let $\rho_{\bm s}(E)=\bigotimes_{t=1}^L\rho_{s_t}(E)$, $\Sigma_E=\sigma_E^{\otimes L}$, and $T=\tau_{N_W}^{\otimes L}$. For every probability distribution $\pi$ on QPSK sequences and $\widehat\Sigma=\sum_{\bm s}\pi(\bm s)\rho_{\bm s}(E)$,
\begin{equation}
 D(\widehat\Sigma\|T)
 =D(\widehat\Sigma\|\Sigma_E)+D(\Sigma_E\|T).
 \label{eq:qpsk-pythagorean}
\end{equation}
\end{lemma}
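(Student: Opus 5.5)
The plan is to use the standard identity for relative entropy: for states $\widehat\Sigma$, $\Sigma_E$, $T$ with suitable supports,
\begin{equation}
D(\widehat\Sigma\|T)-D(\widehat\Sigma\|\Sigma_E)
=\Tr\!\left[\widehat\Sigma(\log\Sigma_E-\log T)\right].
\end{equation}
The claim \eqref{eq:qpsk-pythagorean} is then equivalent to the linear statement $\Tr[\widehat\Sigma(\log\Sigma_E-\log T)]=\Tr[\Sigma_E(\log\Sigma_E-\log T)]$. The right side is $D(\Sigma_E\|T)$. Since the left side is linear in $\widehat\Sigma$, and $\widehat\Sigma$ is a convex combination of the $\rho_{\bm s}(E)$, it suffices to prove that $\Tr[\rho_{\bm s}(E)\,\Delta]$ is independent of $\bm s$, where $\Delta:=\log\Sigma_E-\log T$. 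Averaging over $\bm s$ uniform then recovers the $\Sigma_E$ value.

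\textbf{Reducing to the single-mode observable.} Because $\Sigma_E=\sigma_E^{\otimes L}$ and $T=\tau^{\otimes L}$ are product states, both logarithms are sums of single-site terms:
\begin{equation}
\Delta=\sum_{t=1}^L \1^{\otimes(t-1)}\otimes\delta\otimes\1^{\otimes(L-t)},
\qquad \delta:=\log\sigma_E-\log\tau_{N_W}.
\end{equation}
Hence $\Tr[\rho_{\bm s}(E)\Delta]=\sum_t\Tr[\rho_{s_t}(E)\delta]$, and it suffices to show that $\Tr[\rho_r(E)\delta]$ does not depend on $r\in\{0,1,2,3\}$.

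\textbf{Phase symmetry.} Let $U=e^{i(\pi/2)a^\dagger a}$ be the quarter-turn phase rotation. Then $U\Disp(\alpha)U^\dagger=\Disp(i\alpha)$ and $U\tau U^\dagger=\tau$, so $U\rho_r(E)U^\dagger=\rho_{r+1}(E)$ (indices mod 4). Consequently $U\sigma_EU^\dagger=\sigma_E$, $U$ commutes with $\log\sigma_E$ and with $\log\tau$, and so $U\delta U^\dagger=\delta$. Cyclicity of the trace then gives
\begin{equation}
\Tr[\rho_{r+1}\delta]=\Tr[U\rho_rU^\dagger\delta]=\Tr[\rho_r U^\dagger\delta U]=\Tr[\rho_r\delta].
\end{equation}
Combined with the linearity reduction, this proves the identity.

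\textbf{Main obstacle.} The difficulty is justifying these manipulations in infinite dimensions, since $\log\sigma_E$ and $\log\tau$ are unbounded operators. First, $\log\tau=-\log(N_W+1)\1+\log\frac{N_W}{N_W+1}\,a^\dagger a$, which has finite expectation in every $\rho_r(E)$ because those states have finite mean photon number. Second, $\sigma_E$ is faithful: it dominates $\tfrac14\rho_0(E)$, a displaced thermal state with full support because $N_W>0$, so $\log\sigma_E$ is well defined. To control $\Tr[\rho_r\log\sigma_E]$, I would bound $\log\sigma_E$ below by operator monotonicity, using $\sigma_E\ge\frac14\rho_r$ and the Gaussian form of $\log\rho_r$ (a quadratic in displaced number operators), and bound it above trivially by $0$. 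This makes all traces finite. One can then either apply the identity $D(\rho\|T)=D(\rho\|\Sigma)+\Tr[\rho(\log\Sigma-\log T)]$ under these finiteness conditions (for example via Lindblad's definition, or by truncating to finite Fock subspaces and passing to the limit), or avoid the issue by noting that each side of \eqref{eq:qpsk-pythagorean} is finite, since the map is finite by the lemma's local expansion.
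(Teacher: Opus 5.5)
Your proposal is correct and follows essentially the same route as the paper: the paper notes that $\Sigma_E$ and $T$ are invariant under the transitive local phase action of $\mathbb Z_4^L$, so $\Tr[\rho_{\bm s}(E)(\log\Sigma_E-\log T)]$ does not depend on $\bm s$ and equals its uniform average $D(\Sigma_E\|T)$, and then expands the two relative entropies. The differences are cosmetic: you reduce to a single site via the quarter-turn unitary $U$, and you make the traces finite using $\sigma_E\succeq\tfrac14\rho_r(E)$ rather than the paper's thermal domination $\sigma_E\succeq e^{-\varepsilon^2/N_W}\tau$. Also, the split of $D(\widehat\Sigma\|\cdot)$ is licensed by finite traces together with finite entropy of $\widehat\Sigma$ (which follows from its finite photon number), so keep that justification and drop the vague ``each side is finite'' alternative.
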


\begin{proof}
The local phase group $\mathbb Z_4^L$ acts transitively on the product states. Both $\Sigma_E$ and $T$ are invariant, so $A=\log\Sigma_E-\log T$ is invariant. Hence $\Tr[\rho_{\bm s}(E)A]$ is independent of $\bm s$ and equals its uniform average $\Tr[\Sigma_EA]=D(\Sigma_E\|T)$. All three relative entropies are finite by the thermal domination established below. Expanding the two relative entropies proves \eqref{eq:qpsk-pythagorean}.
\end{proof}

\begin{theorem}[Finite-key block-activity transmission achievability]
\label{thm:finite-key}
Fix $\delta>0$, $\varrho\ge2$, $\lambda\ge0$, $\xi\in(0,1)$, and $\gamma>0$.
There exist finite constants $A,H,r_{\rm key}>0$ such that, for integers
$K\to\infty$, the choices
\begin{align}
 N_K&=\lceil\varrho K\rceil,&
 L_K&=\left\lceil A K(\log K)^2\right\rceil,\nonumber\\
 0\le L_{{\rm start},K}&\le\lambda L_K,&
 B_K&=L_K+L_{{\rm start},K},\nonumber\\
 h_K&=H\log K
 \label{eq:finite-key-scaling}
\end{align}
admit deterministic public QPSK codebooks of size
\begin{equation}
 J_K=2^{\lceil r_{\rm key}\log_2K\rceil}
 \label{eq:finite-key-codebook-size}
\end{equation}
for each physical block and independent uniform secret codebook indices such that
\begin{align}
 \max_m\sup_{\bm\nu}D(\rho_{W|m,\bm\nu}\|\rho_{W|0})&\le\delta,
 \label{eq:finite-key-covertness}\\
 P_{\rm e}^{\max}&\longrightarrow0.
 \label{eq:finite-key-reliability}
\end{align}
The total secret-key length and payload satisfy
\begin{align}
 \ell_K&=N_K\log_2J_K=O(K\log K),\nonumber\\
 \ell_K&=O(\sqrt{n_K}),\qquad \ell_K/n_K\to0,
 \label{eq:finite-key-length}\\
 \log M_K&=\Theta(K)
 =\Theta\!\left(\frac{\sqrt{n_K}}{\log n_K}\right),
 \label{eq:finite-key-payload}\\
 n_K&=N_KB_K.\nonumber
\end{align}
\end{theorem}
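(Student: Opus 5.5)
The plan is to mimic the ideal-CSCG construction of Theorem~\ref{thm:scaling-ach}, with three changes. Willie's cost is controlled through the Pythagorean split of Lemma~\ref{lem:qpsk-pythagorean}. The secret randomness becomes an independent uniform index into a public random QPSK codebook of size $J_K$ for each block. Finally, the random codebooks are derandomized by Markov's inequality and a union bound over the $N_K$ blocks. Fix the per-sample energy as the solution of $L_Kq(E_K)=\delta/(2K)$. By Lemma~\ref{lem:qpsk-cost} this gives $E_K=\sqrt{\delta/(2c_WKL_K)}\,(1+o(1))$. Hence $L_KE_K=\Theta(\sqrt{\delta A/c_W}\,\log K)$, exactly as in the proof of Theorem~\ref{thm:scaling-ach}.

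\emph{Covertness.} Because the block indices are independent and the codebooks are fixed, Willie's state for message $m$ and onsets $\bm\nu$ is a tensor product of per-block states. Additivity then reduces the problem to one active segment. For a block with codebook $\cC_b$, Lemma~\ref{lem:qpsk-pythagorean} gives $D(\widehat\Sigma_b\|T)=D(\widehat\Sigma_b\|\Sigma_{E_K})+L_Kq(E_K)$. It therefore suffices to make $D(\widehat\Sigma_b\|\Sigma_{E_K})\le\delta/(2K)$ for every $b$. I would apply Lemma~\ref{lem:collision-soft-covering} to the $L_K$-fold product ensemble with reference $\tau^{\otimes L_K}$. By Lemma~\ref{lem:collision-expansion}, $Q_{2,\tau}(E_K)^{L_K}\le e^{d_2L_KE_K}=K^{O(1)}$.

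\emph{Main obstacle: thermal domination.} The key step I expect to be hardest is the domination constant $c^{L}$. I need $\sigma_E\succeq c(E)\tau_{N_W}$ with $c(E)\ge 1-d_1E$ for small $E$. Then $c^{-L_K}\le e^{O(L_KE_K)}=K^{O(1)}$, rather than something exponential in $L_K$. I would try to prove this by writing $\sigma_E$ as a phase-averaged displaced thermal state. I would compare it in the Fock basis with $\tau_{N_W}$, whose diagonal is geometric. The goal is to show that $\tau^{-1/2}\sigma_E\tau^{-1/2}-\1$ has operator norm $O(E)$, using the same weighted-norm regularity as in Appendix~\ref{app:qpsk-regularity}. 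If this holds only in the weaker form $c(E)\ge e^{-O(E)}$, that is equally sufficient. Granting it, $\E_{\cC}D(\widehat\Sigma_b\|\Sigma_{E_K})\le K^{a}/J_K$ for a constant $a$ depending on $A,\delta,d_1,d_2$. Choosing $r_{\rm key}>a+3$ makes this at most $\delta/(8NK)$ for large $K$. Markov's inequality then says that block $b$ violates $\delta/(2K)$ with probability at most $1/(4N)$.

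\emph{Reliability.} Bob runs the block-reset CUSUM with the conditional LLR given the QPSK symbols. Lemma~\ref{lem:false-alarm} holds for every symbol realization, so $\alpha_B\le B_Ke^{-h_K}$ for every codebook. For misses, I would average over the random codebook, so that the symbols in a block are i.i.d. uniform QPSK. The slope $Z_t=(\eta/2)x^\top Gx$ is then bounded with mean $\kappa$, since the QPSK points have identity second moment. Hoeffding's inequality replaces \eqref{eq:key-lower-tail} with an exponent linear in $L_K$. The Gaussian-tail part of Lemma~\ref{lem:miss} is unchanged, because conditioned on the symbols the LLR is still Gaussian with variance twice its mean. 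With $q\sim g$, the drift condition \eqref{eq:code-drift} holds after enlarging $A$. The codebook-averaged miss probability is therefore $K^{-\Omega(H)}$. A Markov bound at level $4N$ times this mean fails for each block with probability at most $1/(4N)$.

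A union bound over the $2N$ covertness and miss events shows that some deterministic family of codebooks satisfies both requirements. For that family, $P_{\rm e}^{\max}\le N\alpha_B+K\cdot 4N\,K^{-\Omega(H)}\to0$ for $H$ large. The covertness bound is $w(\delta/(2K)+\delta/(2K))\le\delta$ for every message of weight $w\le K$. The key length is $N_K\lceil r_{\rm key}\log_2K\rceil=O(K\log K)=O(\sqrt{n_K})$. The payload count is identical to that in Theorem~\ref{thm:scaling-ach}.
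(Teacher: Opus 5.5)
Your architecture matches the paper's: the Pythagorean split of Lemma~\ref{lem:qpsk-pythagorean}, product collision soft covering with domination constant $c^{L_K}$, Markov/union derandomization of random public QPSK codebooks, and Hoeffding drift at Bob. Your $\delta/(2K)+\delta/(2K)$ split and your codebook-averaged Markov treatment of misses are harmless variants of the paper's $(1-\theta_K)$/$\theta_K$ split and its per-index Hoeffding union bound. The latter is valid because error probabilities average over the secret index.

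\textbf{Domination step.} The route you propose for the step you flag as the main obstacle rests on a false claim: $\tau^{-1/2}\sigma_E\tau^{-1/2}-\1$ is not even bounded. The Fock diagonal of $\sigma_E$ is the photon-number law of a displaced thermal state. Hence $\bra{n}\sigma_E\ket{n}/p_n=e^{-\varepsilon^2/(N_W+1)}\,\mathrm{L}_n(-x)$ with $x=\varepsilon^2/(N_W(N_W+1))$, and the Laguerre polynomial satisfies $\mathrm{L}_n(-x)\ge1+nx\to\infty$. So no bound $\sigma_E\preceq C\tau$ exists. The $\tau^{-1/4}$-weighted Hilbert--Schmidt regularity of Appendix~\ref{app:qpsk-regularity} cannot be upgraded to a $\tau^{-1/2}$ operator-norm estimate. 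Only the one-sided bound is true, and it needs a different idea. The positive Glauber--Sudarshan density obeys $P_{\sigma_E}(z)=P_\tau(z)e^{-\varepsilon^2/N_W}\frac14\sum_s e^{2\varepsilon\Re(zs^*)/N_W}\ge e^{-\varepsilon^2/N_W}P_\tau(z)$ by Jensen, since the four exponents average to zero. Integrating against coherent-state projectors gives $\sigma_E\succeq e^{-a_WE/N_W}\tau$, which is Lemma~\ref{lem:thermal-domination}. This supplies exactly the $c^{-L_K}=K^{O(1)}$ factor you need.

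\textbf{Onsets.} Your union bounds ignore the onset. Bob's CUSUM uses the known symbol at each absolute time, so the public sequences must be indexed over the whole block (length $B_K$, as in the paper). The active length-$L_K$ substring therefore depends on $\nu$, and so do both the Willie mixture $\widehat\Sigma_{b,\nu}$ and the codebook-conditional miss probability. The worst-onset requirements in the covertness criterion and in $P_{\rm e}^{\max}$ thus need Markov control for each of the $N_K(L_{{\rm start},K}+1)=O(K^2\log^2K)$ block--onset pairs, not only for the $N_K$ blocks. The per-pair failure target must be $1/(4N_K(L_{{\rm start},K}+1))$ rather than $1/(4N_K)$. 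This is absorbed by taking $r_{\rm key}$ large enough (the paper uses $r_{\rm key}>C_0+4$ for exactly this count) and, in the miss term, by requiring $c_\gamma H>3$. With these two repairs your argument closes.
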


\begin{proof}[Proof sketch]
Use the energy with a small safety margin in \eqref{eq:finite-key-energy}. Lemma~\ref{lem:collision-soft-covering}, Markov's inequality, and a union bound give one set of public QPSK codebooks that works for every block and onset at Willie. Hoeffding's inequality over random public-codebook generation and a second union bound select the same deterministic codebooks with sufficient drift for every selected sequence and onset at Bob. The chosen threshold and block length make both terms in \eqref{eq:finite-key-explicit-error} vanish. The key length is $N_K\log_2J_K=O(K\log K)=O(\sqrt{n_K})$. Appendix~\ref{app:proof-finite-key} gives the order in which the constants are chosen and states each union bound explicitly.
\end{proof}

Theorem~\ref{thm:finite-key} gives a finite, sublinear sufficient key length. It does not prove that this order or its constant is optimal. General classical--quantum covert results also exhibit square-root-scale key resources under finite-alphabet assumptions \cite{BullockCQ2025}, but the theorem above is specialized to the bosonic, worst-message, worst-onset block-activity architecture.

\section{Numerical Validation}
\label{sec:numerics}

We report four finite-parameter checks selected to probe the numerically
delicate parts of the analysis without treating simulation as a proof of the
asymptotic theorems.  Unless stated otherwise, the channel parameters are
$\eta=0.8$, $\bar n_{\rm in}=0$, and $\bar n_W=1$, and all logarithms and
payloads are in nats.  The calculations use independent random-number streams
derived from master seed 20260806.  The complete CPU-only implementation also
records machine-readable data, convergence diagnostics, and unit tests.

\subsection{Local covert cost and exact allocation}

Figure~\ref{fig:local-geometry}(a) evaluates the analytic thermal cost and the
QPSK cost computed by truncated-Fock eigendecomposition.  Here
$c_W=0.01$, while the smallest resolved QPSK ratio is $0.009997$; all displayed
QPSK points pass positivity, trace, and two-cutoff convergence checks.  The
agreement is a finite-energy check of the common quadratic coefficient, not a
numerical proof of the $O(E^4)$ remainder.  Figure~\ref{fig:local-geometry}(b)
uses deterministic zero-sum perturbations of the uniform energy vector.  Each
profile is rescaled to satisfy the same exact Willie budget, so the decrease
below one directly displays the loss in Bob's accumulated linear information.

\begin{figure*}[!t]
\centering
\includegraphics[width=\textwidth]{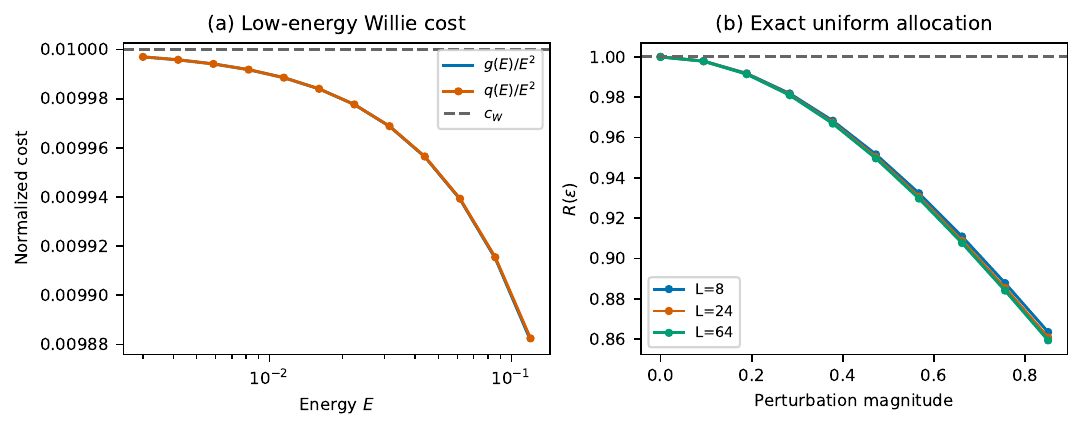}
\caption{Panel (a) varies the signal energy at $\eta=0.8$ and $\bar n_W=1$:
the exact analytic thermal cost and a converged truncated-Fock QPSK calculation
approach the common coefficient $c_W$.  Panel (b) deterministically perturbs
uniform allocations at fixed $\delta_b=0.02$ and rescales them to the exact QRE
budget; only the uniform allocation attains the normalized information
frontier.}
\label{fig:local-geometry}
\end{figure*}

\subsection{One-block converse and CUSUM achievability}

Figure~\ref{fig:latency-cusum} compares finite-block Monte Carlo behavior with
the detector-independent converse and the analytic CUSUM guarantees.  The
simulation uses the isotropic receiver with $\kappa=1.0667$.  Every displayed
empirical point is based on 20,000 trials per hypothesis and onset.  No
empirical latency fell below the converse.  Panel (a) is used only to identify
when both empirical block-error probabilities enter the target-error regime.

\begin{figure*}[!t]
\centering
\includegraphics[width=\textwidth]{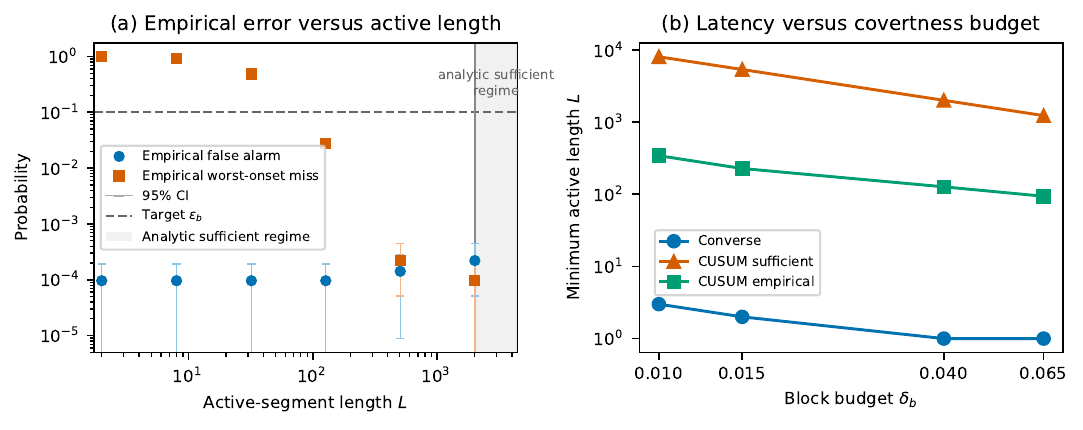}
\caption{One-block CUSUM validation for the isotropic Gaussian receiver at
$\eta=0.8$, target errors $\alpha_b=\beta_b=\epsilon_b=0.1$, and onset allowance
$L_{\rm start}=\lfloor0.25L\rfloor$.  Panel (a) compares empirical CUSUM
false-alarm and worst-onset missed-detection probabilities as functions of the
active length $L$ at fixed $\delta_b=0.04$.  Each point uses 20,000 Monte Carlo
trials per hypothesis and onset, and the error bars are Wilson 95\% confidence
intervals.  The horizontal line marks the target $\epsilon_b$, while the shaded
region beginning at $L_{\rm suff}=2011$ marks the analytic sufficient regime
from Theorem~\ref{thm:block-achievability}.  For each $L$, the energy is reset to
$E_L^*=g^{-1}(\delta_b/L)$, so the total block covertness budget remains fixed.
Zero-event estimates are placed at half their Wilson upper endpoint solely for
display on the logarithmic axis.  Panel (a) is a finite-parameter sanity check
of when CUSUM enters the target-error regime, not a numerical proof of the
asymptotic theorem.  Panel (b) varies $\delta_b$: ``Converse'' is the smallest
integer $L$ allowed by binary data processing for any block-deadline detector,
``CUSUM sufficient'' is the smallest $L$ for which one threshold satisfies all
analytic sufficient conditions jointly, and ``CUSUM empirical'' is the
smallest tested $L$ whose upper confidence limits meet both error targets.  The
three curves have different operational meanings and do not represent
equal-status receiver designs.}
\label{fig:latency-cusum}
\end{figure*}

\subsection{Block-activity payload and full-horizon reference}

Figure~\ref{fig:payload-reference} evaluates the constructive parameter
sequence with $N=\lceil2K\rceil$, $L=\lceil2500K(\log K)^2\rceil$, and
$B=L+\lfloor0.25L\rfloor$.  All sampled points satisfy the construction's
drift and error-exponent inequalities.  The normalization in panel (b) makes
the two theorem-implied orders visible without fitting a finite-range slope.

\begin{figure*}[!t]
\centering
\includegraphics[width=\textwidth]{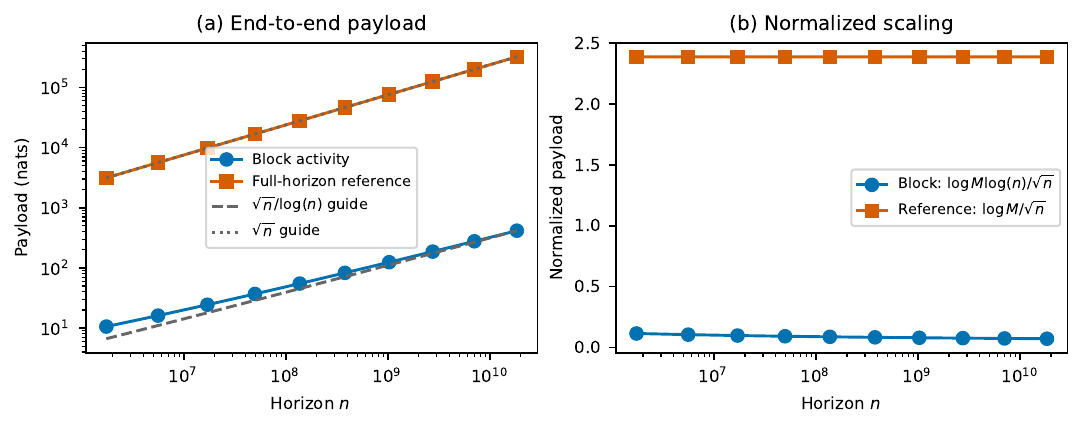}
\caption{Panel (a) uses the theorem's constructive block-activity parameter
sequence and compares it with the relaxed, receiver-restricted full-horizon
first-order reference; the light guides show $\sqrt n/\log n$ and $\sqrt n$.
Panel (b) displays the corresponding normalizations.  The block-activity curve
requires recovery by local deadlines, whereas the full-horizon reference
allows one joint decision after all $n$ observations.  The curves therefore
answer different operational questions and do not assert unrestricted covert
capacity or a ranking of competing designs.}
\label{fig:payload-reference}
\end{figure*}

\subsection{Finite-key QPSK sanity check}

Figure~\ref{fig:finite-key} checks the collision soft-covering bound and the
group-symmetry identity in finite Fock truncations.  For product length two,
the two largest cutoffs differ by $0.694\%$ in the displayed divergence.  The
empirical mean remains below the finite-dimensional collision bound, computed
with the actual truncated domination constant, and the Pythagorean residuals
remain at floating-point scale.

\begin{figure*}[!t]
\centering
\includegraphics[width=\textwidth]{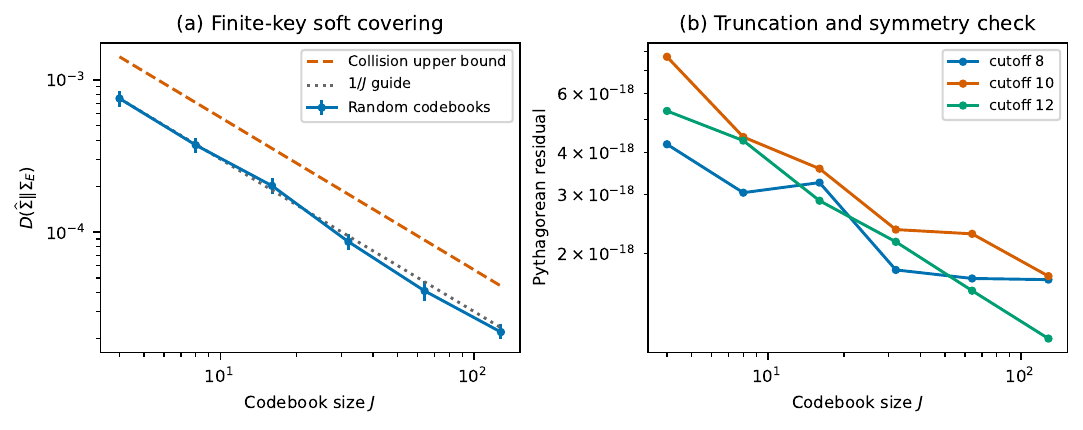}
\caption{Finite-dimensional numerical sanity check of the finite-key mechanism
at $E=0.01$ and product length two.  Panel (a) averages over 100 random public
codebooks for each $J$ and compares the empirical divergence, with 95\%
bootstrap intervals, against the collision upper bound computed using the
actual truncated domination constant.  The $1/J$ line is a visual guide, not
an asserted equality law.  Panel (b) shows the absolute Pythagorean residual
across Fock cutoffs 8, 10, and 12.  This calculation checks the small-scale
finite-dimensional mechanism; it is not a numerical proof of the
infinite-dimensional finite-key theorem.}
\label{fig:finite-key}
\end{figure*}

\FloatBarrier

\section{Conclusion}
\label{sec:conclusion}

This work develops and characterizes a covert block-activity information-transmission architecture in which the temporal activity vector is the message and must be recovered through block-local decisions by prescribed deadlines. The central result is a local-to-global transmission limit. At the one-block level, the exact thermal relative entropy at Willie and the linear information growth at Bob produce the finite-block energy--information frontier and the matching latency scaling $L^*(\epsilon_b,\delta_b)=\Theta[(c_W/(\kappa^2\delta_b))\log^2(1/\epsilon_b)]$. The converse is detector independent within the fixed measured channel, and a block-reset cumulative-sum detector attains the same order.

When the block decisions jointly represent one message, a fixed total Willie budget must be divided across active coordinates and whole-pattern reliability forces the local false-alarm and miss probabilities to decrease with the number of decisions. Combining these effects with the weight-constrained pattern count yields the end-to-end transmission law
\begin{equation}
\log M=\Theta\!\left(\frac{\sqrt n}{\log n}\right),
\end{equation}
with a matching converse within the symmetric shared-CSCG blockwise architecture. A relaxed full-horizon reference recovers $\Theta(\sqrt n)$ payload under the same modulation family and fixed Gaussian measurement. This comparison does not establish an unrestricted capacity separation; it shows that covertness and the selected measurement alone do not force the additional logarithmic factor appearing under the blockwise recovery architecture.

The information-theoretic novelty lies in this architectural coupling. Activity location is already an established information-bearing mechanism through pulse-position modulation and related sparse signaling constructions \cite{BlochGuha2017PPM,KadampotTahmasbiBloch2020PPM,TanAndersonBullockBash2026}. The contribution here is therefore not positional or support modulation by itself, but the information-theoretic coupling created when every coordinate of a message-bearing activity vector must be resolved locally under a shared total QRE budget and whole-pattern reliability. Theorems~\ref{thm:scaling-ach} and~\ref{thm:architecture-converse} quantify the resulting local-to-global transmission law.

The matching converse is deliberately architecture-specific. Assumption~\ref{ass:symmetric-architecture} uses the full weight-constrained pattern codebook, equal active-block covertness budgets, the shared-CSCG displaced-thermal signaling family of Section~\ref{sec:model}, a fixed Gaussian receiver, and identical block-local decision rules. It therefore rules out a larger payload order within that architecture, but does not preclude gains from selected subsets of activity patterns, unequal budget allocation, other bosonic input-state families, message-dependent signaling laws, or decisions coupling observations across blocks. Likewise, the constructive regime $N=\Theta(K)$ is weight constrained rather than asymptotically sparse in the distinct vanishing-density regime $K=o(N)$.

Finally, the finite-key construction removes the ideal continuous shared-randomness assumption: public quadrature phase-shift keying codebooks selected with $O(\sqrt n)$ secret bits preserve the same block-activity payload order. This is a sufficient key length, not a key-length converse. Willie is modeled through the lost-light weak-complementary output, and Bob's main results begin after a Gaussian measurement has been fixed; no optimality over collective quantum measurements is claimed. Extending the transmission converse beyond the present shared-CSCG block-local architecture, treating stronger access for Willie or broader quantum receivers at Bob, characterizing the $K=o(N)$ regime, and determining the minimum secret-key order remain natural open problems.


\clearpage
\appendices
\section{Deferred Proofs for Sequential Detection and Architecture Scaling}
\label{app:sequential-and-architecture}

This appendix gives the full concentration argument for the missed-detection bound and the complete converse for the block-by-block architecture.

\subsection{Proof of Lemma~\ref{lem:miss}}
\label{app:proof-miss}

We first prove the missed-detection bound uniformly over every allowed onset.

\begin{proof}
For an onset $\nu$, let
\begin{equation}
A_{L,\nu}=E_L^*\sum_{j=1}^LZ_{\nu+j}
\end{equation}
be the conditional KL accumulated during the active segment. On the event
\begin{equation}
\cG_{L,\nu}=\left\{L^{-1}\sum_{j=1}^LZ_{\nu+j}\ge(1-\xi)\kappa\right\},
\end{equation}
condition \eqref{eq:drift-condition} implies $A_{L,\nu}\ge(1+\gamma)h$. Conditioned on the secret symbols, the ordinary LLR sum over the $L$ active samples obeys
\begin{equation}
\Lambda_{L,\nu}:=\sum_{j=1}^L\ell_{\nu+j}(E_L^*)\sim\NN(A_{L,\nu},2A_{L,\nu})
\end{equation}
by \eqref{eq:llr-normal}. Hence
\begin{align}
\Prb_\nu(\Lambda_{L,\nu}<h\mid S^B,\cG_{L,\nu})
&\le\exp\left[-\frac{(A_{L,\nu}-h)^2}{4A_{L,\nu}}\right]\nonumber\\
&\le\exp\left[-\frac{\gamma^2}{4(1+\gamma)}h\right].
\end{align}
The complement of $\cG_{L,\nu}$ has probability at most $e^{-LI_-(\xi)}$ by \eqref{eq:key-lower-tail}. The CUSUM recursion implies inductively that $W_{\nu+j}\ge W_\nu+\sum_{i=1}^j\ell_{\nu+i}(E_L^*)$ for every $j$. Since $W_\nu\ge0$, the event $\Lambda_{L,\nu}\ge h$ forces $T_h\le\nu+L$. Combining these facts proves the result. Because the symbols are i.i.d., the bound is identical for every admissible onset.
\end{proof}

\subsection{Proof of Theorem~\ref{thm:architecture-converse}}
\label{app:proof-architecture-converse}

We next turn the per-block error requirements into a lower bound on block length and then into an upper bound on the number of messages.

\begin{proof}
If $N$ remains bounded along a subsequence, then $M\le2^N$ is bounded and the conclusion is immediate. Hence consider a subsequence on which $N\to\infty$. Let $\alpha_n$ be the common idle-block false-alarm probability and let
\begin{equation}
\beta_n:=\max_{0\le\nu\le L_{\rm start}}
\Prb_\nu(\widehat X=0)
\end{equation}
be the worst-onset active-block miss probability. Under the all-zero codeword, the block-local decisions are independent, and therefore
\begin{equation}
1-(1-\alpha_n)^N\le\epsilon_n.
\end{equation}
Consequently,
\begin{equation}
\alpha_n\le1-(1-\epsilon_n)^{1/N}
\le\frac{-\log(1-\epsilon_n)}{N}
=O(\epsilon_n/N).
\label{eq:alpha-necessary}
\end{equation}
Choose a weight-$K$ codeword and assign every active block an onset attaining the finite maximum in the definition of $\beta_n$. Independence across blocks and the maximal-error requirement imply
\begin{equation}
1-(1-\beta_n)^K\le\epsilon_n,
\end{equation}
and hence
\begin{equation}
\beta_n\le1-(1-\epsilon_n)^{1/K}
\le\frac{-\log(1-\epsilon_n)}{K}
=O(\epsilon_n/K).
\label{eq:beta-necessary}
\end{equation}

For $p=1-\beta_n$ and $q=\alpha_n$,
\begin{equation}
\db(p\|q)
\ge(1-\beta_n)\log\frac1{\alpha_n}-h_2(\beta_n)
\ge(1-o(1))\log N.
\label{eq:binary-kl-logN}
\end{equation}
Apply Theorem \ref{thm:block-converse} to one active block with budget $\delta/K$. Since the right-hand side of \eqref{eq:binary-kl-logN} diverges, the exact converse forces $KL\to\infty$, and therefore $u_n:=\delta/(KL)\to0$. Using \eqref{eq:g-inverse-expansion},
\begin{align}
\kappa Lg^{-1}(u_n)
&=\kappa\sqrt{\frac{\delta L}{c_WK}}+O(1/K).
\end{align}
Combining this identity with \eqref{eq:binary-kl-logN} yields
\begin{equation}
L\ge
\frac{c_W}{\kappa^2\delta}
K(\log N)^2(1-o(1)).
\label{eq:L-architecture-lower}
\end{equation}
Since $n=NB\ge NL$, there is a constant $c>0$ such that
\begin{equation}
n\ge cNK(\log N)^2
\label{eq:n-NK}
\end{equation}
for all sufficiently large $n$.

Set $r=N/K\ge2$. The standard binomial-sum bound gives
\begin{equation}
M=\sum_{j=0}^K\binom Nj
\le\left(\frac{eN}{K}\right)^K=(er)^K.
\label{eq:binomial-upper}
\end{equation}
From \eqref{eq:n-NK},
\begin{equation}
K\le\frac{\sqrt n}{\sqrt{cr}\log N}.
\end{equation}
If $N\le\sqrt n/\log n$, then $\log M\le N\log2=O(\sqrt n/\log n)$. Otherwise $\log N=\Omega(\log n)$, and
\begin{equation}
\log M
\le
\frac{\sqrt n}{\sqrt c\log N}
\frac{\log(er)}{\sqrt r}
=O\left(\frac{\sqrt n}{\log n}\right),
\end{equation}
because $\sup_{r\ge2}\log(er)/\sqrt r<\infty$.
\end{proof}

\section{Full-Horizon Relaxation: Model, Converse, and Achievability}
\label{app:benchmark-proofs}

This appendix analyzes the deliberately relaxed reference problem summarized
in Section~\ref{sec:global-benchmark}. Unlike the block-activity task in the
main text, this problem has no unknown change onset, no intermediate
active-or-idle decisions, and no detection-delay criterion. Bob waits until all
$n$ channel uses have been observed and then decodes a single message.

\subsection{Relaxed full-horizon model}

For message $m$, let the complex displacement amplitude at use $t$ be
\begin{equation}
 \alpha_{m,t}=\sqrt{E_{m,t}}\,u_{m,t}S_t,
 \qquad |u_{m,t}|=1,
 \label{eq:global-mod-family}
\end{equation}
where $S_t\sim\CN(0,1)$ are independent shared symbols and $u_{m,t}$ is
deterministic and message dependent. Thus the message is represented by a
length-$n$ phase sequence rather than by a vector of blockwise activity
indicators. Bob applies the same Gaussian measurement $(C,\Sigma_0)$ on every
use and jointly decodes from $(S^n,Y^n)$ after the full horizon. The exact
Willie cost of message $m$ is $\sum_{t=1}^n g(E_{m,t})$.

This family permits nonuniform energies and coding across any block boundaries
used by the real-time architecture. It remains restricted to multiplicative
CSCG modulation followed by the chosen fixed Gaussian measurement; it does not
cover nonlinear functions of the shared randomness, non-Gaussian input states,
or collective quantum measurements performed before that measurement.

\subsection{Converse for the relaxed model}
\label{app:proof-global-fixed-converse}

\begin{theorem}[Converse within the multiplicative-CSCG family]
\label{thm:global-fixed-converse}
Suppose every message satisfies
\begin{equation}
 \sum_{t=1}^n g(E_{m,t})\le\delta,
 \label{eq:global-per-message-covertness}
\end{equation}
and the average decoding error is at most $\epsilon<1$. Then
\begin{equation}
 (1-\epsilon)\log M-h_2(\epsilon)
 \le \kappa n g^{-1}\!\left(\frac{\delta}{n}\right).
 \label{eq:global-converse-main}
\end{equation}
Consequently,
\begin{equation}
 \log M\le \frac{\kappa}{1-\epsilon}
 \sqrt{\frac{\delta n}{c_W}}+O(1).
 \label{eq:global-converse-asymptotic}
\end{equation}
\end{theorem}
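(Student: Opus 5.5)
The plan is a Fano-plus-divergence argument in which the idle law $P^{(0)}_{S^n,Y^n}$ serves as a universal reference output distribution. Let $W$ be uniform on the $M$ messages and let $\widehat W$ be Bob's decoded message. Since $\widehat W$ is a function of $(S^n,Y^n)$, data processing and Fano's inequality in the form $H(W|\widehat W)\le h_2(\epsilon)+\epsilon\log M$ give
\begin{equation*}
(1-\epsilon)\log M-h_2(\epsilon)\le I(W;S^n,Y^n).
\end{equation*}
It therefore suffices to bound $I(W;S^n,Y^n)$ by $\kappa n g^{-1}(\delta/n)$.

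For the mutual information, I will use the variational (golden) formula
\begin{equation*}
I(W;S^n,Y^n)=\min_Q \frac1M\sum_m D(P_{S^nY^n|m}\|Q)\le \frac1M\sum_m D(P_{S^nY^n|m}\|P^{(0)}_{S^nY^n}),
\end{equation*}
where $P^{(0)}$ is the all-idle joint law: $S_t$ i.i.d.\ $\CN(0,1)$ and $Y_t\sim\NN(0,\Sigma_0)$ independently of $S_t$.

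Fix a message $m$. Under both laws the $S_t$ have the same marginal, and the pairs $(S_t,Y_t)$ are independent across $t$. The chain rule and additivity therefore reduce $D(P_{S^nY^n|m}\|P^{(0)})$ to a sum over $t$ of conditional KL divergences averaged over $S_t$. Conditioned on $S_t$, the observation is $Y_t\sim\NN(\sqrt{\eta E_{m,t}}\,C\,\operatorname{vec}(u_{m,t}S_t),\Sigma_0)$. The equal-covariance Gaussian formula gives the conditional divergence $\tfrac{\eta}{2}E_{m,t}\operatorname{vec}(u_{m,t}S_t)^\top G\operatorname{vec}(u_{m,t}S_t)$.

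The one step needing care is the phase $u_{m,t}$. Multiplication by a unit-modulus complex number acts on $\operatorname{vec}$ as a $2\times2$ rotation $R_{m,t}$, and $X_t\sim\NN(0,I_2)$ is rotation invariant. Hence the expectation is $\tfrac{\eta}{2}E_{m,t}\Tr(R_{m,t}^\top G R_{m,t})=\kappa E_{m,t}$, exactly as in \eqref{eq:joint-bob-kl}. This yields
\begin{equation*}
D(P_{S^nY^n|m}\|P^{(0)})=\kappa\sum_{t=1}^n E_{m,t}.
\end{equation*}

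Next, I apply the Jensen argument from the proof of Theorem~\ref{thm:exact-uniform} with $L=n$ and budget $\delta$, using \eqref{eq:global-per-message-covertness}. Since $g$ is convex and increasing, this gives $\sum_tE_{m,t}\le n g^{-1}(\delta/n)$ for every $m$, uniformly in the message. Averaging over $m$ then proves \eqref{eq:global-converse-main}.

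For \eqref{eq:global-converse-asymptotic}, I substitute \eqref{eq:g-inverse-expansion} with $u=\delta/n\to0$. This gives $n g^{-1}(\delta/n)=\sqrt{\delta n/c_W}+O(1)$. Dividing by $1-\epsilon$ and absorbing $h_2(\epsilon)/(1-\epsilon)$ into the $O(1)$ term completes the bound.

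I expect the main obstacle to be justifying that the idle law is a legitimate reference even though the message-dependent phases make the true output marginal non-product. The golden-formula inequality handles this without computing the mixture. The remaining subtle point is the rotation invariance, which ensures that the phase sequence cannot increase the linear information slope beyond $\kappa$.
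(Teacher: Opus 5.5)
Your proposal is correct and follows the paper's proof essentially step for step. Your golden-formula bound is the paper's mixture identity $\frac1M\sum_m D(P_m\|P_0)=I(M;S^n,Y^n)+D(\bar P\|P_0)$ with the nonnegative second term dropped, and your rotation-invariance computation of $\kappa E_{m,t}$, the Jensen bound, Fano's inequality, and the asymptotic step are all the same as in the paper.
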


\begin{proof}
Let $P_m$ be the law of $(S^n,Y^n)$ under message $m$ and let $P_0$ be
the all-idle law. Write $R(u)\in\mathbb R^{2\times2}$ for the orthogonal
rotation induced by multiplication by a unit-modulus complex number $u$.
Conditioned on $X_t=\operatorname{vec}(S_t)$, the single-use divergence is
\begin{equation}
 \frac{\eta E_{m,t}}{2}
 X_t^\top R(u_{m,t})^\top G R(u_{m,t})X_t.
\end{equation}
Averaging over $X_t\sim\NN(0,I_2)$ gives
\begin{align}
 \frac{\eta E_{m,t}}{2}
 \Tr\!\left[R(u_{m,t})^\top G R(u_{m,t})\right]
 &=\frac{\eta E_{m,t}}{2}\Tr G\nonumber\\
 &=\kappa E_{m,t},
\end{align}
where orthogonal similarity preserves trace. Independence across uses
therefore yields
\begin{equation}
 D(P_m\|P_0)=\kappa\sum_{t=1}^nE_{m,t}.
\label{eq:global-message-divergence}
\end{equation}
Since $g$ is increasing and convex, Jensen's inequality and
\eqref{eq:global-per-message-covertness} imply
\begin{equation}
 g\!\left(\frac1n\sum_{t=1}^nE_{m,t}\right)
 \le \frac1n\sum_{t=1}^ng(E_{m,t})
 \le \frac{\delta}{n},
\end{equation}
and hence
\begin{equation}
 \sum_{t=1}^nE_{m,t}\le n g^{-1}(\delta/n).
\label{eq:global-energy-bound}
\end{equation}
For a uniform message and $\bar P=M^{-1}\sum_mP_m$, the
relative-entropy mixture identity gives
\begin{equation}
 \frac1M\sum_{m=1}^M D(P_m\|P_0)
 =I(M;S^n,Y^n)+D(\bar P\|P_0).
\end{equation}
The second term is nonnegative, while Fano's inequality gives
\begin{equation}
 I(M;S^n,Y^n)\ge(1-\epsilon)\log M-h_2(\epsilon).
\end{equation}
Combining these inequalities with
\eqref{eq:global-message-divergence}--\eqref{eq:global-energy-bound}
proves \eqref{eq:global-converse-main}. Finally,
\eqref{eq:g-inverse-expansion} gives
\begin{equation}
 n g^{-1}(\delta/n)
 =\sqrt{\frac{\delta n}{c_W}}+O(1),
\end{equation}
which proves \eqref{eq:global-converse-asymptotic}.
\end{proof}

\subsection{Dense antipodal achievability}

\begin{theorem}[Dense antipodal achievability]
\label{thm:dense-achievability}
For every fixed $\delta>0$, there exist full-horizon codes with exact
maximal-message Willie QRE $\delta$, vanishing maximal error, and
\begin{equation}
 \log M_n
 =\kappa n g^{-1}(\delta/n)(1-o(1))
 =\kappa\sqrt{\frac{\delta n}{c_W}}(1-o(1)).
 \label{eq:dense-achievability}
\end{equation}
\end{theorem}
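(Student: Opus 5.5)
We will prove Theorem~\ref{thm:dense-achievability} by a random-coding construction inside the multiplicative-CSCG family \eqref{eq:global-mod-family}, using uniform energy and message-dependent antipodal phases. Set $E_{m,t}=E_n^*:=g^{-1}(\delta/n)$ for every message and every use, so that each message has exact Willie cost $n g(E_n^*)=\delta$. The key observation is that Willie's averaged state does not depend on the phase sequence: $u_{m,t}S_t$ is again $\CN(0,1)$, so \eqref{eq:willie-exact-state} gives $\tau_{\bar n_W+a_WE_n^*}^{\otimes n}$ for every $m$. Covertness therefore holds with equality for every message, independently of the codebook. For the phases we draw $u_{m,t}\in\{+1,-1\}$ i.i.d.\ uniform across $(m,t)$. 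Bob observes $(S^n,Y^n)$, and conditioned on $S_t$ and $u_{m,t}$ the observation is $Y_t\sim\NN(u_{m,t}\sqrt{\eta E_n^*}\,CX_t,\Sigma_0)$. Each use is therefore a binary-input Gaussian channel whose state $X_t$ is known to the receiver.

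Reliability will come from an information-density (Feinstein or dependence-testing) bound followed by expurgation to pass from average to maximal error. The information density is $\iota(u;S^n,Y^n)=\sum_t \log\bigl(p(Y_t\mid u_t,S_t)/\bar p(Y_t\mid S_t)\bigr)$, where $\bar p$ is the equal mixture of $\pm$. We then need two estimates. First, per-use mutual information: write $s_t:=\eta E_n^*X_t^\top GX_t=2E_n^*Z_t$ for the conditional SNR-type quantity. In the low-energy regime, $I(u_t;Y_t\mid X_t)=\tfrac12 s_t-O(s_t^2)=E_n^*Z_t-O((E_n^*)^2Z_t^2)$, and averaging over $X_t$ gives $\kappa E_n^*(1-O(E_n^*))$. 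The total is $\kappa nE_n^*(1-o(1))=\kappa\sqrt{\delta n/c_W}(1-o(1))$ by \eqref{eq:g-inverse-expansion}. Second, concentration: we need $\mathrm{Var}(\iota)=o\bigl((\kappa nE_n^*)^2\bigr)$. Each term has variance $O(E_n^*)$ (a Gaussian LLR with variance about $2E_n^*Z_t$, plus fluctuation of $Z_t$, which is sub-exponential), so the total variance is $O(nE_n^*)=O(\sqrt n)$. This is $o(n)$, and Chebyshev then suffices.

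With these estimates, Feinstein's lemma with threshold $\log M_n=\kappa nE_n^*(1-2\zeta)$ gives average error at most $\Prb(\iota\le \kappa nE_n^*(1-\zeta))+e^{-\zeta\kappa nE_n^*}\to0$ for every fixed $\zeta>0$. Expurgating half the codewords gives vanishing maximal error at a cost of only $\log2$ in payload, and covertness is unaffected because it holds for every codeword. Letting $\zeta\downarrow0$ slowly yields \eqref{eq:dense-achievability}, and the second equality follows again from \eqref{eq:g-inverse-expansion}.

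The main obstacle is making the low-SNR expansion of the binary-input Gaussian mutual information, and of its variance, uniform in the random state $X_t$. $Z_t$ is unbounded, so the $O(s_t^2)$ remainder must be controlled in expectation. We would first use the bound $I\ge \tfrac12 s-\tfrac14 s^2$, valid for all $s\ge0$ for BPSK (or equivalently, lower-bound the information density directly as $\ell$-type terms minus $\log\cosh$ corrections, using $\log\cosh x\le x^2/2$). Gaussian moments of $Z_t$ then make the correction term $O(n(E_n^*)^2)=O(1)$, which is negligible.
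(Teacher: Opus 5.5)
Your proposal is correct and follows essentially the same route as the paper. The shared elements are uniform energy $g^{-1}(\delta/n)$ with antipodal phases on the shared CSCG symbol, so Willie's averaged state is codeword-independent with QRE exactly $\delta$; Feinstein's threshold bound with Chebyshev, using total information-density variance $O(nE_n)=O(\sqrt n)$; and expurgation to maximal error. The one minor difference is that you control the low-SNR remainder globally via $\log\cosh x\le x^2/2$ (giving $I\ge d/2-d^2/2$ for all $d\ge0$), whereas the paper uses the I--MMSE expansion near zero plus the trivial bound $J(d)\le\log 2$ for large $d$; both yield $I_E=\kappa E+O(E^2)$ after averaging over the finite second moment of $Z$.
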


We first record the two auxiliary facts used in its proof.

\subsubsection{Information-density threshold bound}

\begin{lemma}[Information-density threshold bound]
\label{lem:threshold-coding}
For a memoryless channel with input distribution $P_X$, define
\begin{equation}
 \imath(X^n;Y^n)
 :=\log\frac{dP_{Y^n|X^n}(Y^n|X^n)}{dP_{Y^n}(Y^n)}.
\end{equation}
For every integer $M\ge2$ and every $\gamma>0$, there exists an $M$-code
whose average error probability satisfies
\begin{equation}
 P_{\rm e}^{\rm av}
 \le \Prb\!\left[\imath(X^n;Y^n)\le \log M+\gamma\right]+e^{-\gamma}.
 \label{eq:threshold-coding-bound}
\end{equation}
\end{lemma}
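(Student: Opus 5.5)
This is Feinstein's threshold-decoding bound, and I will prove it with a random-coding argument. First draw $M$ codewords $X^n(1),\ldots,X^n(M)$ independently from $P_X^{\otimes n}$. On observing $Y^n$, the decoder outputs the unique $\widehat m$ with $\imath(X^n(\widehat m);Y^n)>\log M+\gamma$. If there is no such index, or more than one, it declares an error. By the symmetry of the random ensemble, the ensemble-averaged error probability equals the error probability conditional on message $1$ being sent. So it suffices to bound that conditional error and then pick one codebook whose average error is no larger than the ensemble average.

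Conditioned on message $1$, an error requires either the true codeword to fail the threshold, or some wrong codeword to pass it. The first event has probability exactly $\Prb[\imath(X^n;Y^n)\le\log M+\gamma]$, where $(X^n,Y^n)\sim P_X^{\otimes n}P_{Y^n|X^n}$. For the second, fix a wrong index $m'\ne1$. Then $X^n(m')$ is independent of $Y^n$, and $Y^n$ has the output law $P_{Y^n}$, so the pair follows $P_X^{\otimes n}\times P_{Y^n}$. The standard change of measure then gives
\begin{equation*}
\Prb\bigl[\imath(\bar X^n;Y^n)>\log M+\gamma\bigr]
=\E\!\left[e^{-\imath(X^n;Y^n)}\1\{\imath(X^n;Y^n)>\log M+\gamma\}\right]
\le \frac{e^{-\gamma}}{M},
\end{equation*}
where the expectation is under the joint law. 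A union bound over the $M-1$ wrong indices gives at most $e^{-\gamma}$. Adding the two contributions yields \eqref{eq:threshold-coding-bound} for the ensemble average, and hence for some deterministic codebook.

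The only delicate step is the change of measure in a general (continuous) setting. Here I need $P_{Y^n|X^n=x}\ll P_{Y^n}$ for $P_X^{\otimes n}$-almost every $x$, so that the information density is well defined. The identity then reads
\begin{equation*}
d(P_X^{\otimes n}\times P_{Y^n})=e^{-\imath}\,d(P_X^{\otimes n}P_{Y^n|X^n})
\end{equation*}
on the set where the joint density is positive. This absolute continuity holds automatically because $P_{Y^n}$ is the mixture of the conditionals. Off that set the inequality only loses mass, so the bound survives. In the intended application the channel output is $(S^n,Y^n)$ with Gaussian densities, so all Radon--Nikodym derivatives are explicit and no measurability issue arises.
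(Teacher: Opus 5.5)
Your proposal is correct and is essentially the paper's argument. The paper simply invokes Feinstein's bound via i.i.d.\ random codewords from $P_X^{\otimes n}$, threshold decoding, and a union bound averaged over the codebook. You have written out exactly those steps, including the change-of-measure estimate $e^{-\gamma}/M$ per wrong codeword.

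One small correction to your side remark: $P_{Y^n|X^n=x}\ll P_{Y^n}$ is not automatic for general alphabets. For example, the noiseless channel $Y=X$ with $X$ uniform on $[0,1]$ violates it. However, this absolute continuity is presupposed by the lemma's definition of $\imath$ and holds in the Gaussian application, so your proof is unaffected.
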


\begin{proof}
This is Feinstein's information-density bound. Independently draw the
codewords from $P_X^{\otimes n}$ and use threshold decoding. Averaging the
union bound over the random codebook gives
\eqref{eq:threshold-coding-bound}; see, e.g.,
\cite{PolyanskiyWuITNotes}.
\end{proof}

\subsubsection{Low-energy binary information}
\label{app:proof-low-energy}

\begin{lemma}[Low-energy binary information]
\label{lem:low-energy-binary}
Set $E>0$ and let an equiprobable binary symbol $U\in\{-1,+1\}$ choose the
complex displacement amplitude $\alpha=\sqrt E\,US$. For the induced
memoryless channel $U\mapsto(S,Y)$, let $I_E$ be the mutual information and
$V_E$ the variance of the information density. Then
\begin{equation}
 I_E=\kappa E+O(E^2),\qquad V_E=O(E),\qquad E\downarrow0.
 \label{eq:low-energy-info}
\end{equation}
\end{lemma}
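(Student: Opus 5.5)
We would work conditionally on $S$ and exploit the Gaussian structure of the measured channel. Write $X=\operatorname{vec}(S)\sim\NN(0,I_2)$ and $\mu=\sqrt{\eta}\,CX$, so that $Y\mid(S,U)\sim\NN(\sqrt E\,U\mu,\Sigma_0)$. Since $S$ is independent of $U$, we have $I_E=I(U;Y\mid S)$ and the information density is $\imath=\log\bigl(p(Y\mid U,S)/p(Y\mid S)\bigr)$. Conditioned on $S$, this is binary antipodal signalling in Gaussian noise. Whitening by $\Sigma_0^{-1/2}$ reduces it to a scalar BPSK channel with SNR parameter $s:=E\,\mu^\top\Sigma_0^{-1}\mu=2EZ$, where $Z=\frac{\eta}{2}X^\top GX$ is as in \eqref{eq:Z-def}. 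Concretely, with $\Lambda:=\sqrt E\,\mu^\top\Sigma_0^{-1}Y$ the sufficient statistic, the information density takes the closed form
\begin{equation}
\imath=\log 2-\log\!\bigl(1+e^{-2U\Lambda}\bigr),
\qquad U\Lambda\mid S\sim\NN(s,s).
\end{equation}

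The key steps, in order, are as follows.

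\emph{Step 1: conditional expansion.} For the BPSK mutual information we use the standard small-SNR expansion $I_{\rm BPSK}(s)=\frac{s}{2}-\frac{s^2}{4}+O(s^3)$, or more robustly the uniform bounds $\frac{s}{2}-\frac{s^2}{4}\le I_{\rm BPSK}(s)\le\frac{s}{2}$. These follow from $\log(1+e^{-x})=\log2-\frac{x}{2}+\frac{x^2}{8}-\dots$ together with Gaussian moments of $x=2U\Lambda$. We verify the lower bound via the I-MMSE relation: $I'(s)=\frac12\operatorname{mmse}(s)$ with $\operatorname{mmse}(s)\ge 1-s$ for BPSK. The upper bound is $I\le\frac{s}{2}$ (Gaussian-input capacity, or simply $I\le D(P_{Y\mid U,S}\|P_0)$ averaged).

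\emph{Step 2: average over $S$.} This gives
\begin{equation}
EZ\ \ge\ I(U;Y\mid S=s)\ \ge\ EZ-E^2Z^2 .
\end{equation}
Taking expectations yields $\kappa E\ge I_E\ge\kappa E-E^2\,\E Z^2$, and $\E Z^2<\infty$ because $Z$ is a quadratic form in a Gaussian vector. Hence $I_E=\kappa E+O(E^2)$.

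\emph{Step 3: variance.} We bound the variance by the second moment, $V_E\le\E\imath^2$. The function $\phi(x)=\log2-\log(1+e^{-x})$ satisfies $\phi(0)=0$ and $|\phi'|\le1$, so $|\imath|\le 2|U\Lambda|$. Conditionally on $S$, $\E(U\Lambda)^2=s+s^2$, so
\begin{equation}
\E\imath^2\le 4\,\E[2EZ+4E^2Z^2]=8\kappa E+O(E^2)=O(E).
\end{equation}

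The main obstacle is the \emph{uniformity} of the conditional expansion in $S$: $Z$ is unbounded, so a pointwise Taylor expansion with an $O(s^3)$ remainder does not immediately integrate. We avoid this by using only global inequalities valid for all $s\ge0$ — namely $I\le s/2$, $I\ge s/2-s^2/4$ from the MMSE bound, and the Lipschitz bound on $\phi$ — so that only finite moments $\E Z$ and $\E Z^2$ of the Gaussian quadratic form are needed.
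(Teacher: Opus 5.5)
Your proof is correct and follows essentially the same route as the paper. Like the paper, you condition on $S$, whiten and matched-filter down to scalar BPSK with SNR $2ZE$, and control $I(U;Y\mid S)-EZ$ by a bound quadratic in the SNR valid for all SNR values, so that only $\E Z^2<\infty$ is needed; the variance then follows from $|T-\log\cosh T|\le 2|T|$ with $\E T^2=s+s^2$. The only difference is that you use the global sandwich $s/2-s^2/4\le I(s)\le s/2$, whereas the paper combines a local Taylor remainder with the trivial bound $J\le\log 2$; your unproved claim $\operatorname{mmse}(s)\ge 1-s$ is true (e.g., from $\frac{d}{ds}\operatorname{mmse}(s)=-\E[\operatorname{Var}(U\mid Y)^2]\ge -1$), and it can also be bypassed via $\log\cosh T\le T^2/2$, which gives $I(s)\ge s/2-s^2/2$.
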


\begin{proof}
Conditioned on $X=\operatorname{vec}(S)$, whitening and matched filtering
reduce the sufficient statistic to
\begin{equation}
 Y_{\rm mf}=\sqrt d\,U+N_0,
 \qquad d=2ZE,
 \qquad N_0\sim\NN(0,1).
\end{equation}
Let $J(d)=I(U;\sqrt d\,U+N_0)$. The
mutual-information--minimum-mean-square-error (I--MMSE) identity implies
$J(d)=d/2+O(d^2)$ as $d\downarrow0$ \cite{GuoShamaiVerdu2005}. This local
expansion can be made uniform for the random value $d=2ZE$: for
$0\le d\le d_0$ use the Taylor remainder, while for $d>d_0$ use
$0\le J(d)\le\log2$ and $d/2\le d^2/(2d_0)$. Thus a finite constant $C$
exists such that
\begin{equation}
 |J(d)-d/2|\le C d^2,\qquad d\ge0.
\end{equation}
Since the Gaussian quadratic form $Z$ has a finite second moment,
\begin{equation}
 I_E=\E[J(2ZE)]=\kappa E+O(E^2).
\end{equation}
Under $U=+1$, the conditional information density is
$T-\log\cosh T$ with $T=d+\sqrt d\,N_0$. For $T\ge0$,
$0\le T-\log\cosh T\le T$, whereas for $T<0$ its absolute value is at
most $2|T|$. Hence
\begin{equation}
 |T-\log\cosh T|^2\le4T^2,
\end{equation}
and the conditional second moment is at most $4(d+d^2)$. Averaging over
$Z$ gives $V_E=O(E)$.
\end{proof}

\subsubsection{Proof of Theorem~\ref{thm:dense-achievability}}
\label{app:proof-dense-achievability}

\begin{proof}
Set $E_n=g^{-1}(\delta/n)$ and transmit
$\alpha_{m,t}=\sqrt{E_n}\,b_{m,t}S_t$ with binary codewords
$b_m^n\in\{-1,+1\}^n$. Circular symmetry makes Willie's averaged state
independent of the codeword, with QRE $ng(E_n)=\delta$ for every message.

Apply Lemma~\ref{lem:threshold-coding} to the binary-input channel in
Lemma~\ref{lem:low-energy-binary}. Choose a sequence $\zeta_n\downarrow0$
such that $\zeta_n^2\sqrt n\to\infty$, for example
$\zeta_n=n^{-1/8}$, and set
\begin{equation}
 \widetilde M_n
 =\left\lfloor\exp\!\left[(1-\zeta_n)nI_{E_n}\right]\right\rfloor,
 \qquad
 \gamma_n=\frac{\zeta_n}{2}nI_{E_n}.
\end{equation}
Because $nI_{E_n}=\Theta(\sqrt n)$,
\begin{equation}
 \log\widetilde M_n+\gamma_n
 \le(1-\zeta_n/2)nI_{E_n}
\end{equation}
for all sufficiently large $n$. Chebyshev's inequality therefore gives
\begin{equation}
 \Prb\!\left[
 \sum_{t=1}^n\imath_t
 \le(1-\zeta_n/2)nI_{E_n}
 \right]
 \le\frac{4V_{E_n}}{\zeta_n^2nI_{E_n}^2}=o(1),
\end{equation}
while $e^{-\gamma_n}=o(1)$. Lemma~\ref{lem:threshold-coding} thus yields
a deterministic codebook of size $\widetilde M_n$ with average error
$o(1)$. Removing the worse half of its codewords gives a subcode with
maximal error $o(1)$ and size $M_n\ge\widetilde M_n/2$. This expurgation
preserves exact per-message covertness and changes $\log M_n$ by only
$O(1)$.

Finally, Lemma~\ref{lem:low-energy-binary} and
$E_n=g^{-1}(\delta/n)$ give
\begin{align}
 nI_{E_n}
 &=\kappa nE_n+O(nE_n^2)\nonumber\\
 &=\kappa n g^{-1}(\delta/n)+O(1).
\end{align}
Together with $\zeta_n\to0$, this proves
\eqref{eq:dense-achievability}.
\end{proof}

For this construction, maximum-likelihood decoding is equivalently a
full-horizon correlation test: up to terms independent of $m$, Bob maximizes
\begin{equation}
 \sum_{t=1}^n b_{m,t}X_t^\top C^\top\Sigma_0^{-1}Y_t
\end{equation}
over the codebook. This decoder waits for the full horizon and does not define
an online change-detection stopping time.

\section{Key-Controlled Homodyne as a Receiver Instantiation}
\label{app:key-controlled-homodyne}

This appendix gives a concrete receiver instantiation for the fixed-measurement model in Section~\ref{sec:model}. Bob uses the current shared symbol to choose the homodyne local-oscillator direction before each measurement. The result below optimizes only within the class of single-quadrature homodyne policies; it is included to show how the receiver-dependent slope $\kappa$ can be specialized, not to claim optimality over general Gaussian or quantum measurements. Let the premeasurement quadrature covariance be $V\succ0$, with vacuum covariance $I_2/2$. We assume that $V$ is independent of the key and is the same under the active and idle hypotheses.

\begin{proposition}[Optimal direction within key-controlled homodyne]
\label{prop:adaptive-homodyne}
Suppose $X\sim\NN(0,I_2)$ is the current displacement-direction vector. Bob knows $X$ before measuring the current mode and may choose the homodyne direction $w$ as a measurable function of $X$. The conditional KL slope is
\begin{equation}
 Z_{\rm hom}(w,X)=\frac{\eta}{2}
 \frac{(w^\top X)^2}{w^\top Vw}.
 \label{eq:homodyne-slope}
\end{equation}
Among single-quadrature homodyne measurements it is maximized, for $X\ne0$, by
\begin{equation}
 w^*(X)=\frac{V^{-1}X}{\norm{V^{-1}X}},
 \label{eq:adaptive-direction}
\end{equation}
with conditional and average slopes
\begin{align}
 Z_{\rm hom}^{\rm ad}(X)&=\frac{\eta}{2}X^\top V^{-1}X,
 \label{eq:adaptive-conditional-slope}\\
 \kappa_{\rm hom}^{\rm ad}&=\frac{\eta}{2}\Tr(V^{-1}).
 \label{eq:adaptive-mean-slope}
\end{align}
\end{proposition}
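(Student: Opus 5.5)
The plan is to model a single homodyne measurement in direction $w$ as a scalar Gaussian readout, compute its conditional KL divergence with the equal-covariance formula, and then maximize over $w$ by a generalized Rayleigh quotient (equivalently, Cauchy--Schwarz in the $V$-inner product). For a unit vector $w\in\mathbb R^2$ the readout is $y=w^\top q$. Here $q$ is the quadrature vector at Bob's input.

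Under the idle hypothesis $q\sim\NN(0,V)$. Under the active hypothesis with shared direction $X$, the transmitted displacement is $\sqrt{E}S$ with $\operatorname{vec}(S)=X$ in the normalization of Section~\ref{sec:model}. After transmissivity $\eta$ the quadrature mean shifts by $\sqrt{\eta E}\,c\,X$, where the scalar $c$ is fixed by the quadrature convention. We choose the convention consistent with \eqref{eq:bob-laws}, so that $C=I_2$ and $\Sigma_0=V$ reproduce $G=V^{-1}$. The covariance is unchanged by the displacement, so $y\mid X\sim\NN(\sqrt{\eta E}\,w^\top X,\;w^\top Vw)$ under activity. The one-dimensional equal-variance KL formula then gives $D=\frac{\eta E}{2}(w^\top X)^2/(w^\top Vw)$. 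Dividing by $E$ yields \eqref{eq:homodyne-slope}, which is exactly \eqref{eq:conditional-bob-kl} specialized to $C=w^\top$ and $\Sigma_0=w^\top Vw$. Because $w$ is a measurable function of $X$ chosen before measurement, the same conditional computation holds pointwise in $X$.

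For the maximization, fix $X\neq0$ and write $w^\top X=(V^{1/2}w)^\top(V^{-1/2}X)$. Cauchy--Schwarz gives $(w^\top X)^2\le (w^\top Vw)(X^\top V^{-1}X)$, with equality iff $V^{1/2}w\parallel V^{-1/2}X$, that is, $w\parallel V^{-1}X$. Normalizing yields \eqref{eq:adaptive-direction}, and substituting gives \eqref{eq:adaptive-conditional-slope}. Since the objective is invariant under $w\mapsto cw$, the maximizer is unique up to sign. Finally, $\E[X^\top V^{-1}X]=\Tr(V^{-1}\E XX^\top)=\Tr(V^{-1})$ because $X\sim\NN(0,I_2)$, which gives \eqref{eq:adaptive-mean-slope}. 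The chain-rule argument of \eqref{eq:joint-bob-kl} shows this average is the joint-law slope.

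The main obstacle is the first step: fixing the normalization of the displacement in quadrature units. The claimed formulas hold only if the mean shift is $\sqrt{\eta E}X$ in the same units as $V$, with vacuum $I_2/2$. I will check this explicitly from $\Disp(\alpha)$ with $\alpha=\sqrt{E}S$ and $\operatorname{vec}(S)=(\sqrt2\Re S,\sqrt2\Im S)$. I will also state the assumption, already made in the appendix, that $V$ does not depend on the key or the hypothesis, so that the equal-covariance KL formula applies. The remaining steps are routine.
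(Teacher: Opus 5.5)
Your proposal is correct and is essentially the paper's proof: the equal-variance scalar Gaussian KL for the readout $w^\top q$, then generalized Cauchy--Schwarz with equality exactly at $w\propto V^{-1}X$, then $\E[X^\top V^{-1}X]=\Tr(V^{-1})$. The normalization you flag as the main obstacle does check out, since with vacuum covariance $I_2/2$ the amplitude $\sqrt{\eta E}\,S$ has quadrature mean $\sqrt{\eta E}\,(\sqrt2\Re S,\sqrt2\Im S)^\top=\sqrt{\eta E}\,X$; the paper simply takes this for granted.
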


\begin{proof}
The homodyne outcome has variance $w^\top Vw$ and active mean $\sqrt{\eta E}\,w^\top X$, so the equal-variance Gaussian KL is $E Z_{\rm hom}(w,X)$. Generalized Cauchy--Schwarz gives
\begin{equation}
 (w^\top X)^2\le(w^\top Vw)(X^\top V^{-1}X),
\end{equation}
with equality precisely when $w\propto V^{-1}X$. Averaging $X^\top V^{-1}X$ over $X\sim\NN(0,I_2)$ gives \eqref{eq:adaptive-mean-slope}.
\end{proof}

For a fixed homodyne direction, with $\lambda_{\min}(V)$ denoting the smallest eigenvalue of $V$,
\begin{equation}
 \kappa_{\rm hom}^{\rm fix,*}
 =\frac{\eta}{2\lambda_{\min}(V)}.
\end{equation}
Ideal heterodyne adds vacuum covariance $I_2/2$ and has
\begin{equation}
 \kappa_{\rm het}
 =\frac{\eta}{2}\Tr\!\left(V+\frac12I_2\right)^{-1}.
 \label{eq:heterodyne-slope}
\end{equation}
Because $V^{-1}-(V+I_2/2)^{-1}\succ0$,
\begin{equation}
 \kappa_{\rm hom}^{\rm ad}>\kappa_{\rm het}.
 \label{eq:adaptive-beats-heterodyne}
\end{equation}
For $V=vI_2$ with $v\ge1/2$,
\begin{equation}
 \kappa_{\rm hom}^{\rm fix,*}=\frac{\eta}{2v},\quad
 \kappa_{\rm hom}^{\rm ad}=\frac{\eta}{v},\quad
 \kappa_{\rm het}=\frac{\eta}{v+1/2}.
\end{equation}
Thus heterodyne is no worse than fixed homodyne in the isotropic case, with equality only at vacuum noise, whereas key-controlled homodyne is strictly better. This is an optimization only over single-quadrature homodyne policies; it is not an optimality statement over all Gaussian positive operator-valued measures (POVMs) or all quantum receivers. Optimized homodyne measurements for displaced Gaussian models have related precedents \cite{Aspachs2009,Roberson2021}.

Any conclusion that depends only on the linear-information identity \eqref{eq:joint-bob-kl} may therefore be specialized by replacing $\kappa$ with \eqref{eq:adaptive-mean-slope}. Results that additionally use the distribution or concentration of the conditional slope require the corresponding case-specific check. For QPSK and isotropic $V$, the conditional slope is deterministic; for anisotropic $V$, it takes finitely many bounded values and can be controlled by Hoeffding's inequality.

\section{Weighted Analyticity for the QPSK Expansions}
\label{app:qpsk-regularity}

The finite-key proof uses Taylor expansions of infinite-dimensional displaced thermal states. This appendix justifies those expansions in a weighted Hilbert--Schmidt norm and then applies them to the QPSK remainder and collision coefficient.

This appendix supplies the infinite-dimensional regularity used in
Lemmas~\ref{lem:qpsk-cost} and \ref{lem:collision-expansion}. Put
\begin{align}
 \tau&=\tau_{N_W}=\sum_{n\ge0}p_n\ket n\bra n,\nonumber\\
 p_n&=(1-r)r^n,
 \qquad r=\frac{N_W}{N_W+1}\in(0,1).
\end{align}
The weighted norm $\norm{\cdot}_{2,\tau}$ and the notation $O_{2,\tau}$ are defined in \eqref{eq:weighted-hs-norm} and the paragraph following it. We also write $\operatorname{ad}_K(A):=[K,A]$ and $\{A,C\}:=AC+CA$. For $s\in\{1,i,-1,-i\}$, let
$K_s=s a^\dagger-s^*a$, $\varepsilon=\sqrt{a_WE}$, and
\begin{equation}
 \rho_s(\varepsilon)=e^{\varepsilon K_s}\tau e^{-\varepsilon K_s}
 =e^{\varepsilon\operatorname{ad}_{K_s}}(\tau).
 \label{eq:appendix-rho-eps}
\end{equation}

\begin{lemma}[Uniform thermal domination]
\label{lem:thermal-domination}
For all sufficiently small $\varepsilon$, with
$\sigma_E=\frac14\sum_s\rho_s(\varepsilon)$ and
$\omega_E=\tau_{N_W+\varepsilon^2}$,
\begin{align}
 \sigma_E&\succeq e^{-\varepsilon^2/N_W}\tau,
 \label{eq:sigma-lower-domination}\\
 \omega_E&\succeq
 \frac{N_W+1}{N_W+\varepsilon^2+1}\,\tau.
 \label{eq:omega-lower-domination}
\end{align}
Consequently the inverse-quarter-power sandwiches used below are well defined,
and their weighted Hilbert--Schmidt norms are uniformly controlled by the
corresponding $\tau$-weighted norm in a neighborhood of zero.
\end{lemma}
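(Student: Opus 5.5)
The plan is to prove both dominations directly, using the diagonal Fock structure for $\omega_E$ and the Glauber--Sudarshan $P$-representation for $\sigma_E$. Both arguments are exact for every $\varepsilon$, so ``sufficiently small'' is needed only for the final uniformity remark.

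\emph{Thermal comparison \eqref{eq:omega-lower-domination}.} Both $\omega_E$ and $\tau$ are diagonal in the Fock basis, so it suffices to compare eigenvalues. Put $N'=N_W+\varepsilon^2$ and use \eqref{eq:thermal-state-definition}. The ratio of the $k$-th eigenvalues is
\begin{equation*}
\frac{N_W+1}{N'+1}\left[\frac{N'/(N'+1)}{N_W/(N_W+1)}\right]^k .
\end{equation*}
The map $x\mapsto x/(x+1)$ is increasing, so the bracket is at least one. Hence the ratio is at least $(N_W+1)/(N'+1)$ for every $k$, which is exactly \eqref{eq:omega-lower-domination}.

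\emph{QPSK mixture \eqref{eq:sigma-lower-domination}.} Because $N_W>0$, the thermal state has a genuine Gaussian $P$-function:
\begin{equation*}
\tau=\int\frac{1}{\pi N_W}e^{-|\beta|^2/N_W}\ket{\beta}\bra{\beta}\,d^2\beta .
\end{equation*}
Conjugating by $\Disp(\varepsilon s)$ shifts this $P$-function, so
\begin{equation*}
\rho_s(\varepsilon)=\int\frac{1}{\pi N_W}e^{-|\beta-\varepsilon s|^2/N_W}\ket{\beta}\bra{\beta}\,d^2\beta .
\end{equation*}
Since $|s|=1$, expanding the exponent gives
\begin{equation*}
e^{-|\beta-\varepsilon s|^2/N_W}=e^{-|\beta|^2/N_W}\,e^{-\varepsilon^2/N_W}\,e^{2\varepsilon\Re(\beta^*s)/N_W}.
\end{equation*}
We average over $s\in\{1,i,-1,-i\}$. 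The four exponents $2\varepsilon\Re(\beta^*s)/N_W$ sum to zero because $\sum_s s=0$. By the AM--GM inequality (equivalently, convexity of $\exp$), $\frac14\sum_s e^{2\varepsilon\Re(\beta^*s)/N_W}\ge1$ pointwise in $\beta$. Therefore
\begin{equation*}
\sigma_E-e^{-\varepsilon^2/N_W}\tau=\int w(\beta)\ket{\beta}\bra{\beta}\,d^2\beta ,\qquad w\ge0 .
\end{equation*}
Each coherent projector is positive semidefinite, so the right-hand side is positive semidefinite, which gives \eqref{eq:sigma-lower-domination}. The integrals should be read weakly, i.e., tested against vectors with finite Fock support, and extended by density. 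This is the only technical point, and it is routine because all the weights are integrable Gaussians.

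\emph{Consequences.} Write $c$ for either domination constant, so that $\sigma\succeq c\tau$ with $\sigma\in\{\sigma_E,\omega_E\}$. The L\"owner--Heinz theorem (operator monotonicity of $x\mapsto x^{p}$ for $p\in(0,1]$ and of $x\mapsto -x^{-p}$) gives
\begin{equation*}
\sigma^{1/2}\succeq c^{1/2}\tau^{1/2},\qquad \sigma^{-1/2}\preceq c^{-1/2}\tau^{-1/2}.
\end{equation*}
Consequently $\norm{\sigma^{-1/4}A\sigma^{-1/4}}_2^2=\Tr[\sigma^{-1/2}A^\dagger\sigma^{-1/2}A]\le c^{-1}\norm{A}_{2,\tau}^2$ for $A=A^\dagger$. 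Both constants tend to $1$ as $\varepsilon\to0$, so the sandwiches are well defined, and the norms are uniformly controlled by the $\tau$-weighted norm near zero.

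The main obstacle I expect is the rigorous justification of the $P$-representation manipulation in infinite dimensions. If it proves delicate, I would fall back on checking the inequality on the finite-dimensional Fock truncations and passing to the limit.
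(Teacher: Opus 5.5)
Your proof is correct and follows the paper's argument. The thermal bound uses the eigenvalue ratio increasing in $k$ with its minimum at $k=0$, and the QPSK bound uses the shifted Gaussian $P$-function with Jensen's inequality on four exponents whose average vanishes; for the consequence, the paper writes your form inequality $\sigma^{-1/2}\preceq c^{-1/2}\tau^{-1/2}$ as boundedness of $R_\chi=\chi^{-1/4}\tau^{1/4}$ with $\|R_\chi\|\le c^{-1/4}$ and applies the Hilbert--Schmidt ideal property to $\chi^{-1/4}A\chi^{-1/4}=R_\chi(\tau^{-1/4}A\tau^{-1/4})R_\chi^\dagger$, which is the clean way to make your trace-monotonicity step rigorous with unbounded weights.
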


\begin{proof}
The thermal state has positive Glauber--Sudarshan density $P_\tau(z)=(\pi N_W)^{-1}e^{-|z|^2/N_W}$. The positive $P$ density of the QPSK average is
\begin{align}
 P_{\sigma_E}(z)
 &=P_\tau(z)e^{-\varepsilon^2/N_W}
 \frac14\sum_{s\in\{1,i,-1,-i\}}
 \exp\!\left(\frac{2\varepsilon\Re(zs^*)}{N_W}\right).
\end{align}
The average of the four real exponents is zero, so Jensen's inequality gives
$P_{\sigma_E}(z)\ge e^{-\varepsilon^2/N_W}P_\tau(z)$ pointwise. Integrating
against the positive coherent-state projectors proves
\eqref{eq:sigma-lower-domination}. For the thermal state,
$p_n(N_W+\varepsilon^2)/p_n(N_W)$ is increasing in $n$, and its minimum at
$n=0$ is $(N_W+1)/(N_W+\varepsilon^2+1)$, proving
\eqref{eq:omega-lower-domination}.

If $\chi\succeq c\tau$ with $c>0$, operator monotonicity of the
square root gives $\tau^{1/2}\preceq c^{-1/2}\chi^{1/2}$ as quadratic
forms. Hence the densely defined operator
$R_\chi=\chi^{-1/4}\tau^{1/4}$ extends to a bounded operator satisfying
$R_\chi R_\chi^\dagger\preceq c^{-1/2}\1$ and therefore
$\|R_\chi\|\le c^{-1/4}$. For every self-adjoint $A$ with finite
$\|A\|_{2,\tau}$,
\begin{align}
 \chi^{-1/4}A\chi^{-1/4}
 &=R_\chi(\tau^{-1/4}A\tau^{-1/4})R_\chi^\dagger,
\end{align}
first on the common finite-particle core (the finite linear span of Fock states) and then by Hilbert--Schmidt
closure. The ideal property of the Hilbert--Schmidt class now gives
\begin{equation}
 \|\chi^{-1/4}A\chi^{-1/4}\|_2
 \le c^{-1/2}\|\tau^{-1/4}A\tau^{-1/4}\|_2.
 \label{eq:weighted-domination-bound}
\end{equation}
This proves the stated uniform control without an interchange of compressed
inverse operators.
\end{proof}

\begin{lemma}[Uniform weighted displacement expansion]
\label{lem:weighted-displacement}
There is an $\varepsilon_0>0$ such that the maps
$\varepsilon\mapsto\rho_s(\varepsilon)$ are four times differentiable in
$\norm{\cdot}_{2,\tau}$ on $[-\varepsilon_0,\varepsilon_0]$, uniformly in the
four QPSK phases. Consequently,
\begin{align}
 \sigma_E&=\tau+\varepsilon^2\dot\tau_W+O_{2,\tau}(\varepsilon^4),\label{eq:appendix-sigma-exp}\\
 \tau_{N_W+\varepsilon^2}&=\tau+\varepsilon^2\dot\tau_W+O_{2,\tau}(\varepsilon^4),
 \label{eq:appendix-thermal-exp}
\end{align}
where $\dot\tau_W:=\partial_N\tau_N|_{N=N_W}$. In addition, with
$\omega_E=\tau_{N_W+a_WE}$,
\begin{equation}
 \norm{\omega_E^{-1/4}(\sigma_E-\omega_E)\omega_E^{-1/4}}_2
 \le C E^2
 \label{eq:appendix-moving-weight}
\end{equation}
for all sufficiently small $E$ and a finite constant $C$.
\end{lemma}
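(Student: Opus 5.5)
The plan is to work in the Fock basis, where the $\tau$-weighted norm becomes an explicit weighted $\ell^2$ norm on matrix elements. For an operator $A$ with matrix elements $A_{mn}$,
\[
\norm{A}_{2,\tau}^2=\sum_{m,n}\frac{|A_{mn}|^2}{\sqrt{p_mp_n}},
\]
and $p_n^{-1/2}$ grows like $r^{-n/2}$. The strategy is to expand $\rho_s(\varepsilon)=e^{\varepsilon\operatorname{ad}_{K_s}}(\tau)$ as a power series in $\varepsilon$ and prove that this series converges absolutely in $\norm{\cdot}_{2,\tau}$ for $|\varepsilon|\le\varepsilon_0$, uniformly in $s$. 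That gives analyticity, hence four-fold differentiability with uniform Taylor remainders.

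\textbf{Step 1: bound the series terms.} The term $\operatorname{ad}_{K_s}^k(\tau)$ is a finite combination of words in $a,a^\dagger$ of length $k$ multiplying $\tau$ on either side. Its matrix elements are supported on $|m-n|\le k$ and are bounded by $2^k\prod(\text{factors }\sqrt{n+j})\,p_{\min}$. Rather than bounding these combinatorially, I would use the closed form of displaced thermal states. In the coherent-state $P$ representation,
\[
\rho_s(\varepsilon)=\int P_\tau(z-\varepsilon s)\ket z\bra z\,d^2z .
\]
Equivalently, the matrix elements of $\rho_s(\varepsilon)$ are $\tau$-weighted Laguerre expressions that are entire in $\varepsilon$. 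The key estimate I need is
\[
\norm{\operatorname{ad}_{K_s}^k(\tau)}_{2,\tau}\le C\,k!\,R^{-k}
\]
for some $R>0$. I would obtain it by Cauchy estimates: extend $\varepsilon$ to a complex disk and bound $\norm{e^{\varepsilon K_s}\tau e^{-\varepsilon K_s}}_{2,\tau}$ uniformly there. Writing $\tau=(1-r)r^{a^\dagger a}$, I would use the Gaussian (Bargmann) kernel: the sandwich $\tau^{-1/4}\rho_s(\varepsilon)\tau^{-1/4}$ is, up to normalization, a Gaussian operator with kernel $r^{1/2-1/4-1/4}$-type factors. Its Hilbert--Schmidt norm is an explicit Gaussian integral, finite as long as the complex displacement is small relative to $\log(1/r)$. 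That finiteness on a complex disk gives analyticity in $\norm{\cdot}_{2,\tau}$ on a real neighborhood.

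\textbf{Step 2: the expansions \eqref{eq:appendix-sigma-exp}--\eqref{eq:appendix-thermal-exp}.} Averaging over the four phases kills the $\varepsilon^1$ and $\varepsilon^3$ coefficients, since $\frac14\sum_s s^j(s^*)^{k-j}=0$ unless $2j-k\equiv0\pmod4$. The $\varepsilon^2$ coefficient is $\frac12\cdot\frac14\sum_s\operatorname{ad}_{K_s}^2(\tau)=\operatorname{ad}$-Laplacian of $\tau$, which equals $\dot\tau_W$ by the heat-semigroup identity (Gaussian displacement averaging increases $N$). The thermal expansion follows because $N\mapsto\tau_N$ has Fock elements $N^k/(N+1)^{k+1}$, which are analytic. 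Their weighted sum converges for $N$ near $N_W$ because $(p_n(N)/p_n(N_W)^{1/2})^2$ is summable while $N/(N+1)<\sqrt r$-adjusted ratios stay below one. Hence the remainder is $O_{2,\tau}(\varepsilon^4)$.

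\textbf{Step 3: the moving-weight bound \eqref{eq:appendix-moving-weight}.} Subtracting the two expansions gives $\sigma_E-\omega_E=O_{2,\tau}(\varepsilon^4)=O_{2,\tau}(E^2)$. Then \eqref{eq:omega-lower-domination} and \eqref{eq:weighted-domination-bound} from Lemma~\ref{lem:thermal-domination} convert the $\tau$-weighted norm into the $\omega_E$-weighted norm with constant $c^{-1/2}$ bounded near zero.

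The main obstacle is Step 1: obtaining uniform complex-disk control of the weighted Hilbert--Schmidt norm. The weight $\tau^{-1/4}$ on each side exactly halves the available Gaussian decay, so the margin is only the strictness $r<1$. If the Gaussian-kernel computation becomes messy, I would fall back on a direct matrix-element bound. There, $|(\rho_s(\varepsilon))_{mn}|\le C\,r^{(m+n)/2}e^{c|\varepsilon|\sqrt{m+n}}$, and summing against $r^{-(m+n)/2}\cdot r^{(m+n)/2}$-type weights should converge for small $|\varepsilon|$.
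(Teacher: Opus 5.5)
Your proposal is correct and follows essentially the paper's route: the paper likewise uses the Glauber--Sudarshan representation to bound the Fock matrix elements of $\rho_s(\varepsilon)$ and of its first four $\varepsilon$-derivatives by $C_jP_j(m+n)r_1^{(m+n)/2}$ with $r<r_1<\sqrt r$, which is summable against the weight $(p_mp_n)^{-1/2}$; it then removes odd orders by QPSK averaging, identifies the $\varepsilon^2$ coefficient with $\partial_N\tau_N$, and passes to the $\omega_E$-weighted norm via Lemma~\ref{lem:thermal-domination}. The differences are minor: you obtain analyticity from Cauchy estimates on a complex disk where the paper bounds the four derivatives directly and uses dominated convergence, and you identify the second-order term through the Gaussian heat-semigroup identity (valid, since that coefficient depends only on $\E s^2=0$ and $\E|s|^2=1$) where the paper checks $np_{n-1}+(n+1)p_{n+1}-(2n+1)p_n=\partial_Np_n$ on the diagonal.
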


\begin{proof}
Because $N_W>0$, the displaced thermal state has the positive
Glauber--Sudarshan representation
\begin{equation}
 \rho_s(\varepsilon)=\int_{\mathbb C}\frac{d^2z}{\pi N_W}
 e^{-|z-\varepsilon s|^2/N_W}\ket z\bra z.
 \label{eq:appendix-p-representation}
\end{equation}
Using $\langle m|z\rangle=e^{-|z|^2/2}z^m/\sqrt{m!}$ gives
\begin{align}
 \bra m\rho_s(\varepsilon)\ket n
 &=\frac{1}{\pi N_W\sqrt{m!n!}}\nonumber\\
 &\quad\times\int_{\mathbb C}e^{-|z-\varepsilon s|^2/N_W-|z|^2}
 z^m(z^*)^n\,d^2z.
 \label{eq:appendix-fock-integral}
\end{align}
Fix $r_1$ with $r<r_1<\sqrt r$. Since
$1/r_1<1/r$, choose $c_1$ such that
$1/r_1<c_1<1/r$ and then decrease $\varepsilon_0$ so that completion of
the square bounds every derivative of order $j\le4$ in
\eqref{eq:appendix-fock-integral} by a fixed polynomial in $|z|$ times
$e^{-c_1|z|^2}$. Radial integration and Stirling's bounds give a constant
$C_j$ and a polynomial $P_j$ for which
\begin{equation}
 \left|\partial_\varepsilon^j
 \bra m\rho_s(\varepsilon)\ket n\right|
 \le C_j P_j(m+n) r_1^{(m+n)/2},
 \qquad |\varepsilon|\le\varepsilon_0.
 \label{eq:appendix-matrix-majorant}
\end{equation}
Indeed, the radial integral is proportional to
$c_1^{-(m+n+j+2)/2}\Gamma[(m+n+j+2)/2]/\sqrt{m!n!}$, and the gamma-to-factorial
ratio is polynomially bounded uniformly in $m,n$. The estimate is uniform in
$s$. After multiplication by $p_m^{-1/4}p_n^{-1/4}$ and squaring, the resulting
double series is summable because $r_1/\sqrt r<1$. Dominated convergence
therefore permits termwise differentiation through fourth order in the
weighted Hilbert--Schmidt norm.

Averaging over QPSK gives $\E s=\E s^2=0$ and $\E|s|^2=1$, hence
\begin{align}
 \E_s\operatorname{ad}_{K_s}(\tau)&=0,\\
 \frac12\E_s\operatorname{ad}_{K_s}^2(\tau)
 &=a^\dagger\tau a+a\tau a^\dagger
 -\frac12\{2a^\dagger a+1,\tau\}.
 \label{eq:additive-noise-generator}
\end{align}
For the diagonal thermal probabilities, with the convention $p_{-1}=0$,
the $n$th diagonal entry of \eqref{eq:additive-noise-generator} is
\begin{align}
 &n p_{n-1}+(n+1)p_{n+1}-(2n+1)p_n\nonumber\\
 &\qquad=p_n\frac{n-N_W}{N_W(N_W+1)}
 =\left.\partial_Np_n(N)\right|_{N=N_W}.
\end{align}
Thus
$\frac12\E_s\operatorname{ad}_{K_s}^2(\tau)
=\partial_N\tau_N|_{N=N_W}$. QPSK phase averaging removes all odd powers of
$\varepsilon$, which proves \eqref{eq:appendix-sigma-exp}. Ordinary Taylor
expansion of the thermal family proves \eqref{eq:appendix-thermal-exp}.

Equations \eqref{eq:appendix-sigma-exp} and
\eqref{eq:appendix-thermal-exp} give
$\|\sigma_E-\omega_E\|_{2,\tau}=O(\varepsilon^4)$.
The thermal domination \eqref{eq:omega-lower-domination} and
\eqref{eq:weighted-domination-bound} therefore imply directly that
\begin{equation}
 \|\omega_E^{-1/4}(\sigma_E-\omega_E)\omega_E^{-1/4}\|_2
 =O(\varepsilon^4)=O(E^2),
\end{equation}
which proves \eqref{eq:appendix-moving-weight}.

\end{proof}

The weighted expansion and domination estimates above are the only infinite-dimensional regularity inputs used in the deferred QPSK-cost and collision proofs.

\section{Deferred Proofs for the Finite-Key Construction}
\label{app:finite-key-proofs}

This appendix proves the QPSK approximation, the collision bound, and the final finite-key construction in the order in which they are used in the main text.

\subsection{Proof of Lemma~\ref{lem:qpsk-cost}}
\label{app:proof-qpsk-cost}

We first compare the QPSK average state with the thermal state of the same mean energy.

\begin{proof}
The QPSK average has mean photon number $N_W+a_WE$. Since
$\log\tau_N=-\log(N+1)+\hat n\log[N/(N+1)]$ is affine in the number
operator $\hat n=a^\dagger a$, the states $\sigma_E$ and
$\tau_{N_W+a_WE}$ have the same expectation of
$\log\tau_{N_W+a_WE}-\log\tau_{N_W}$. Therefore the exact projection
identity
\begin{equation}
 D(\sigma_E\|\tau_{N_W})
 =D(\sigma_E\|\tau_{N_W+a_WE})
 +D(\tau_{N_W+a_WE}\|\tau_{N_W})
 \label{eq:qpsk-projection}
\end{equation}
holds. QPSK cancels the odd displacement orders and matches circular Gaussian modulation through the order-$E$ state perturbation. In the thermal weighted Hilbert--Schmidt norm,
\begin{equation}
 \sigma_E-\tau_{N_W+a_WE}=O(E^2).
 \label{eq:qpsk-state-difference}
\end{equation}
Appendix~\ref{app:qpsk-regularity} proves the corresponding moving-reference bound
\begin{equation}
 \|\tau_{N_W+a_WE}^{-1/4}
 (\sigma_E-\tau_{N_W+a_WE})
 \tau_{N_W+a_WE}^{-1/4}\|_2=O(E^2).
\end{equation}
Monotonicity of the sandwiched R\'enyi divergence in its order then gives
\begin{equation}
 D(\sigma_E\|\tau_{N_W+a_WE})=O(E^4).
\end{equation}
The second term in \eqref{eq:qpsk-projection} is the thermal cost $g(E)=c_WE^2+O(E^3)$, proving \eqref{eq:qpsk-cost-expansion}. Since $c_W>0$, the same expansion gives $q(E)>0$ for all sufficiently small $E>0$. Details of the weighted expansion are recorded in Appendix~\ref{app:qpsk-regularity}.
\end{proof}

\subsection{Proof of Lemma~\ref{lem:collision-expansion}}
\label{app:proof-collision-expansion}

We next compute the first nonzero term of the fixed-reference collision quantity.

\begin{proof}
Put $\varepsilon=\sqrt{a_WE}$. The weighted expansion in
Appendix~\ref{app:qpsk-regularity} gives, uniformly over the four phases,
\begin{align}
 \rho_r(E)&=\tau_{N_W}+\varepsilon A_r+O_{2,\tau}(\varepsilon^2),\nonumber\\
 \sigma_E&=\tau_{N_W}+\varepsilon^2\dot\tau_W+O_{2,\tau}(\varepsilon^4),
\end{align}
where $K_r:=i^r a^\dagger-i^{-r}a$, $A_r=[K_r,\tau_{N_W}]$, $\Tr A_r=0$, and $\Tr\dot\tau_W=0$. The order-$\varepsilon^2$ coefficient of each trace-preserving state expansion also has zero trace. After sandwiching by the fixed thermal weight and expanding the square, the linear and second-order cross terms therefore vanish. The QPSK average is invariant under $\varepsilon\mapsto-\varepsilon$, so all remaining odd powers cancel. It follows that
\begin{equation}
 Q_{2,\tau}(E)=1+\frac{\varepsilon^2}{4}\sum_{r=0}^3
 \norm{\tau_{N_W}^{-1/4}A_r\tau_{N_W}^{-1/4}}_2^2
 +O(\varepsilon^4).
\end{equation}
All four norms are equal. If
$p_n=N_W^n/(N_W+1)^{n+1}$, then, for a real displacement,
\begin{align}
 \norm{\tau_{N_W}^{-1/4}[a^\dagger-a,\tau_{N_W}]
 \tau_{N_W}^{-1/4}}_2^2
 &=2\sum_{n\ge0}(n+1)
 \frac{(p_n-p_{n+1})^2}{\sqrt{p_np_{n+1}}}\nonumber\\
 &=\frac{2}{\sqrt{N_W(N_W+1)}}.
\end{align}
This proves \eqref{eq:collision-expansion}. For the average state, the
order-$\varepsilon^2$ cross term is $2\varepsilon^2\Tr\dot\tau_W=0$, which gives
\eqref{eq:average-collision-expansion}. Equation
\eqref{eq:collision-exponential} follows by continuity. Finally, expanding
the centered square and using $\E_r\rho_r(E)=\sigma_E$ gives the exact
variance identity \eqref{eq:qpsk-collision-variance}.
\end{proof}

\subsection{Proof of Lemma~\ref{lem:collision-soft-covering}}
\label{app:proof-soft-covering}

This step turns the collision quantity into a relative-entropy bound for a finite empirical mixture.

\begin{proof}
Write $\Delta=\widehat\sigma-\sigma$. Since $\Tr\Delta=0$, monotonicity of
the sandwiched R\'enyi divergence in its order and $\log(1+x)\le x$ give
\begin{align}
 D(\widehat\sigma\|\sigma)
 &\le\log\!\left(1+
 \norm{\sigma^{-1/4}\Delta\sigma^{-1/4}}_2^2\right)\nonumber\\
 &\le\norm{\sigma^{-1/4}\Delta\sigma^{-1/4}}_2^2.
\end{align}
The domination $\sigma\succeq c\tau$ and
\eqref{eq:weighted-domination-bound} imply
\begin{equation}
 D(\widehat\sigma\|\sigma)
 \le c^{-1}\norm{\tau^{-1/4}\Delta\tau^{-1/4}}_2^2.
\end{equation}
Expanding the centered square and using independence gives
\begin{equation}
 \E_{\cC}\norm{\tau^{-1/4}\Delta\tau^{-1/4}}_2^2
 =\frac{Q_{2,\tau}-R_{2,\tau}}{J}.
\end{equation}
This proves \eqref{eq:collision-soft-covering}. Tensor products give the
product statement. All quantities used here are finite in the QPSK
application by Lemmas~\ref{lem:thermal-domination} and
\ref{lem:collision-expansion}; the infinite-dimensional sandwiched R\'enyi
inequality is used only under this finiteness condition \cite{LiGaoHao2021}.
\end{proof}

\subsection{Proof of Theorem~\ref{thm:finite-key}}
\label{app:proof-finite-key}

Finally, we choose the parameters and one deterministic family of public codebooks that is good for both Willie and Bob.

\begin{proof}
Let $c_\gamma=\gamma^2/[4(1+\gamma)]$. First choose
$H>\max\{2,c_\gamma^{-1}\}$ and then choose $A$ so large that
\begin{equation}
 (1-\xi)\kappa\sqrt{\frac{\delta A}{c_W}}
 >(1+\gamma)H.
 \label{eq:finite-key-constant-order}
\end{equation}
Set $\theta_K=1/\log K$. For all sufficiently large $K$, so that
$2\theta_K<1$, use the explicit energy
\begin{equation}
 E_K:=\sqrt{\frac{(1-2\theta_K)\delta}{c_WK L_K}}.
 \label{eq:finite-key-energy}
\end{equation}
Because $E_K=\Theta((K\log K)^{-1})=o(\theta_K)$, Lemma~\ref{lem:qpsk-cost} gives
\begin{align}
 L_Kq(E_K)
 &= (1-2\theta_K)\frac{\delta}{K}[1+O(E_K)]
 \le (1-\theta_K)\frac{\delta}{K},
 \label{eq:finite-key-q-budget}\\
 L_KE_K&=\sqrt{\frac{(1-2\theta_K)\delta L_K}{c_WK}}
 =\Theta(\log K).
 \label{eq:finite-key-energy-asymptotic}
\end{align}
For each block, generate $J_K$ independent public length-$B_K$ QPSK sequences. A secret uniform index selects one sequence. For a fixed block and onset, averaging over the subkey produces an empirical mixture on the active length-$L_K$ substring. Lemma~\ref{lem:thermal-domination} gives
$\sigma_{E_K}\succeq e^{-a_WE_K/N_W}\tau_{N_W}$. Hence
Lemmas~\ref{lem:collision-expansion} and
\ref{lem:collision-soft-covering} imply
\begin{align}
 \E D(\widehat\Sigma_{b,\nu}\|\Sigma_{E_K})
 &\le\frac{\exp[(d_2+a_W/N_W)L_KE_K]}{J_K}\nonumber\\
 &=K^{C_0-r_{\rm key}+o(1)}
 \label{eq:finite-key-soft-covering-expectation}
\end{align}
for a finite constant $C_0$ (which may depend on the now-fixed design constants, including $A$, but not on $K$). Choose
$r_{\rm key}>C_0+4$. The explicit scaling in \eqref{eq:finite-key-scaling} gives
\begin{equation}
 N_K(L_{{\rm start},K}+1)=O(K^2(\log K)^2).
 \label{eq:finite-key-pair-count}
\end{equation}
Markov's inequality, with target $\theta_K\delta/K=\delta/(K\log K)$, and a union bound show that the probability that \emph{any} block--onset pair violates the target is at most
\begin{equation}
 K^{C_0-r_{\rm key}+3+o(1)}.
 \label{eq:finite-key-soft-failure}
\end{equation}
With the preceding choice of $r_{\rm key}$, this probability tends to zero and yields deterministic public codebooks satisfying, simultaneously for every block and onset,
\begin{equation}
 D(\widehat\Sigma_{b,\nu}\|\Sigma_{E_K})
 \le\theta_K\frac{\delta}{K}.
 \label{eq:finite-key-soft-covering-uniform}
\end{equation}
The active-substring Pythagorean identity gives
\begin{equation}
D(\widehat\Sigma_{b,\nu}\|\tau_{N_W}^{\otimes L_K})
\le \theta_K\frac{\delta}{K}+L_Kq(E_K)\le\frac{\delta}{K}.
\end{equation}
Tensoring the unchanged idle positions does not alter this divergence. Independent block subkeys make the message state a tensor product across blocks, so additivity yields \eqref{eq:finite-key-covertness} for every message of weight at most $K$.

For reliability, consider the random generation of the public QPSK codebooks before they are fixed. For a uniformly distributed QPSK symbol $S\in\{1,i,-1,-i\}$, the induced slope takes finitely many bounded values and obeys the exact identity
\begin{equation}
\E_{\rm QPSK} Z
=\frac{\eta}{2}\,\E_{\rm QPSK}[X^\top G X]
=\frac{\eta}{2}\Tr G
=\kappa,
\label{eq:qpsk-mean-slope}
\end{equation}
because the four phases have second moment $\E_{\rm QPSK}[XX^\top]=I_2$. Thus, for a fixed block, codeword index, and onset, Hoeffding's inequality with probability taken over the random public-codebook generation gives
\begin{equation}
\Prb_{\cC}\!\left[\frac1{L_K}\sum_{t=1}^{L_K}Z_t<(1-\xi)\kappa\right]
\le e^{-c_\xi L_K}
\label{eq:finite-key-hoeffding}
\end{equation}
for some $c_\xi>0$ (and the event is empty if the slope is deterministic). The total number of codeword--block--onset triples is at most
$N_KJ_K(L_{{\rm start},K}+1)=K^{r_{\rm key}+2+o(1)}$, whereas
$e^{-c_\xi L_K}=e^{-\Theta(K\log^2K)}$. A union bound over these triples therefore shows that the random-codebook probability of any drift violation tends to zero. Intersecting this event with the preceding soft-covering event yields, for all sufficiently large $K$, at least one deterministic family of public codebooks that is simultaneously good for Willie and has the required drift for every selected sequence and onset at Bob. After this deterministic family is fixed, the drift property is no longer probabilistic.

Conditional on any selected sequence, Bob runs the same block-reset CUSUM with the known-symbol likelihood increments $\ell_t(E_K)$. The false-alarm martingale argument of Lemma~\ref{lem:false-alarm} is parameter independent and is therefore unchanged. The deterministic drift lower bound removes the term $e^{-L_KI_-(\xi)}$ from Lemma~\ref{lem:miss}; with $c_\gamma=\gamma^2/[4(1+\gamma)]$,
\begin{equation}
P_{\rm e}^{\max}
\le N_KB_Ke^{-h_K}+K e^{-c_\gamma h_K}.
\label{eq:finite-key-explicit-error}
\end{equation}
The choices in \eqref{eq:finite-key-constant-order} and
\eqref{eq:finite-key-energy-asymptotic} ensure
$(1-\xi)\kappa L_KE_K\ge(1+\gamma)h_K$ for all sufficiently large $K$.
The selected value of $H$ also makes both terms in
\eqref{eq:finite-key-explicit-error} tend to zero. Finally, \eqref{eq:finite-key-length} and \eqref{eq:finite-key-payload} follow from \eqref{eq:finite-key-scaling}, \eqref{eq:finite-key-codebook-size}, and the weight-constrained pattern count.
\end{proof}

\emergencystretch=1em
\bibliographystyle{IEEEtran}
\bibliography{refs}

\end{document}